\newtheorem{definition}{Definition}
\newtheorem{remark}{Remark}
\newtheorem{theorem}{Theorem}
\newtheorem{lemma}{Lemma}
\newtheorem{claim}{Claim}
\newtheorem{coro}{Corollary}
\newtheorem{proofTh} {Proof.}
\newcommand{\CVD} {\hspace*{\fill}$\Box$}
\newenvironment{proof}{\begin{proofTh}\em}{\CVD\end{proofTh}}
\def\cadre{$$\vcenter\bgroup\advance\hsize by -2em\noindent
             \refstepcounter{equation}(\theequation)~\ignorespaces}
\def\endcadre{\egroup\eqno$$\global\@ignoretrue\par\vspace{-2mm}}
\def\ncadre{$$\vcenter\bgroup\advance\hsize by -2em\noindent
             \ignorespaces}
\def\endncadre{\egroup\eqno$$\global\@ignoretrue}
\newcommand{\comment}[1]{}
\newcommand{\mybreak} {\par\vspace{2mm}\noindent}
\newcommand{\mv}[1] {\mathsf{#1}}
\def\imod#1{\allowbreak\mkern10mu({\operator@font mod}\,\,#1)}
\newcommand{\A} {\mathsf{A}}
\newcommand{\D} {\mathsf{D}}
\newcounter{progcount}
\newcounter{linecount}[progcount]
\newcommand{\dist}        {\textrm{dist}}
\title{A tight relation between series--parallel graphs and Bipartite Distance Hereditary graphs.}
\author{Nicola Apollonio\footnote{Istituto per le Applicazioni del
Calcolo,~M.~Picone,~v.~dei Taurini 19, 00185 Roma, Italy.
\texttt{nicola.apollonio@cnr.it}.\ Supported by the Italian National Research Council (C.N.R.) under national
research project ``MATHTECH''.} \and {Massimiliano
Caramia\footnote{Dipartimento di Ingegneria dell'Impresa,
Universit\`a di Roma ``Tor Vergata'', v.\ del Politecnico 1, 00133
Roma, Italy. \texttt{caramia@disp.uniroma2.it}}}
\and
{Paolo Giulio Franciosa\footnote{Dipartimento di Scienze Statistiche,
Sapienza Universit\`a di Roma, p.le Aldo Moro 5, 00185
Roma, Italy. \texttt{paolo.franciosa@uniroma1.it}}}
\and {Jean-Fran\c{c}ois Mascari\footnote{Istituto per le Applicazioni del
Calcolo, M. Picone, v. dei Taurini 19, 00185 Roma, Italy.
\texttt{g.mascari@iac.cnr.it}.\ Supported by the Italian National Research Council (C.N.R.) under national
research project ``MATHTECH''.}}}
\date{}
\begin{document}

\maketitle

\begin{abstract}
Bandelt and Mulder's structural characterization of Bipartite
Distance Hereditary graphs asserts that such graphs can be built
inductively starting from a single vertex and by repeatedly adding
either pending vertices or twins (i.e., vertices with the same
neighborhood as an existing one). Dirac and Duffin's structural characterization
of 2--connected series--parallel graphs asserts that such graphs
can be built inductively starting from a single edge by adding
either edges in series or in parallel. In this paper we prove that the two constructions are the same construction when bipartite graphs are viewed as the fundamental graphs of a graphic matroid. We then apply the result to re-prove known results concerning bipartite distance hereditary graphs and series--parallel graphs, to characterize self--dual outer-planar graphs and, finally, to provide a new class of polynomially-solvable instances for the integer multi commodity flow of maximum value.
\mybreak
\textbf{Keywords}: distance Hereditary graphs, series-parallel graphs, $GF(2)$-pivoting, fundamental graphs, outerplanar graphs.
\end{abstract}

\section{Introduction}\label{sec:introduction}
\emph{Distance Hereditary} graphs are graphs with the
\emph{isometric property}, i.e., the distance function of a
distance hereditary graph is inherited by its connected induced
subgraphs. This important class of graphs was introduced and
thoroughly investigated by Howorka in \cite{how1,how2}. A Bipartite Distance
Hereditary (BDH for shortness) graph is a distance hereditary graph which is bipartite.
Such graphs can be constructed starting from a single vertex by means of the following two operations~\cite{bm}:
\begin{enumerate}[(BDH1)]
\item\label{com:pendant} adding a
\emph{pending vertex}, namely a vertex adjacent exactly to an existing vertex; 
\item\label{com:twin} adding a \emph{twin} of an existing vertex, namely adding a vertex and making it adjacent to all the neighbors of an existing vertex.  
\end{enumerate}
Taken together the two operations above will be referred to as Bandelt and Mulder's construction.
\mybreak
A graph is \emph{series--parallel} \cite{bogikano}, if it does not contain the complete graph $K_4$ as a minor; equivalently, if it does not contain a subdivision of $K_4$. This is Dirac's~\cite{dirac} and Duffin's~\cite{duffin} characterization by forbidden minors. Since both $K_5$ and $K_{3,3}$ contain a subdivision of $K_4$, by Kuratowski's Theorem any series--parallel graph is planar. Like BDH graphs, series--parallel graphs admit a constructive characterization which justifies their name: a connected graph is series--parallel if it can be constructed starting from a single edge by means of the following two operations:
\begin{enumerate}[{\rm (SP1)}]
\item\label{com:parallel} adding an edge with the same end-vertices as an existing
one (\emph{parallel extension})
\item\label{com:series} subdividing an existing edge by the insertion of a new
vertex (\emph{series
extension}.)
\end{enumerate}
Taken together the two operations above will be referred to as Duffin's construction---here and throughout the rest of the paper we consider only 2--connected series--parallel graphs which can be therefore obtained starting from a pair of a parallel edges rather than a single edge--.
\begin{figure}[H]
	\begin{center}
		%\begin{figbox}
		\includegraphics[width=10cm]{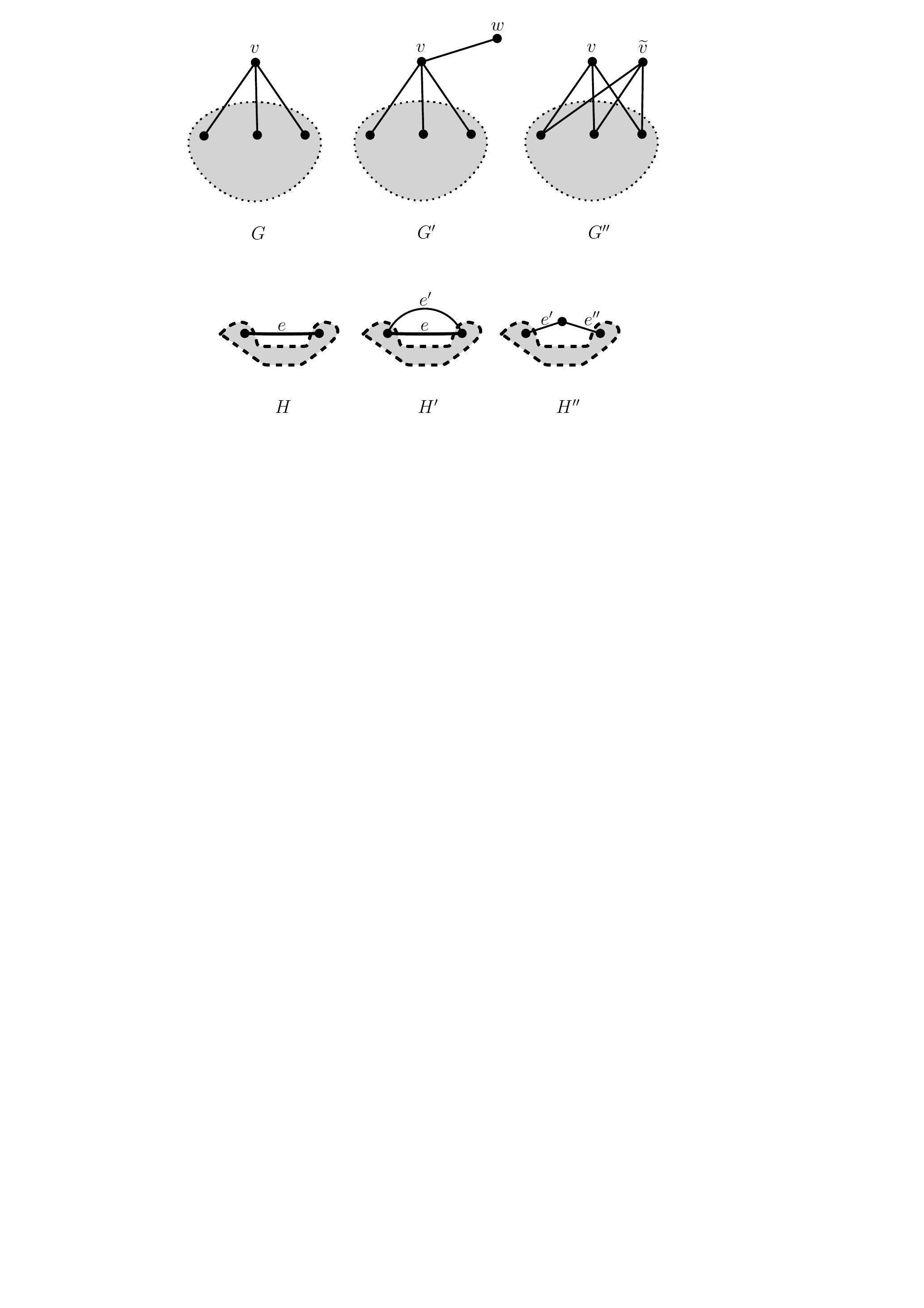}
		%\end{figbox}
	\end{center}
	\caption{Bandelt and Mulder's and Duffin's constructions: graph $G'$ and $G''$ are the results of adding a pending vertex $w$ at $v$ or a twin $\widetilde{v}$ to vertex $v$, respectively. Graph $H'$ and $H''$ are the results of adding an edge $e'$ parallel to $e$ or subdividing edge $e$, respectively.}
	\protect\label{fi:bamudu}
	\noindent\hrulefill%
\end{figure}

The close resemblance between operations (BDH\ref{com:pendant})$\div$(BDH\ref{com:twin}) and operations (SP\ref{com:parallel})$\div$(SP\ref{com:series}) is apparent (see Figure~\ref{fi:bamudu} for an illustration of both the construction).\ It becomes even more apparent after our Theorem~\ref{thm:main}, which establishes that the constructions defining BDH and series--parallel graphs, namely, Bandelt and Mulder's construction and Duffin's construction, are essentially the same construction in a sense made precise after Section~\ref{sec:prel}. 
\mybreak
The intimate relationship between BDH graphs and series--parallel graphs was already observed by Ellis-Monhagan and Sarmiento in \cite{mosa}. The authors, motivated by the aim of finding polynomially
computable classes of instances for the
\emph{vertex--nullity interlace polynomial} introduced by Arratia,
Bollob\'{a}s and Sorkin in \cite{abs}, under the name of
\emph{interlace polynomial}, related the two classes of graphs via a topological construction involving the so called \emph{medial graph} of a planar graph. By further relying on the relationships between the \emph{Martin polynomial} and the \emph{symmetric Tutte polynomial} of a planar graph, they proved a relation between the the \emph{symmetric Tutte polynomial} of a planar graph $H$, namely $t(H;x,x)$---recall that the Tutte polynomial is a two variable polynomial--and the interlace polynomial $q(G;x)$ of a graph $G$ derived from the medial graph of $G$ (Theorem~\ref{thm:mosa1}).\ Such a relation led to the following three interesting consequences:
\mybreak
\begin{itemize}
\item[--] the \#P--completeness of the interlace polynomial of Arratia,
Bollob\'{a}s and Sorkin~\cite{abs} in the general case;
\mybreak
\item[--] a characterization of BDH graphs via the so-called \emph{$\gamma$ invariant}, (i.e., the coefficient of the linear term of the interlace polynomial);
\mybreak
\item[--] an effective proof that the interlace polynomial is polynomial-time computable within BDH graphs.
\end{itemize}
In view of a result due to Aigner and van der Holst (Theorem~\ref{thm:aighol}), the latter two consequences in the list above are straightforward consequences of Theorem~\ref{thm:main} (see Section~\ref{sec:stateart}).
\mybreak
Besides the new direct proofs of these results, Theorem~\ref{thm:main} has some more applications.
\mybreak
\textbf{1.} An easy proof that the class of BDH graphs form a class of graphs closed under \emph{edge--pivoting} (roughly, $GF(2)$-pivoting on the adjacency matrix). In other words, by pivoting (over $GF(2)$) on a nonzero entry of the \emph{reduced adjacency} matrix of a BDH graph, yields the \emph{reduced adjacency} matrix of another BDH graph.
\mybreak
\textbf{2.}~Syslo's characterization's of series--parallel graphs in
terms of \emph{Depth First Search} (DFS) \emph{trees}: the characterization asserts that a connected graph $H$ is series--parallel if and only if every spanning tree of $H$ is a DFS-tree of one of its 2--\emph{isomorphic} copies. In other words, up to 2--\emph{isomorphism}, series--parallel graphs have the characteristic property that their spanning trees can be oriented to become \emph{arborescences} so that the corresponding fundamental cycles become directed circuits (cycles whose arcs are oriented in the same way).\ Recall that an \emph{arborescence} is a directed tree with a single special node distinguished as the \emph{root} such that,
for each other vertex, there is a directed path from the root to that vertex.
\mybreak
\textbf{3.}~A characterization in terms of forbidden induced subgraphs of those BDH graphs whose reduced adjacency matrix generates the cycle matroid of a \emph{self--dual outer--planar graph}, namely, an outer--planar graph whose plane dual is also outer--planar. Remark that outer--planar graphs form a subclass of series--parallel graphs.\ The characterization asserts that such BDH graphs are precisely those that can be transformed into a \emph{bipartite chain graph} by a sequence of edge--pivoting. 
\mybreak
\textbf{4.}~New polynomially solvable instances for the problem of finding \emph{integer multi commodity flow} of maximum value: if the demand graph of a series--parallel graph is a co--tree, then the maximum value of a multi commodity flow equals the minimum value of a \emph{multi-terminal cut}; furthermore both a maximizing flow and a minimizing cut can be found in strongly polynomial time.
\mybreak
\textbf{Organization of the paper}
The rest of the paper is organized as follows: in Section~\ref{sec:prel} we give the basic notions used throughout the rest of the paper. In Section~\ref{sec:proofs} we prove our main result (Theorem~\ref{thm:main}) and discuss how it fits within \emph{circle graphs}. One more short proof is given in the appendix.\ Theorem~\ref{thm:main} is then applied in Section~\ref{sec:application}: in Section~\ref{sec:piv}, we deduce that BDH graphs is a nontrivial subclass of bipartite circle graphs closed under edge--pivoting. A direct proof of this fact would have been far from being technically trivial (compare with the proof of Theorem~\ref{thm:arrowt2closed}); in Section~\ref{sec:stateart}, we re-prove the previously mentioned couple of results in~\cite{mosa};\ in Section~\ref{sec:dfs} we re-prove Syslo's characterization of series--parallel graphs and give a sort of hierarchy of characterizations of 2--connected planar graphs by the properties of their spanning trees;\ in Section~\ref{sec:treeorbit}, we characterize self--dual outer--planar graphs and their fundamental graphs;\ finally in Section~\ref{sec:flow}, we give an application to Multi commodity--Flows in series--parallel graphs. 
\begin{figure}
    \begin{center}
        %\begin{figbox}
             \includegraphics[width=13cm]{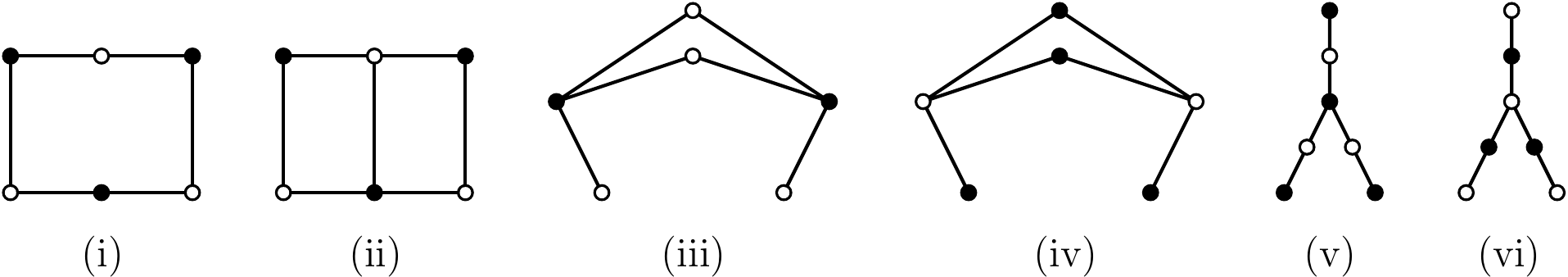}
        %\end{figbox}
    \end{center}
    \caption{Special graphs used in the paper: (i) is the hole $C_6$, (ii) is the domino $\boxminus$,\ (iii) and (iv) are copies of arrows while (v) and (vi) are copies of $T_2$; notice that for arrows and $T_2$  there is no automorphism that invertes the colors, namely, if $\sigma$ is an automorphism, then there is a vertex such that $\sigma(v)$ is in the same color class of $v$.}
    \protect\label{fi:special}
\noindent\hrulefill%
\end{figure}
\section{Preliminaries}\label{sec:prel}
For a graph $G$ the edge $e$ with endvertices $x$ and $y$ will be denoted by $xy$. The graph induced by $U\subseteq V(G)$ is
denoted by $G[U]$. The graph $G[U]$ will be also referred to as
the graph obtained by \emph{deleting} the vertices in
$V(G)\setminus U$. If $F\subseteq E(G)$, the graph $G-F$ is the graph $(V(G),E(G)-F)$.\ A \emph{cut--edge} of a graph $G$ is an edge whose removal disconnects the graph.

A \emph{digon} is a pair of parallel edges, namely a cycle with two edges. A~\emph{hole} in a
bipartite graph is an induced subgraph isomorphic to $C_n$ for
some $n\geq 6$. A \emph{domino} is a subgraph isomorphic to the
graph obtained from $C_6$ by joining two antipodal vertices by a
chord. The domino is denoted by $\boxminus$. An \emph{arrow} is the graph obtained from $C_4$ by adding two pending vertices adjacent to two vertices of the same color, while $T_2$ is the subdivision of the claw $K_{1,3}$ obtained by inserting a new vertex into every edge (see Figure~\ref{fi:special}). Graph $2K_2$ is the complement of $C_4$, namely it is a graph consisting of two non-adjacent edges.\ A bipartite graph with color classes $A$ and $B$ is a \emph{bipartite chain graph} if the members of both families $(N_G(u)\ | \ U\in A)$ and $(N_G(v)\ | \ v\in B)$ are inclusion-wise comparable. Equivalently, $G$ is a bipartite chain graph if it is bipartite and does not contain any induced copy of $2K_2$.\ Let $\mathcal{F}$ be a family of graphs. We say that $G$ is $\mathcal{F}$--free if $G$ does not contain any induced copy of a member of $\mathcal{F}$. If $G$ is $\mathcal{F}$--free and $\mathcal{F}=\{G_0\}$, then we say that $G$ is $G_0$--free.
\mybreak
The \emph{weak-dual} of a plane graph $H$ is the subgraph of the plane dual of $H$ induced by the vertices corresponding to the bounded faces.\ An \emph{outer-planar} graph is a planar graph that can be embedded in the plane so that all vertices are on the
outer face.\ Any such embedding is referred to as an \emph{outplane} embedding. Outer-planar graphs are characterized as those graphs not containing a minor isomorphic to  either $K_4$ or $K_{2,3}$. Equivalently, are those series--parallel graphs that do not contain a minor isomorphic (or a subdivision of) $K_{2,3}$. 2--connected outer-planar graphs posses a unique Hamiltonian cycle bounding the outer face.\ The weak-dual of an outplane embedding of a 2--connected outer-planar graph is a tree. 
\mybreak
Graph dealt with in this paper are, in general, not assumed to be vertex-labeled.\ However, when needed, vertices are labeled by the first $n$ naturals where $n$ is the order of $G$. We denote labeled and unlabeled graph with the same symbol.\ If $u$ and $v$ are two vertices of $G$, then a \emph{label swapping} at $u$ and $v$ (or simply $uv$-swapping) is the labeled graph obtained by interchanging the labels of $u$ and $v$. For a bipartite graph $G$ with color classes $A$ and $B$, let $\mathsf{A}\in \{0,1\}^{A\times B}$ be the \emph{reduced adjacency matrix of $G$}, namely, $\mathsf{A}$ is the matrix whose rows are indexed by the vertices of $A$, whose columns are indexed by the vertices of $B$ and where $\mathsf{A}_{u,v}=1$ if and only if $u$ and $v$ are adjacent vertices of $G$. The \emph{incidence graph} of a matrix $\mathsf{A}\in \{0,1\}^{A\times B}$ is the bipartite graph with color classes $A$ and $B$ and where $u\in A $ and $v\in B$ are adjacent if and only $a_{u,v}=1$.
\mybreak
We revise very briefly some basic notion in matroid theory~\cite{oxley,truemper,welsh}.
\begin{figure}
    \begin{center}
        %\begin{figbox}
             \includegraphics[width=7cm]{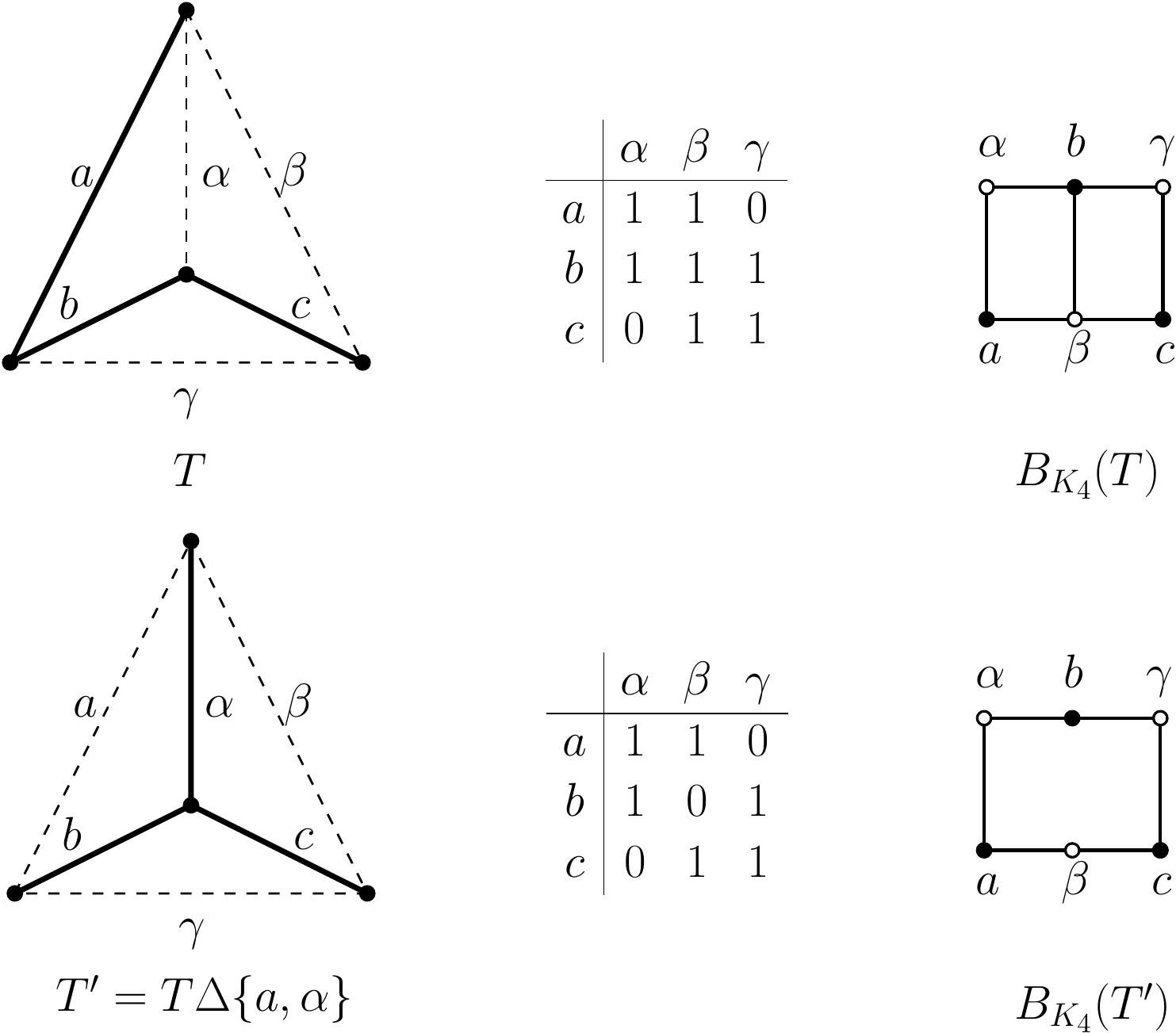}
        %\end{figbox}
    \end{center}
    \caption{Two fundamental graphs of $K_4$ with respect to two spanning tree $T$ and $T'$ along with the corresponding matrices and the respective fundamental graphs. The fundamental graphs with respect to $T'$ arises from the one with respect to tree by pivoting along edge $\alpha a$.}
    \protect\label{fi:fundgraph}
\noindent\hrulefill%
\end{figure}
For a $\{0,1\}$-matrix $\mathsf{A}$ the \emph{binary matroid generated by} $\mathsf{A}$, denoted by $M(\mathsf{A})$, is the matroid whose elements are the indices of the columns of $\mathsf{A}$ and whose independent sets are those subsets of elements whose corresponding columns are linearly independent over $GF(2)$. A \emph{binary matroid} is a matroid isomorphic to the binary matroid generated by some $\{0,1\}$-matrix $\mathsf{A}$. If $T$ is a basis of a binary matroid $M$ and $f\not\in T$, then $T\cup \{f\}$ contains a unique minimal non independent set $C(f,T)$, namely, if $F$ is a proper subset of $C(f,T)$, then $F$ is an independent set of $M$. Such a set $C(f,T)$ is the so called \emph{fundamental circuit
through} $f$ \emph{with respect to} $T$ and $C(f,T)-\{f\}$ is the corresponding \emph{fundamental path}. A \emph{partial representation} of a binary matroid $M$ is a $\{0,1\}$-matrix $\widetilde{\mathsf{A}}$ whose columns are the incidence vectors over the elements of a basis of the fundamental circuits with respect to that basis. A \emph{fundamental graph of a binary matroid} $M$ is simply the incidence bipartite graph of any of its partial representations. Therefore a bipartite graph $G$ is the fundamental graph of a binary matroid $M$ if $G$ is isomorphic to the graph $B_M(T)$ with color classes $T$ and $\overline{T}$ for some basis $T$ and co-basis $\overline{T}$ (i.e., the complement of a basis) of $M$ and where there is an edge between $e\in T$ and
$f\in\overline{T}$ if $e\in C(f,T)$, $C(f,T)$ being the fundamental circuit
through $f$ with respect to $T$. If $\widetilde{\mathsf{A}}$ is a partial representation of a binary matroid $M$, then $M\cong M([\,\mathsf{I}\,|\,\widetilde{\mathsf{A}}\,])$, that is $M$ is isomorphic to the matroid generated by $[\,\mathsf{I}\,|\,\widetilde{\mathsf{A}}\,]$. Clearly, $\widetilde{\mathsf{A}}$ is partial representation of $M$ with rows and columns indexed by the elements of the co-basis $\overline{T}$ and the basis $T$, respectively, if and only if $\widetilde{\mathsf{A}}$ is the reduced adjacency matrix of $B_T(M)$, where the color class $T$ indexes the columns of $\widetilde{\mathsf{A}}$.
\mybreak
The \emph{cycle matroid} (also known as \emph{graphic matroid)} of $H$, denoted by $M(H)$, is the matroid whose elements are the edges of $H$ and whose independent sets are the forests of $H$.\ If $H$ is connected, then the bases of $M(H)$ are precisely the spanning trees of $H$ and its co-bases are precisely the \emph{co-trees}, namely the subgraphs
spanned by the complement of the edge--set of a spanning tree. A matroid $M$ is a \emph{cycle matroid} if it isomorphic to the cycle matroid of some graph $H$. Cycle matroids are binary: if $M$ is a cycle matroid, then there is a graph $H$ and a spanning forest of $H$ such that $[\,\mathsf{I}\,|\,\widetilde{\mathsf{A}}\,]$ where $[\widetilde{\mathsf{A}}\,]$ is the $\{0,1\}$-matrix whose columns are the incidence vectors over the edges of a spanning forest of the fundamental cycles with respect to that spanning forest. A \emph{fundamental graph of a graph} $H$ is simply the fundamental graph of its cycle matroid $M(H)$. For a graph $H$ and one of its spanning forest $T$, we abridge the notation $B_{M(H)}(T)$ into $B_H(T)$ to denote the fundamental graph of $H$ with respect to $T$ (see~Figure~\ref{fi:fundgraph}).\ If $H$ is 2--connected, then $B_H(T)$ is connected.\ Moreover, $B_H(T)$
does not determine $H$ in the sense that non-isomorphic graphs may
have isomorphic fundamental graphs.\ This because, while it is certainly true that isomorphic graphs have isomorphic cycle matroids, the converse is not generally true (See Figure~\ref{fi:2_iso}).\ Two graphs having isomorphic cycle matroids are called 2--\emph{isomorphic}. 
\begin{figure}[H]
    \begin{center}
        %\begin{figbox}
             \includegraphics[width=9cm]{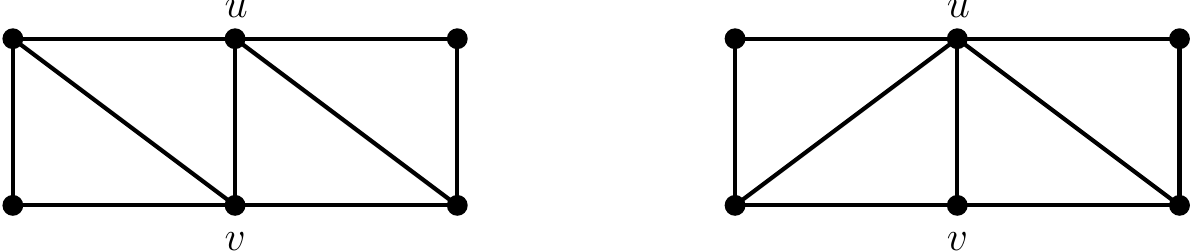}
        %\end{figbox}
    \end{center}
    \caption{Two 2--isomorphic graphs that are not isomorphic.}
    \protect\label{fi:2_iso}
\noindent\hrulefill%
\end{figure}
%The notion of 2--isomorphism of graphs can be equivalently (and remarkably) characterized in graph theoretical terms. Whitney~\cite{whitney} proved that $H$ and $H'$ are 2--isomorphic if $H'$ is obtained from $H$ by a sequence of the following operations~\cite{oxleywhite}: 
%\begin{itemize}
%\item[--] \emph{Vertex Identification}: identify two vertex in distinct connected components of $H$;
%\item[--] \emph{Vertex Splitting}: split $H$ into two subgraphs at a cut--vertex (i.e., a vertex whose removal disconnects $H$);
%\item[--] \emph{Twisting}: if $H$ is obtained from two disjoint graphs $H_1$ and $H_2$
%by identifying vertices $v_1$ of $H_1$ and $u_2$ of $H_2$ in the vertex $u$ of $H$, identifying vertices
%$v_1$ of $H_1$ and $v_2$ of $H_2$ in the vertex $v$ of $G$, then twisting $H$ about $\{u,v\}$, means identifying, $u_1$ with $v_2$ and $u_2$ with $v_1$ to get a new graph $H′$ in other words the subgraph $H_1$ is turned around a pair of its vertices $u$ and $v$ that disconnect $H$.
%\end{itemize}
%\mybreak

\section{BDH graphs are fundamental graphs of series parallel graphs}\label{sec:proofs}
In this section we prove our main result.
\begin{theorem}\label{thm:main} A connected bipartite graph $G$ is a bipartite distance hereditary graph if and only if it is a fundamental graph of a 2--connected series--parallel graph.
\end{theorem}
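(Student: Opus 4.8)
The plan is to prove both implications by induction, matching Duffin's construction step by step with Bandelt and Mulder's construction through a dictionary that records how a fundamental graph changes under a single series or parallel extension. Fix a $2$--connected series--parallel graph $H_0$, a spanning tree $T_0$ of it, and put $G_0=B_{H_0}(T_0)$. The basic computation is this. (i) If $H$ is obtained from $H_0$ by adding an edge $e'$ parallel to an edge $e$, then $T_0$ is still a spanning tree of $H$ and the only change in the fundamental graph is a new co--tree vertex $e'$ whose fundamental circuit is $\{e,e'\}$ when $e\in T_0$ (so $e'$ becomes a pendant vertex at $e$) and $(C(e,T_0)\setminus\{e\})\cup\{e'\}$ when $e\notin T_0$ (so $e'$ becomes a twin of $e$). (ii) If $H$ is obtained by subdividing $e$ into $e_1,e_2$ at a new degree--$2$ vertex $z$, then every spanning tree $T$ of $H$ contains $e_1$ or $e_2$; if it contains both, comparison with the tree $T_0:=(T\setminus\{e_1,e_2\})\cup\{e\}$ of $H_0$ shows $B_H(T)\cong G_0$ with the vertex $e$ split into two twins $e_1,e_2$; if it contains only $e_1$, then $e\notin T_0:=T\setminus\{e_1\}$ and, chasing fundamental circuits, $B_H(T)\cong G_0$ with $e$ relabelled $e_2$ and a pendant vertex $e_1$ attached to it. Thus each Duffin extension realises exactly one Bandelt--Mulder operation; read backwards: a pendant at $v$ is realised by a parallel extension at $v$ when $v$ is in the tree color class and by a series extension at $v$ when $v$ is in the co--tree color class, while a twin of $v$ is realised by a parallel extension at $v$ when $v$ is in the co--tree class and by a series extension at $v$ when $v$ is in the tree class.

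For the direction ``$G$ BDH $\Rightarrow$ $G$ is a fundamental graph of a $2$--connected series--parallel graph'' I would induct on Bandelt and Mulder's construction of a connected BDH graph $G$ having at least one edge. The base case $G=K_2$ is $B_H(T)$ for $H$ the digon. For the step, write $G=G_0+(\text{pendant or twin})$ with $G_0=B_{H_0}(T_0)$ and $H_0$ a $2$--connected series--parallel graph supplied by the inductive hypothesis; by the backward dictionary above a single series or parallel extension of $H_0$ at the vertex of $G_0$ at which the new vertex is attached produces a $2$--connected series--parallel graph $H$ with $G=B_H(T_0')$ for the obvious induced spanning tree $T_0'$, since series and parallel extensions preserve both $2$--connectedness and series--parallelism. (Planar duality is not needed here, although the identity $B_{H^{\ast}}(\overline{T})\cong B_H(T)$ obtained by swapping color classes would give an alternative route.)

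For the converse I would induct on Duffin's construction of $H$ (say on $|E(H)|$): every $2$--connected series--parallel graph other than the digon is $H_0+\text{extension}$ for some $2$--connected series--parallel $H_0$. The base case is the digon, whose fundamental graph is $K_2$, a BDH graph. For the step, given $H=H_0+\text{extension}$ and a spanning tree $T$ of $H$, the dictionary (i)--(ii) furnishes a spanning tree $T_0$ of $H_0$ with $B_H(T)\cong B_{H_0}(T_0)+(\text{pendant or twin})$; since $B_{H_0}(T_0)$ is BDH by induction, so is $B_H(T)$ by the Bandelt--Mulder construction. The one situation not literally covered by (i)--(ii) is a parallel extension in which the added edge $e'$ lies in $T$; then $e\notin T$, and passing to $T':=(T\setminus\{e'\})\cup\{e\}$, a spanning tree of both $H$ and $H_0$, the multigraph automorphism of $H$ interchanging the parallel edges $e$ and $e'$ carries $T$ to $T'$ and induces $B_H(T)\cong B_H(T')$, reducing to the already handled case $e\in T'$.

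The bulk of the work, and the step I expect to be the main obstacle, is establishing the dictionary cleanly — in particular the subdivision case in (ii) where exactly one of $e_1,e_2$ enters the tree. There one must verify that the fundamental circuit of $e_2$ is exactly $\{e_1,e_2\}\cup(C(e,T_0)\setminus\{e\})$, that every other fundamental circuit of $H$ coincides with the corresponding circuit of $H_0$ (this is where it is used that $e_1$ is a leaf of $T$ and that $z$ has degree $2$, so no other circuit passes through $z$), and hence that the net effect on $G_0$ is precisely ``relabel $e$ and hang a pendant vertex on it''. Once this local bookkeeping and the little automorphism argument for the $e'\in T$ case are in place, both inductions close immediately.
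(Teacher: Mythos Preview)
Your proposal is correct and follows essentially the same approach as the paper: both set up the dictionary between series/parallel extensions on $H$ and pendant/twin operations on $B_H(T)$ (the paper records this as Table~1) and then run the two inductions via Duffin's and Bandelt--Mulder's constructions, with the digon/$K_2$ as common base case. Your treatment is in fact more careful than the paper's terse version, since you explicitly pull back an \emph{arbitrary} spanning tree of $H$ to one of $H_0$ in each case (including the subdivision with only one of $e_1,e_2$ in $T$ and the parallel extension with $e'\in T$), whereas the paper simply states the table and appeals to it in one sentence for the converse direction.
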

\begin{proof}Recall that if $G$ is a bipartite graph, then $M^G$ denotes the binary matroid generated by the reduced adjacency matrix of $G$. Let us examine preliminarily the effect induced on a fundamental graph $B_H(T)$ of a 2--connected graph $H$ by series and parallel extension and, conversely (and in a sense ``dually''), the effect induced on $M^G$ by extending a connected bipartite graph $G$ through the addition of pending vertices and twins.\ If $M^G$ is a graphic matroid and $H$ is one of the 2--isomorphic graphs whose cycle matroid is isomorphic to $M^G$, then the following table summarizes these effects.
\begin{table}[h]\label{table:spbdh}
	\centering
	\begin{tabular}{lll}
		Operation on $H$ & & Operation on $B_H(T)$\\ \hline
		Parallel extension on edge $a$ of $T$ &$\leftrightarrow$& adding a pending vertex in color class $\overline{T}$ adjacent to $a$ \\
		Series extension on edge $a$ of $T$ & $\leftrightarrow$ & adding a twin of $a$ in color class $T$\\
		Parallel extension on edge $b$ of $\overline{T}$ & $\leftrightarrow$ & adding a twin
		of $b$ in color class $\overline{T}$\\ Series extension on edge $b$ of $\overline{T}$ & $\leftrightarrow$ &
		adding a pending vertex in color class $T$ adjacent to $b$.\\ \hline
	\end{tabular}
	\caption{\small The effects of series and parallel extension on $H$ on
		its fundamental graph $B_H(T)$.}
	\protect
	\noindent\hrulefill
\end{table}
\mybreak
We can now proceed with the proof.\ The \emph{if part} is proved by induction on the order of $G$.\ The
assertion is true when $G$ has two vertices because $K_2$ is a BDH
graph and at the same time is also the fundamental graph of a
digon. Therefore $M^G$ is a cycle matroid.\ Let now $G$ have $n\geq 3$ vertices and assume that the
assertion is true for BDH graphs with $n-1$ vertices. By Bandelt
and Mulder's construction $G$ is obtained from a BDH graph $G'$ either by
adding a pending vertex or a twin. Let $H'$ be a series--parallel
graph having $G'$ as fundamental graph with respect to some
spanning tree. Since, by Table 1, the latter two operations
correspond to series or parallel extension of $H'$, the result
follows by Duffin's construction of series--parallel graphs.\ Conversely, let $G$
be the fundamental graph of a series--parallel graph $H$ with
respect to some tree $T$.\ By Duffin's construction of series--parallel graphs and
Table 1, $G$ can be constructed starting from a single edge by
either adding twins or pending vertices. Therefore, $G$ is a BDH
graph by Bandelt and Mulder's construction.
\end{proof}
There are also other ways to obtain the result: one is briefly addressed in Section~\ref{sec:piv}. Another one builds on the results of~\cite{abs,mosa}. For the interested reader, we give such a proof in the appendix.
Before going through applications, let us discuss how Theorem~\ref{thm:main} relates to \emph{circle graphs}, a thoroughly investigated class of graphs which we now briefly describe.

A \emph{double occurrence word} $\mathbf{w}$ over a finite alphabet $\Sigma$ is a word in which
each letter appears exactly twice---$\mathbf{w}$ is cyclic word, namely, it is the equivalence class of a linear word modulo cyclic shifting and reversal of the orientation. Two distinct symbols of $\Sigma$ in $\mathbf{w}$ are \emph{interlaced} if one appears precisely once between the two occurrences of the other. By wrapping $\mathbf{w}$ along a circle and by joining the two occurrences of the same symbol of $\mathbf{w}$ by a chord labeled by the same symbols whose occurrences it joins, one obtains a pair $(S,\mathcal{C})$ consisting of a circle $S$ and a set $\mathcal{C}$ of chords of $S$. In knot theory terminology, such a pair is usually called a \emph{chord diagram} and its intersection graph, namely the graph whose vertex set is $\mathcal{C}$ and where two vertices are adjacent if and only if the corresponding chords intersects, is called \emph{the interlacement graph of the chord diagram} or the \emph{interlacement graph of the double occurrence word}.
\mybreak
A graph is an \emph{interlacement graph} if it is the interlacement graph of some chord diagram or of some double occurrence words. Interlacement graphs are probably better known as \emph{circle graphs}---the name \emph{interlacement graph} comes historically from the \emph{Gauss Realization Problem of double occurrence words} \cite{fom,rose,shtraldizu}--.
\mybreak
Distance hereditary graphs are circle graphs~\cite{bralespi}. Thus BDH graphs form a proper subclass of bipartite circle graphs.\ de Fraysseix \cite{defrai,defrai1} proved the following.
\begin{theorem}[de Fraysseix]\label{thm:defra}
A bipartite graph is a bipartite circle graph if and only if it is the fundamental graph of a planar graph.
\end{theorem}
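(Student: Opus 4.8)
The plan is to derive both implications from a single, reversible dictionary between chord diagrams on a circle and spanning tree / cotree decompositions of plane graphs, the only topological ingredient being the elementary fact that a closed regular neighbourhood $D$ of a spanning tree $T$ of a plane graph is a disk with circular boundary $S=\partial D$, and that the remaining edges can be isotoped into the complementary region so that each of them meets $S$ in exactly two points. Throughout I will use the triviality that two chords of $S$ cross (inside a disk bounded by $S$) if and only if their four endpoints alternate along $S$, i.e.\ if and only if they \emph{interlace}; since a fundamental graph is bipartite with colour classes $T,\overline T$ and, in a bipartite circle graph, each colour class is independent, it follows that in every chord diagram arising below, chords of the same colour are pairwise non--crossing.

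\emph{``Fundamental graph of a planar graph'' $\Rightarrow$ ``bipartite circle graph''.} Let $G=B_H(T)$; we may assume $H$ connected and fix a plane embedding. Form the disk $D\supseteq T$ with boundary circle $S$ and isotope $H$ so that every cotree edge $f\in\overline T$ lies, apart from its endpoints, outside $D$ and crosses $S$ exactly twice. Each tree edge $e=ab\in T$ is drawn as the chord $c_e$ of $S$ inside $D$ whose two endpoints are the two ``sides'' of $e$ along $\partial D$; since the bipartitions $V(T_a)\sqcup V(T_b)$ of $V(T)$ coming from distinct tree edges form a laminar family, the chords $\{c_e:e\in T\}$ are pairwise non--crossing. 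Each cotree edge $f\in\overline T$ is drawn as the chord $c_f$ of $S$ cut out by $f$ itself in the complementary disk; being pairwise disjoint arcs, the chords $\{c_f:f\in\overline T\}$ are pairwise non--crossing as well. Finally, the two endpoints of $c_f$ lie in the portions of $S$ running around the two endpoints $u,v$ of $f$, so $c_e$ and $c_f$ interlace precisely when $u$ and $v$ lie in different components of $T-e$, that is, precisely when $e$ belongs to the fundamental circuit $C(f,T)$, i.e.\ when $ef\in E(B_H(T))$. Hence the interlacement graph of the chord diagram just built is exactly $G$.

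\emph{``Bipartite circle graph'' $\Rightarrow$ ``fundamental graph of a planar graph''.} Conversely, let $G$ be a bipartite circle graph with colour classes $X,Y$, realized by a chord diagram on a circle $S$; by the opening remark the $X$--chords are pairwise non--crossing and so are the $Y$--chords, so I draw the $X$--chords in the disk $D$ bounded by $S$ and the $Y$--chords in the complementary disk. Cutting $D$ along the $X$--chords produces $|X|+1$ regions which, together with the $X$--chords regarded as edges, form a tree; take these regions as the vertices of a graph $H$, take the $X$--chords as the edges of $H$ joining the two regions they border, and add each $Y$--chord as a further edge of $H$ joining, through the complementary disk, the two regions in whose closures its two endpoints on $S$ lie. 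Then $H$ is plane, $T:=X$ is a spanning tree of $H$ with cotree $\overline T=Y$, and for $x\in X$ and $y\in Y$ the tree edge $x$ lies on the $T$--path between the endpoints of $y$ if and only if the chord $x$ separates the two endpoints of the chord $y$ along $S$, i.e.\ if and only if $x$ and $y$ interlace, i.e.\ if and only if $xy\in E(G)$. Since $X$ and $Y$ are independent in $G$ and carry no edge of $B_H(T)$ either, $G=B_H(T)$, as required.

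The step I expect to be the real work is, in the first implication, making the cyclic order on $S$ of the endpoints of the chords $c_e$ ($e\in T$) and $c_f$ ($f\in\overline T$) explicit enough that the laminarity of the two families and the equivalence ``$c_e$ and $c_f$ interlace $\iff e\in C(f,T)$'' can both be read off rigorously; the clean device is to describe $S$ once and for all as the boundary walk of the thickened tree $T$ (each tree edge traversed twice, each corner of the rotation system at a vertex met once), to locate every chord endpoint on this walk, and to check the interlace criterion against the tree--path description of the fundamental circuits. The degenerate situations---$G$ or $H$ disconnected, parallel edges or pendant vertices, and loops of $H$ produced by $Y$--chords whose two endpoints fall in a single region---are disposed of by splitting into connected components and by the usual conventions for cycle matroids.
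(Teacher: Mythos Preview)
The paper does not contain a proof of this theorem. It is stated as a known result due to de Fraysseix, with citations to his original papers, and is used only to put Theorem~\ref{thm:main} in context as the series--parallel specialisation of the planar case. There is therefore nothing in the paper to compare your argument against.

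For what it is worth, your approach is the standard one and is essentially de Fraysseix's own: thicken a spanning tree of a plane graph to a disk whose boundary becomes the circle of the chord diagram, with tree edges giving chords on one side and cotree edges on the other; and, conversely, reconstruct a plane graph from a bipartite chord diagram by letting the non--crossing chords of one colour class cut the disk into the vertices of a tree. The points you flag as needing care (the explicit cyclic order on $S$ from the boundary walk of the thickened tree, and the verification that interlacement of $c_e$ and $c_f$ matches $e\in C(f,T)$) are exactly the places where a careful write--up has to do real work, but there is no missing idea.
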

Therefore Theorem~\ref{thm:main} specializes de Fraissex's Theorem to the subclass of series--parallel graphs.   

\section{Applications}\label{sec:application}
\subsection{BDH graphs are preserved by edge--pivoting}\label{sec:piv}
It follows from Theorem~\ref{thm:main} that with every 2--isomorphism class of 2--connected series--parallel graphs one can associate all the BDH graphs that are fundamental graphs of each member in the class. Therefore BDH graphs that correspond to the same 2--isomorphism class are graphs in the same ``orbit''. In this section we make precise the latter sentence and draw the graph-theoretical consequences of this fact.
\mybreak
\mybreak
Given a $\{0,1\}$-matrix $\mathsf{A}$, pivoting $\A$
over $GF(2)$ on a nonzero entry (the pivot element) means
replacing
$$\A=\left(\begin{array}{cc}
\mathsf{1} & \mv{a}\\
\mv{b} & \D
\end{array}
\right) \quad {\rm by} \quad \tilde{A}=\left(\begin{array}{cc}
\mathsf{1} & \mv{a}\\
\mv{b} & \mv{D+ba}
\end{array}
\right)
$$
where $\mv{a}$ is a row vector, $\mv{b}$ is a column vector, $\D$ is a
submatrix of $\A$ and the rows and columns of $\A$ have been
permuted so that the pivot element is $a_{1,1}$
(\cite{cornuejols}, p. 69, \cite{schrij}, p. 280).
If $\A$ is the partial representation of the cycle matroid of a graph $H$ (or more generally a binary matroid), then pivoting on a non zero entry, $C(e,f)$, say, yields a new tree (basis) with $f$ in the tree (basis) and $e$ in the co-tree (co-basis) and the matrix obtained after pivoting is a new partial representation matrix of the same matroid. Clearly the fundamental graphs associated with the two bases change accordingly so that pivoting on $\{0,1\}$-matrices induces an operation on bipartite graphs whose concrete interpretation is a change of basis in the associated binary matroid. The latter operation on bipartite graph will be still referred to as \emph{edge--pivoting} or simply to \emph{pivoting} in analogy to what happens for matrices (see also ~Figure~\ref{fi:fundgraph}). In the context of circle graphs, the operation of pivoting is a specialization to bipartite graph of the so called \emph{edge--local complementation}.\ Since any bipartite graph is a fundamental graph of some binary matroid, the operation of \emph{pivoting} can be described more abstractly as follows.
\mybreak
Given a bipartite graph with color classes $A$ and $B$ \emph{pivoting on edge $uv\in E(G)$} is the operation that takes $G$ into the graph $G^{uv}$ on the same vertex set of $G$ obtained by complementing the edges between $N_G(u)\setminus \{u\}$ and $N_G(v)\setminus \{v\}$ and then by swapping the labels of $u$ and $v$ (if $G$ is labeled). More formally, if $\ell_G: V(G)\rightarrow \mathbb{N}$ is a labeling of the vertices of $G$, then 
$$G^{uv}=(V(G),E(G)\Delta ((N_G(u)\setminus \{u\})\times (N_G(v)\setminus \{v\})))$$
and $\ell_{G^{uv}}$ is defined by $\ell_{G^{uv}}(u)=\ell_G(v)$, $\ell_{G^{uv}}(v)=\ell_G(u)$ and $\ell_{G^{uv}}(w)=\ell_G(w)$ if $w\not\in\{u,v\}$. If $e\in E(G)$ has endpoints $uv$, then we use $G^e$ in place of $G^{uv}$.
\mybreak
We say that a graph $\widetilde{G}$ is \emph{pivot-equivalent} to a graph $G$, written $\widetilde{G}\sim G$
if for for some $k\in \mathbb{N}$, there is a sequence
$G_1,\cdots,G_k$ of graphs such that $G_1\cong G$, $G_k\cong \widetilde{G}$ and, for
$i=1,\ldots, k-1$, $G_{i+1}\cong G_i^{e_i}$, $e_i\in E(G_i)$. The \emph{orbit of $G$}, denoted by $[G]$, consists of all graph that pivot-equivalent to $G$. 
\mybreak
For later reference, we state as a lemma the easy though important facts discussed above.\ Figure~\ref{fi:fundgraph} illustrates the contents of the lemma. 
\begin{lemma}\label{lem:matorbit}
Let $M$ be a connected graphic matroid. Then $M$ determines both a class $[G]$ of pivot-equivalent graphs and a class $[H]$ of 2--isomorphic graphs. In particular, any graph in $[G]$ is the fundamental graph of some 2--isomorphic copy of $H$ and the fundamental graph of any graph in $[H]$ is pivot-equivalent to $G$. 
\end{lemma}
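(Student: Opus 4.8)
The plan is to show that the orbit $[G]$ is exactly the family of all fundamental graphs of $M$, and then to transport this identification through the cycle--matroid correspondence. First I would fix once and for all a connected graph $H$ with $M(H)\cong M$ (possible since $M$ is graphic, and $H$ may be taken connected because $M$ is connected), a spanning tree $T_0$ of $H$, and set $G:=B_H(T_0)=B_M(T_0)$. The class $[H]$ is defined to be the set of all graphs whose cycle matroid is isomorphic to $M$; this is precisely the $2$--isomorphism class of $H$ and is plainly determined by $M$ alone, and likewise $[G]$ is the pivot--orbit of $G$. Both ``in particular'' clauses then follow from the single assertion that, up to isomorphism, $[G]=\{\,B_M(T)\ :\ T\text{ a basis of }M\,\}$.

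For the inclusion $[G]\subseteq\{B_M(T)\}$ I would induct on the length of a pivot sequence starting at $G=B_M(T_0)$; the inductive step amounts to checking that pivoting on an edge of a fundamental graph of $M$ yields another fundamental graph of $M$. This is the matrix fact recalled in Section~\ref{sec:prel}: if $\widetilde{\mathsf A}$ is the reduced adjacency matrix of $B_M(T)$, i.e.\ a partial representation of $M$ with columns indexed by $T$, then pivoting over $GF(2)$ on the nonzero entry $(e,f)$ with $e\in T$, $f\in\overline{T}$ produces a partial representation of the \emph{same} matroid $M$ with respect to the basis $T':=T\,\Delta\,\{e,f\}$, and its incidence graph is both $B_M(T')$ and, up to isomorphism, the graph obtained from $B_M(T)$ by the graph--pivot on the edge $ef$. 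Hence every member of $[G]$ is isomorphic to $B_M(T)$ for some basis $T$.

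For the reverse inclusion I would invoke the connectivity of the basis--exchange graph of $M$: every basis $T$ is joined to $T_0$ by a chain $T_0,T_1,\dots,T_k=T$ with $T_{i+1}=T_i\,\Delta\,\{e_i,f_i\}$, $e_i\in T_i$, $f_i\notin T_i$. For $T_{i+1}$ to be a basis one needs $e_i\in C(f_i,T_i)$, i.e.\ the $(e_i,f_i)$--entry of the partial representation with respect to $T_i$ is nonzero, so each step is a legal pivot and $B_M(T_{i+1})\cong B_M(T_i)^{e_if_i}$ by the previous paragraph; chaining the steps gives $B_M(T)\sim G$, hence $B_M(T)\in[G]$. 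This establishes the identity $[G]=\{\,B_M(T):T\text{ a basis}\,\}$ up to isomorphism.

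It then remains only to read off the two consequences. If $H'\in[H]$ then $M(H')\cong M$, so any fundamental graph $B_{H'}(T')$ equals $B_M(T')$ and therefore lies in $[G]$: this is the claim that the fundamental graph of any graph in $[H]$ is pivot--equivalent to $G$. Conversely every graph in $[G]$ is isomorphic to some $B_M(T)=B_H(T)$ with $T$ a spanning tree of the connected graph $H\in[H]$, which is the claim that any graph in $[G]$ is the fundamental graph of a $2$--isomorphic copy of $H$. The only step that is not pure bookkeeping is the connectivity of the basis--exchange graph together with the remark that each elementary exchange is realized by a $GF(2)$--pivot; I would expect this to be the (mild) crux, everything else following by unwinding the definitions of Section~\ref{sec:prel}.
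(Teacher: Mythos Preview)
Your argument is correct and follows the same line the paper sketches: the paper does not give a formal proof of this lemma but merely records it as ``the easy though important facts discussed above,'' namely that a $GF(2)$--pivot on a partial representation is precisely an elementary basis exchange in $M$, so pivoting on a fundamental graph is a change of basis. Your write--up makes explicit the one point the paper leaves tacit, the connectivity of the basis--exchange graph needed for the inclusion $\{B_M(T)\}\subseteq[G]$; this is standard matroid theory and is exactly what is required to upgrade the paper's one--step observation to the full orbit statement.
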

The operations of pivoting and of taking induced subgraphs commute in (bipartite) graphs. 
\begin{lemma}[see \cite{abs}]\label{lem:0}
Let $G$ a bipartite graph, $U\subseteq V(G)$. Then $G^e[U]\cong (G[U])^e$.
\end{lemma}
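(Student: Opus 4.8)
The plan is to verify the identity $G^e[U] \cong (G[U])^e$ directly from the definition of pivoting, by checking that the two graphs have the same vertex set and the same edge set (once we observe that $e$ having an endpoint in $U$ is forced — otherwise $(G[U])^e$ is undefined, so implicitly $U$ contains both endpoints of $e$, and we should state this hypothesis explicitly). Write $e = uv$ with $u,v \in U$. First I would note that both graphs have vertex set $U$: pivoting does not change the vertex set, and neither does taking the induced subgraph, so the order of the two operations is irrelevant at the level of vertices. The label-swapping part of the definition is likewise immediate: in both $G^e[U]$ and $(G[U])^e$ the labels of $u$ and $v$ are interchanged and all other labels are untouched, since $u,v\in U$. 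So everything reduces to checking the edge sets agree.

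The key computation is on the symmetric-difference formula. For $G^e[U]$ we first form $E(G)\,\Delta\,\big((N_G(u)\setminus\{u\})\times(N_G(v)\setminus\{v\})\big)$ and then restrict to pairs inside $U$; for $(G[U])^e$ we first restrict $E(G)$ to $U$ and then take the symmetric difference with $(N_{G[U]}(u)\setminus\{u\})\times(N_{G[U]}(v)\setminus\{v\})$. The point is that $N_{G[U]}(u) = N_G(u)\cap U$ and similarly for $v$, so the ``complemented rectangle'' for $(G[U])^e$ is exactly $\big((N_G(u)\setminus\{u\})\times(N_G(v)\setminus\{v\})\big)\cap(U\times U)$, i.e.\ the complemented rectangle for $G^e$ intersected with the pairs living in $U$. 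Since intersection with a fixed set $S$ (here $S = U\times U$, the pair-set of the induced subgraph) distributes over symmetric difference, $(X\,\Delta\,Y)\cap S = (X\cap S)\,\Delta\,(Y\cap S)$, applying this with $X = E(G)$ and $Y$ the full complemented rectangle gives exactly the edge set of $(G[U])^e$. Hence the edge sets coincide.

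The only mild subtlety — and the step I would be most careful about — is the bookkeeping around the endpoints $u,v$ themselves and the ``$\setminus\{u\}$'', ``$\setminus\{v\}$'' in the definition: one must confirm that an edge incident to $u$ or $v$ is never spuriously created or destroyed, which is handled automatically because the complemented rectangle $(N_G(u)\setminus\{u\})\times(N_G(v)\setminus\{v\})$ by construction contains no pair involving $u$ or $v$, and restricting to $U$ (which contains both) does not change that. There is no real obstacle here; the lemma is essentially a distributivity-of-intersection-over-symmetric-difference fact dressed up in graph-theoretic language, and the proof is a short unwinding of definitions. I would present it in three lines: vertex sets agree, labels agree, edge sets agree by the displayed set identity.
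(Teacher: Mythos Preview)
The paper does not actually prove this lemma: it is stated with the attribution ``see \cite{abs}'' and no argument is given. Your direct verification---matching vertex sets, labels, and edge sets via the identity $(X\,\Delta\,Y)\cap S=(X\cap S)\,\Delta\,(Y\cap S)$---is correct and is precisely the natural way to establish the statement; your observation that the hypothesis $u,v\in U$ (equivalently $e\in E(G[U])$) is implicitly required for $(G[U])^e$ to be defined is also well taken and worth making explicit.
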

The next lemma relates in the natural way minors of a cycle matroid (graph) to the induced subgraphs of the fundamental graphs associated with the matroid (graph).
\begin{lemma}\label{lem:matroidorbit}
Let $M$ and $N$ be cycle matroids. Let $G$ be any of the fundamental graphs of $M$ and let $K$ be any of the fundamental graphs of $N$. Then $N$ is a minor of $M$ if and only if $K$ is an induced subgraph in some bipartite graph in the orbit of $G$. Equivalently, $N$ is a minor of $M$ if and only if $G$ contains some induced copy of a graph in the orbit of $K$.
\end{lemma}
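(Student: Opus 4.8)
The plan is to reduce the statement to the well-known correspondence between matroid minors and pivot/deletion operations on partial representations, and then translate that correspondence into the language of fundamental graphs via Lemma~\ref{lem:matorbit} and Lemma~\ref{lem:0}. Recall the two basic ways a minor $N$ of a binary matroid $M$ is obtained: by deleting an element $e$ (restricting to $M\setminus e$) or by contracting an element $e$ (passing to $M/e$). The key algebraic fact I would invoke is that, given a partial representation $\widetilde{\mathsf A}$ of $M$ with respect to a basis $T$, deleting a co-basis element $f$ corresponds to deleting the column indexed by $f$, deleting a basis element $e$ corresponds to deleting the row indexed by $e$, while contraction is dual to deletion and, after possibly a single pivot to move the element to be contracted into the appropriate side, again becomes deletion of a row or column. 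Since deletion of a row or a column of the reduced adjacency matrix is exactly taking the induced subgraph of the fundamental graph $B_M(T)$ obtained by removing the corresponding vertex, and since a pivot takes $B_M(T)$ to another graph in its orbit $[G]$, every single minor step on $M$ is mirrored by ``take an induced subgraph, possibly after moving to another graph in the orbit''.

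First I would fix representatives: let $G=B_M(T)$ for some basis $T$ of $M$, so $G$ has color classes $T$ and $\overline T$ and reduced adjacency matrix $\widetilde{\mathsf A}$, and likewise fix $K=B_N(S)$. Then I would argue the forward direction by induction on the number of elements deleted or contracted in expressing $N$ as a minor of $M$. The base case $N=M$ is immediate. For the inductive step, suppose $N=N'\setminus e$ or $N=N'/e$ where $N'$ is a minor of $M$ with, by induction, a fundamental graph $G'$ that is an induced subgraph of some member of $[G]$. If $e$ lies on the side of $N'$'s basis (resp.\ co-basis) matching the deletion/contraction, one simply removes the corresponding vertex of $G'$; if it lies on the wrong side, one first performs a pivot on an edge incident to $e$ (which exists because $e$ is not a loop/coloop after the reductions, as a connected cycle matroid has none), landing in the orbit of $G'$ — hence, by Lemma~\ref{lem:matorbit} and transitivity of $\sim$, still in a graph obtained by induced-subgraph operations from a member of $[G]$ — and then deletes the appropriate vertex. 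Using Lemma~\ref{lem:0}, induced subgraph and pivot commute, so the order in which these operations are applied does not matter and one ends with a fundamental graph of $N$, which by Lemma~\ref{lem:matorbit} is exactly some member of $[K]$; hence $K$ (up to being in its orbit) is an induced subgraph of a graph in $[G]$, and the ``equivalent'' reformulation follows by symmetry of the pivot relation applied to $G$ and $K$. The converse direction is essentially the same argument read backwards: an induced subgraph of a fundamental graph of $M$ is a fundamental graph of a minor of $M$ (delete the corresponding matroid elements), and moving through the orbit $[G]$ only changes the basis, not the matroid, so if $K\in[\,\text{some induced subgraph of some }G^\ast\in[G]\,]$ then $N=M(K)$-type reasoning combined with Lemma~\ref{lem:matorbit} shows $N$ is a minor of $M$.

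I expect the main obstacle to be the bookkeeping around contraction versus deletion and the ``wrong side'' case: one must be careful that after a sequence of minor operations the element $e$ to be contracted or deleted is still present as a genuine edge of the current fundamental graph (not an isolated vertex), so that a pivot is available to swap it to the needed color class — this uses that intermediate minors of a connected cycle matroid, restricted to the relevant flat, still have no loops or coloops among the active elements, or else that such degenerate cases correspond to trivial vertex-deletions and can be handled directly. A secondary subtlety is purely notational: the statement is phrased ``up to isomorphism/orbit'' on both sides, so I would state once and for all that all equalities of graphs below are up to isomorphism and all ``contains'' mean ``contains an induced copy'', and lean on Lemma~\ref{lem:matorbit} to convert freely between ``$K$ is a fundamental graph of $N$'' and ``$K$ is in the orbit of a fixed fundamental graph of $N$'', which is what makes the two displayed reformulations in the lemma equivalent. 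Once these conventions are pinned down, the proof is a short induction with no real computation.
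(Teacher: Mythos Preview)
The paper does not actually supply a proof of Lemma~\ref{lem:matroidorbit}; it is stated as a standard fact (``relates in the natural way minors of a cycle matroid to the induced subgraphs of the fundamental graphs'') and then used. So there is nothing in the paper to compare your argument against, and your proposal is essentially the expected proof of this folklore statement.

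That said, there is one concrete mix-up in your correspondence that you should fix before writing it up. With the partial representation $\widetilde{\mathsf A}$ having rows indexed by the basis $T$ and columns by the co-basis $\overline T$:
deleting the column of $f\in\overline T$ yields $M\setminus f$ (you have this right), but deleting the row of $e\in T$ yields $M/e$, i.e.\ \emph{contraction}, not deletion. The two ``wrong-side'' cases are deletion of a basis element and contraction of a co-basis element, and those are the ones that require a preliminary pivot. Your inductive scheme is otherwise exactly what is needed; just swap the labels on which vertex-removal encodes which minor operation.

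Your worry about loops and coloops is also slightly misplaced and can be dismissed more cheaply than you suggest. If $e$ becomes a loop in an intermediate minor then it lies in no basis, hence sits in $\overline T$ as an isolated vertex of the fundamental graph; since for a loop $M/e=M\setminus e$, you simply delete that isolated vertex and no pivot is needed. Dually, a coloop lies in every basis, is isolated on the $T$ side, and again $M/e=M\setminus e$ is realized by removing that vertex. So the degenerate cases are the easy ones, not an obstacle. With these two corrections the induction, together with Lemma~\ref{lem:0} (commutation of pivot and induced subgraph) and Lemma~\ref{lem:matorbit} (orbit $=$ set of fundamental graphs of $M$), gives a clean proof.
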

To get acquainted with pivoting, the reader may check Lemma~\ref{lem:1} with the help of Figure~\ref{fi:lemma}. Refer to Section~\ref{sec:prel} for the definition of \emph{domino} and \emph{hole} and to Figure~\ref{fi:special} for an illustration.
\begin{figure}[h]
	\begin{center}
		%\begin{figbox}
		\includegraphics[width=12.5cm]{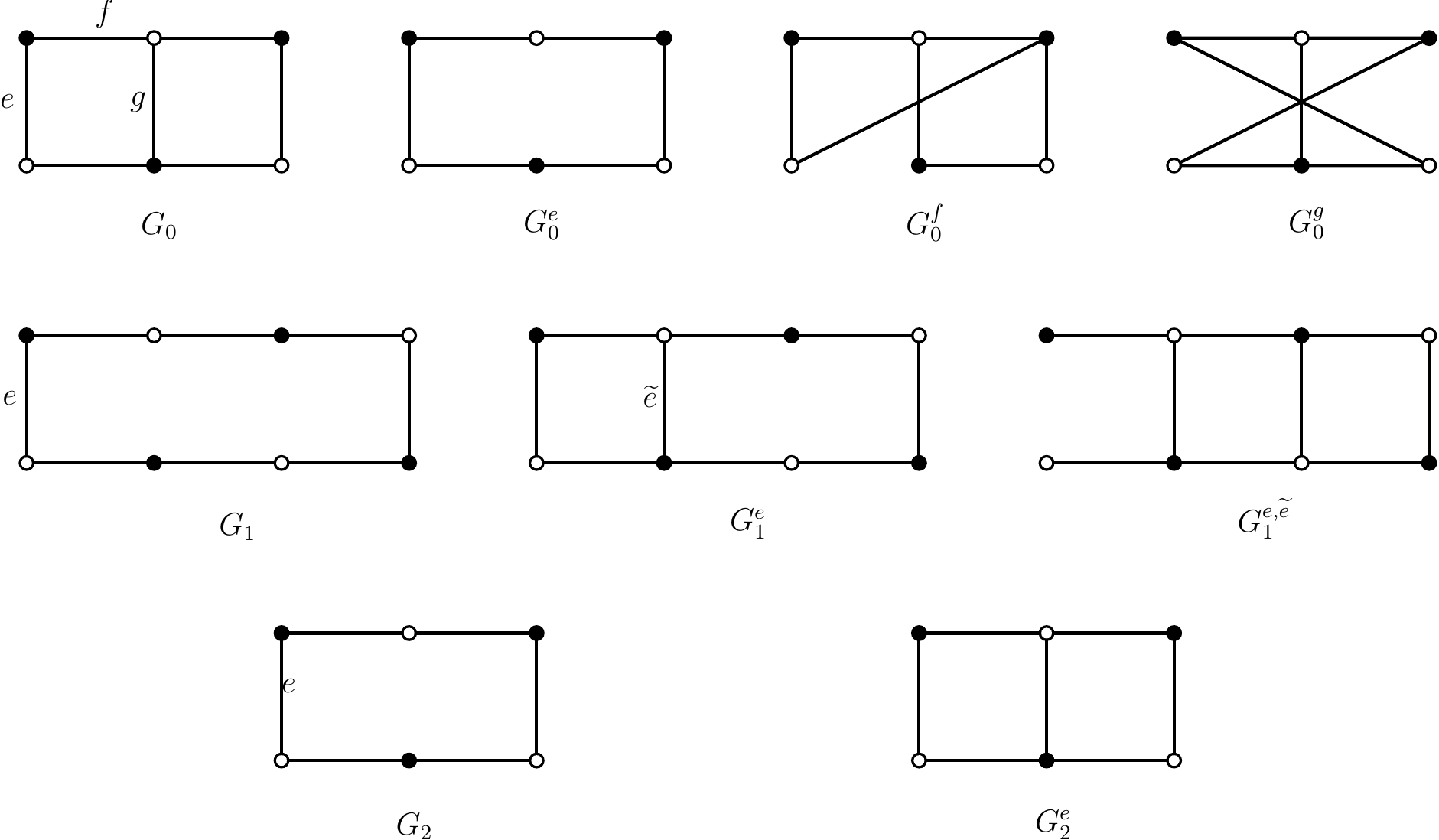}
		%\end{figbox}
	\end{center}
	\caption{The effect of pivoting a graph $G$ along some of its edges when $G\cong \boxminus, C_8, C_6$.}
	\protect\label{fi:lemma}
	\noindent\hrulefill%
\end{figure}
\begin{lemma}\label{lem:1}~~
	\begin{itemize}
		\item[--] If either $H\cong \boxminus$
		or $H\cong C_k$, $k\geq 6$, then for each $uv\in E(H)$ there exists an induced subgraph
		$H'$ of $H^{uv}$ such that either $H'\cong \boxminus$
		or $H'\cong C_k$, $k\geq 6$. 
		\item[--] If $G\cong C_k$, $k\geq 6$, then there is a graph $\widetilde{G}$ in the orbit of $H$ such that $\widetilde{G}$ contains an induced copy of either $\boxminus$ or $C_6$.	
	\end{itemize}
\end{lemma}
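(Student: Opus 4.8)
The plan is to handle the two items by an explicit, symmetry-reduced case analysis on the pivot edge, using the combinatorial description of pivoting from Section~\ref{sec:piv}: for an edge $uv$, one complements all edges between $N_G(u)\setminus\{u\}$ and $N_G(v)\setminus\{v\}$ and then swaps the labels of $u$ and $v$. For the first item, fix $H\cong C_k$ with $k\geq 6$, say with vertices $x_1,\dots,x_k$ cyclically, and let $uv=x_1x_2$. Both $u$ and $v$ have degree $2$, so $N_G(u)\setminus\{u\}=\{x_k\}$ and $N_G(v)\setminus\{v\}=\{x_3\}$; pivoting therefore toggles the single pair $x_kx_3$. Since $x_k$ and $x_3$ are non-adjacent in $C_k$ (as $k\geq 6$, they are not consecutive), the pivot \emph{adds} the chord $x_3x_k$. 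The resulting graph $H^{uv}$ is $C_k$ with one extra chord joining vertices at distance $3$ along the cycle (after the harmless label swap of $u,v$). This chord splits the cycle into two shorter cycles sharing the edge $x_3x_k$: one of length $4$ (on $x_3,x_{k},x_1,x_2$ — wait, that path has length $3$, giving a $4$-cycle through the chord) and one of length $k-1$. For $k=6$ the long piece is a $5$-cycle and the configuration on $\{x_3,x_4,x_5,x_6,x_1,x_2\}$ together with the chord is exactly the domino $\boxminus$, so $H'\cong\boxminus$. For $k\geq 7$ the induced subgraph on $V(H)\setminus\{x_1\}$ (equivalently, contracting away one vertex of the short side) is an induced $C_{k-1}$ with $k-1\geq 6$, so $H'\cong C_{k-1}$. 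The case $H\cong\boxminus$ is the remaining finite check: up to the automorphisms of the domino there are only three orbits of edges (the chord, an edge of the $C_4$-part incident to a degree-$3$ vertex, and a ``rim'' edge of degree-$2$ vertices), and for each one computes $H^{uv}$ directly and exhibits an induced $\boxminus$ or $C_6$; this is routine and is essentially the content of Figure~\ref{fi:lemma}.

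For the second item, assume $G\cong C_k$ for some $k\geq 6$; note the statement compares $G$ with the orbit of $H$, and since in this lemma $H$ ranges over $\{\boxminus\}\cup\{C_k:k\geq 6\}$ the relevant reading is that $G\cong C_k$ lies in the orbit of $C_k$, and we must produce some $\widetilde G\sim C_k$ containing an induced $\boxminus$ or $C_6$. If $k=6$ we are done with $\widetilde G=G$. If $k=7$, apply the first item once: a single pivot on any edge of $C_7$ yields, as computed above, a graph whose induced subgraph on all but one vertex is $C_6$; but we need the \emph{whole} pivoted graph to contain an induced copy, which it does, since $C_7$ plus a distance-$3$ chord contains an induced $C_6$ (delete the appropriate cut vertex of the short arc — it is non-adjacent to the rest after the chord is added). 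More systematically, I will show by induction on $k$ that the orbit of $C_k$ contains a graph with an induced $C_6$ or $\boxminus$: the base cases $k=6,7$ are as above, and for $k\geq 8$ one pivot on a cycle edge produces, by the first item, a graph $H^{uv}$ containing an induced $C_{k-1}$; by Lemma~\ref{lem:0} pivoting on an edge of that induced $C_{k-1}$ inside $H^{uv}$ agrees with the induced pivot, so composing pivots and invoking the inductive hypothesis on $C_{k-1}$ gives the claim. (Alternatively, and more cleanly, iterate the first item $k-6$ times to reach a member of the orbit that \emph{is} $C_6$, avoiding induction entirely.)

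The main obstacle is purely bookkeeping: one must be careful that ``$H'$ is an induced subgraph of $H^{uv}$'' is exactly what is produced — the pivot adds a chord to $C_k$ and the desired $\boxminus$ or $C_{k-1}$ appears only as an \emph{induced} subgraph after deleting one vertex, so the non-adjacencies must be verified, which uses $k\geq 6$ in an essential way (for $k=5$ the ``distance-$3$'' chord would coincide with an edge and the pivot would delete rather than add, breaking the pattern). The domino base case requires genuinely checking each of the three edge-orbits, but each check is a four- or five-line computation of the symmetric difference and can be deferred to the figure. No step is deep; the content is entirely in organizing the symmetry reduction so that only $O(1)$ explicit pivots need to be displayed.
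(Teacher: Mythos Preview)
Your overall approach matches the paper's: reduce by symmetry and compute the pivot explicitly on $C_k$ and on $\boxminus$. The domino analysis (three edge-orbits, defer to a direct check) is fine and agrees with the paper's Figure~\ref{fi:lemma}.

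However, the cycle computation contains a concrete error. After pivoting $C_k$ on $x_1x_2$ you correctly identify that the only toggled pair is $x_3x_k$, so $H^{x_1x_2}$ is $C_k$ together with the chord $x_3x_k$. But this chord cuts off a $4$-cycle and a $(k-2)$-cycle, not a $(k-1)$-cycle: the long cycle is $x_3,x_4,\dots,x_k,x_3$, which has $k-2$ vertices. Consequently, deleting only $x_1$ does \emph{not} leave an induced $C_{k-1}$; it leaves $C_{k-2}$ with the pendant vertex $x_2$ hanging off $x_3$. The correct move---and the one the paper uses---is to delete \emph{both} endpoints $x_1,x_2$ of the pivot edge, which yields an induced $C_{k-2}$. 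This also makes the parity consistent: since edge--pivoting here is defined only for bipartite graphs, $k$ must be even, so your separate discussion of $k=7$ is spurious, whereas $k\mapsto k-2$ stays within even cycles.

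With that correction your induction for the second item goes through unchanged (drop by $2$ at each step until you reach $C_6$, then one more pivot produces $\boxminus$), and the argument is essentially identical to the paper's.
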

\begin{proof} By inspecting Figure~\ref{fi:lemma} one checks that if $G\cong \boxminus$, then either $G^e\cong \boxminus$ or $G^e\cong C_6$. If $G\cong C_6$, then $G^e\cong \boxminus$
	for every $e\in E(G)$. If $G\cong C_k$, $k>6$, then by pivoting on
	$uv\in E(G)$  and deleting $u$ and $v$ results in a graph $G'\cong
	C_{k-2}$. In particular, by repeatedly pivoting on new formed edges (like edge $\widetilde{e}$ of graph $G_1^e$ in Figure~\ref{fi:lemma}), one obtains a graph in the orbit of $G$ which contains an induced copy of either $\boxminus$ or $C_6$. The second part of the proof is left to reader.
\end{proof}
We are ready to extract the graph-theoretical consequence of Theorem~\ref{thm:main}. To this end let us recall that besides their constructive characterization, Bandelt and Mulder characterized the class of BDH graphs also by forbidden induced subgraph as follows.
\begin{theorem}[Bandelt and Mulder \cite{bm}, Corollaries 3 and 4]\label{thm:bdhchar} Let $G$ be a connected bipartite graph. Then $G$ is BDH if and only if $G$ contains neither holes nor induced dominoes.
\end{theorem}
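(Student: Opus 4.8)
The plan is to prove Theorem~\ref{thm:bdhchar} by combining the constructive characterization (Bandelt and Mulder's construction, operations (BDH\ref{com:pendant})--(BDH\ref{com:twin})) with the minor/pivoting machinery set up in Lemmas~\ref{lem:matroidorbit} and~\ref{lem:1}, via our Theorem~\ref{thm:main}. For the easy direction, I would first observe that neither a hole nor a domino is a BDH graph: a domino $\boxminus$ is not, because deleting the two antipodal chord-endpoints from $C_6$ (which is an induced subgraph) violates the isometric property, and in any case $\boxminus$ cannot be produced by Bandelt and Mulder's construction since adding a pending vertex or a twin to a smaller connected bipartite graph never creates the specific $6$-vertex, $7$-edge structure of $\boxminus$ without already containing a hole; similarly $C_n$ for $n\ge 6$ is not distance hereditary by Howorka's original observation (a chordless cycle of length $\ge 5$ fails the isometric property). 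Since being BDH is a hereditary property (induced subgraphs of distance hereditary graphs are distance hereditary, and induced subgraphs of bipartite graphs are bipartite), a BDH graph can contain neither as an induced subgraph. This gives the ``only if'' direction directly.

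For the ``if'' direction, the cleanest route is through Theorem~\ref{thm:main} and the forbidden-minor characterization of series--parallel graphs. Suppose $G$ is a connected bipartite graph with no induced hole and no induced domino; I want to show $G$ is BDH. By Theorem~\ref{thm:main} it suffices to show that $G$ is the fundamental graph of a $2$-connected series--parallel graph, equivalently (by Dirac--Duffin) of a $2$-connected graph $H$ with no $K_4$-minor. Take $H$ to be any $2$-isomorphic realization of the graphic matroid $M^G$ generated by the reduced adjacency matrix of $G$ — but here one must first check $M^G$ is graphic; I would handle this by noting that by de Fraysseix's Theorem~\ref{thm:defra} the relevant class of fundamental graphs is exactly the bipartite circle graphs, and argue the absence of holes and dominoes forces $G$ into the circle-graph class, then refine planarity to series--parallelity. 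If $H$ contained a $K_4$-minor, then by Lemma~\ref{lem:matroidorbit} some graph in the orbit $[G]$ would contain an induced copy of a fundamental graph $K$ of $M(K_4)$; the two fundamental graphs of $K_4$ are exhibited in Figure~\ref{fi:fundgraph}, and one checks (this is the combinatorial heart) that every fundamental graph of $M(K_4)$, hence every member of the orbit $[K]$, contains an induced hole or an induced domino. Then Lemma~\ref{lem:1} propagates this: by pivoting we can move the hole/domino around the orbit, and Lemma~\ref{lem:matroidorbit} (in its ``Equivalently'' form) translates ``$G$ contains an induced copy of a graph in $[K]$'' back into ``$M(K_4)$ is a minor of $M^G$,'' contradicting that $G$ itself is hole-free and domino-free once we show the orbit of $G$ stays hole/domino-free — which follows from Lemma~\ref{lem:1}'s first bullet, since pivoting preserves the property of containing a hole or domino, and its contrapositive gives that a hole/domino-free graph has a hole/domino-free orbit.

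The main obstacle I expect is the bookkeeping around the graphic matroid assumption and the orbit argument: specifically, (i) justifying that $M^G$ is a \emph{cycle} matroid rather than merely a binary matroid, for which I would argue contrapositively — if $M^G$ were non-graphic it would have an $F_7$ or $F_7^*$ or non-graphic-excluded minor whose fundamental graph is forced to contain a hole or domino — or more cleanly invoke Theorem~\ref{thm:defra} to get planarity of some realizing graph and then only worry about excluding the $K_4$-minor; and (ii) verifying the key finite check that \emph{every} graph pivot-equivalent to a fundamental graph of $K_4$ contains an induced $C_6$ or $\boxminus$. The second is a finite computation over the (small) orbit of the two $K_4$-fundamental graphs, so it is routine once set up but must be stated carefully. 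A subtlety is the direction of the equivalence in Lemma~\ref{lem:matroidorbit}: I want ``$N$ is a minor of $M$ iff $G$ contains an induced copy of a graph in $[K]$,'' and I need the property ``contains no induced hole or domino'' to be \emph{orbit-invariant}, which is exactly the first bullet of Lemma~\ref{lem:1} together with its symmetric companion (pivoting back), so this closes the loop: $G$ hole/domino-free $\Rightarrow$ no member of $[G]$ contains $C_6$ or $\boxminus$ $\Rightarrow$ no member of $[G]$ contains a fundamental graph of $K_4$ $\Rightarrow$ $M(K_4)\not\preceq M^G$ $\Rightarrow$ $H$ is series--parallel $\Rightarrow$ $G$ is BDH by Theorem~\ref{thm:main}.
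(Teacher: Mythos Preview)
The paper does not give its own proof of Theorem~\ref{thm:bdhchar}; the result is quoted from Bandelt and Mulder~\cite{bm} and used as a black box. So there is no in-paper argument to compare against, and the only question is whether your route is internally sound given the rest of the paper.

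Your ``if'' direction has a real gap at the step ``$G$ hole/domino-free $\Rightarrow$ no member of $[G]$ contains an induced $C_6$ or $\boxminus$.'' You claim this is the contrapositive of the first bullet of Lemma~\ref{lem:1}, but it is not. Lemma~\ref{lem:1} only says that pivoting \emph{on an edge of} a hole or domino $N$ produces a graph containing a hole or domino. It says nothing when the pivot edge $e$ of the ambient graph is disjoint from $N$: Lemma~\ref{lem:0} lets you commute pivoting with taking induced subgraphs only when $e$ lies in the subgraph, so from an induced hole or domino in $G^e$ you cannot in general pull back an obstruction into $G$. The paper flags exactly this just before Corollary~\ref{cor:main}: ``However this does not prove that pivoting preserves the property of being $\{hole,domino\}$--free. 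A direct proof of this fact would have been rather involved and technical.'' The orbit-invariance you need is precisely Corollary~\ref{cor:main}, which the paper derives \emph{from} Theorem~\ref{thm:bdhchar} together with Theorem~\ref{thm:main}; invoking it here is circular.

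There is a second, independent gap: before you can speak of a realizing graph $H$ and apply Lemma~\ref{lem:matroidorbit}, you need $M^G$ to be a cycle matroid. Both of your proposed fixes presuppose the conclusion. Using Theorem~\ref{thm:defra} requires $G$ to be a bipartite circle graph; the paper's only route to that is via ``distance hereditary graphs are circle graphs,'' which assumes $G$ is BDH. The excluded-minor route (show every fundamental graph of $F_7$, $F_7^*$, $M^*(K_5)$, $M^*(K_{3,3})$, etc.\ has a hole or domino somewhere in its orbit, then conclude via Lemma~\ref{lem:matroidorbit}) again needs the orbit-invariance you do not have. So as written, the argument loops back on itself in two places, and Lemma~\ref{lem:1} is not strong enough to close either loop.
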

	The following two corollaries, follow straightforwardly from Theorem~\ref{thm:main} after Theorem~\ref{thm:bdhchar} and assert that the class of BDH graph, that is $\{hole,domino\}$--free graphs, is closed under pivoting, namely, that the orbit of a $\{hole,domino\}$--free graph consists of $\{ hole,domino\}$--free graphs. By Lemma~\ref{lem:1}, we see that holes contain induced dominoes in their orbits and, on the other hand, $C_6$ is in domino's orbit. However this does not prove that pivoting preserves the property of being $\{hole,domino\}$--free.\ A direct proof of this fact would have been rather involved and technical requiring a complicate case analysis.\ Compare for a flavor with the proofs of Theorem~\ref{thm:arrowt2closed} in Section~\ref{sec:treeorbit} asserting that the class of $\{arrow,T_2\}$--free BDH graphs is closed under pivoting. 
\begin{coro}\label{cor:main} The following statement about a chordal bipartite
graph $G$ are equivalent:
\begin{enumerate}[{\rm (i)}]
\item $G$ does not contain any induced domino;
\item any graph in the orbit of $G$ is a chordal bipartite graph.
\end{enumerate}
\end{coro}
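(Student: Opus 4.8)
The plan is to prove the two implications separately, using the fact that a bipartite graph is chordal bipartite precisely when it contains no hole. With that in mind, $\mathrm{(ii)}$ says that no member of the orbit $[G]$ contains a hole, while, by Bandelt and Mulder's forbidden-subgraph characterization (Theorem~\ref{thm:bdhchar}) together with the standing hypothesis that $G$ is chordal bipartite, $\mathrm{(i)}$ says precisely that $G$ is a BDH graph. So the real content is that a connected chordal bipartite graph $G$ is BDH if and only if every graph in its orbit is hole-free.

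For $\mathrm{(i)}\Rightarrow\mathrm{(ii)}$ I would argue as follows. Assume $G$ is chordal bipartite and domino-free, so $G$ is BDH by Theorem~\ref{thm:bdhchar}. By Theorem~\ref{thm:main}, $G$ is the fundamental graph $B_H(T)$ of a $2$-connected series--parallel graph $H$; in particular $M^{G}\cong M(H)$ is a connected graphic matroid. Let $G'\in[G]$. By Lemma~\ref{lem:matorbit}, $G'$ is the fundamental graph of some graph $H'$ that is $2$-isomorphic to $H$, so $M(H')\cong M(H)$ is connected and, $H$ being series--parallel, has no $M(K_4)$-minor. Hence $H'$ is again $2$-connected and series--parallel, and applying Theorem~\ref{thm:main} in the reverse direction, $G'$ is BDH, hence hole-free by Theorem~\ref{thm:bdhchar}, hence chordal bipartite. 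This proves $\mathrm{(ii)}$.

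For $\mathrm{(ii)}\Rightarrow\mathrm{(i)}$ I would argue by contraposition. Suppose $G$ is chordal bipartite and contains an induced domino $D$; set $U=V(D)$. Since $\boxminus$ and $C_6$ are the two fundamental graphs of $K_4$ (Figure~\ref{fi:fundgraph}), they lie in a common orbit, and in fact there is an edge $e\in E(D)$ with $D^{e}\cong C_6$ (this is also recorded in Lemma~\ref{lem:1}; see Figure~\ref{fi:lemma}). By Lemma~\ref{lem:0}, $G^{e}[U]\cong(G[U])^{e}=D^{e}\cong C_6$, so $G^{e}$ contains an induced $C_6$, i.e.\ a hole. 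Since $G^{e}\in[G]$ is therefore not chordal bipartite, $\mathrm{(ii)}$ fails.

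The step that I expect to require the most care is the assertion, used in $\mathrm{(i)}\Rightarrow\mathrm{(ii)}$, that a graph $2$-isomorphic to a $2$-connected series--parallel graph is itself $2$-connected and series--parallel. This is the place where the matroid bookkeeping must be carried out carefully: one uses that $M^{G}$ is a connected matroid, and (barring the trivial case $G=K_2$, where $[G]=\{G\}$ and both statements hold) has at least three elements, so that any graph realizing it is $2$-connected; and that a minor of a graphic matroid is realized by a graph minor while $K_4$ is the only graph whose cycle matroid is $M(K_4)$, so that the absence of an $M(K_4)$-minor passes to the absence of a $K_4$-minor, i.e.\ to series--parallelness by Dirac and Duffin. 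One should also verify the elementary fact, visible in Figure~\ref{fi:lemma} and in the proof of Lemma~\ref{lem:1}, that some edge of the domino pivots it into $C_6$. Everything else follows at once from Lemmas~\ref{lem:0}, \ref{lem:matorbit} and \ref{lem:1} and Theorems~\ref{thm:main} and~\ref{thm:bdhchar}.
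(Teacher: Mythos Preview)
Your argument is correct and follows exactly the route the paper intends: the text states that the corollary ``follows straightforwardly from Theorem~\ref{thm:main} after Theorem~\ref{thm:bdhchar}'' and gives no further details, so your write-up is precisely the fleshing-out of that sentence, using Lemma~\ref{lem:matorbit} for the forward direction and Lemmas~\ref{lem:0},~\ref{lem:1} for the contrapositive. One minor simplification: in $\mathrm{(i)}\Rightarrow\mathrm{(ii)}$ you need not pass through a $2$-isomorphic copy $H'$ at all, since pivoting $B_H(T)$ on an edge is literally a change of spanning tree of the \emph{same} $H$, so every $G'\in[G]$ is already $B_H(T')$ for some spanning tree $T'$ of $H$ and Theorem~\ref{thm:main} applies directly.
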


\begin{coro}\label{cor:main1} Let $G$ be a bipartite domino-free
graph. If $G$ is chordal, then so is any other graph in its orbit.
\end{coro}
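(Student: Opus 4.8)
The plan is to read the statement off from Theorem~\ref{thm:main}, combined with the forbidden--subgraph characterization of BDH graphs (Theorem~\ref{thm:bdhchar}) and the fact, packaged in Lemma~\ref{lem:matorbit}, that pivoting is a change of basis and hence leaves the associated graphic matroid unchanged. Throughout I read ``chordal'' as ``chordal bipartite'', i.e.\ hole--free, which is the only sensible reading for a bipartite graph.

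First a bookkeeping reduction to connected $G$. Pivoting on an edge $uv$ only toggles edges with both endpoints in the component containing $\{u,v\}$ and relabels $u,v$ inside it, so it acts componentwise; hence every graph in the orbit of $G=\bigsqcup_i G_i$ has the form $\bigsqcup_i\widetilde{G_i}$ with $\widetilde{G_i}$ in the orbit of $G_i$. Since ``domino--free'' and ``chordal bipartite'' are hereditary under induced subgraphs and are verified componentwise, it suffices to treat each connected component; so assume from now on that $G$ is connected.

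Now let $G$ be connected, bipartite, domino--free and hole--free. By Theorem~\ref{thm:bdhchar} this says precisely that $G$ is a BDH graph, so by Theorem~\ref{thm:main} there is a 2--connected series--parallel graph $H$ with $G\cong B_H(T)$ for some spanning tree $T$ of $H$; thus $M^G=M(H)$. Let $\widetilde{G}$ lie in the orbit of $G$. Since each pivot is a change of basis, $\widetilde{G}$ is again a fundamental graph of $M^G=M(H)$, and, $H$ being connected, these are exactly the graphs $B_H(T')$ over the spanning trees $T'$ of $H$ --- every one of them connected because $H$ is 2--connected (Section~\ref{sec:prel}). So $\widetilde{G}$ is a connected bipartite graph that is a fundamental graph of the 2--connected series--parallel graph $H$; by Theorem~\ref{thm:main} it is BDH, and by Theorem~\ref{thm:bdhchar} it contains neither a hole nor an induced domino. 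In particular $\widetilde{G}$ is chordal bipartite, which is what we want; the very same chain of implications simultaneously yields the direction $(i)\Rightarrow(ii)$ of Corollary~\ref{cor:main}.

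I do not expect a genuine obstacle: Theorem~\ref{thm:main} carries all the weight, and what remains is the remark that ``being a fundamental graph of a 2--connected series--parallel graph'' depends only on the associated cycle matroid --- hence is an orbit invariant --- together with the componentwise reduction above. The one point that deserves a word is that $\widetilde{G}$ stays connected, which follows from $M(H)$ being connected, equivalently from $B_H(T')$ being connected whenever $H$ is 2--connected.
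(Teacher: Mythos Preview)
Your argument is correct and follows exactly the route the paper indicates: the corollary is stated to follow ``straightforwardly from Theorem~\ref{thm:main} after Theorem~\ref{thm:bdhchar}'', and that is precisely what you do, with Lemma~\ref{lem:matorbit} supplying the orbit--invariance of the underlying matroid. The componentwise reduction and the remark on connectedness of $\widetilde{G}$ are welcome hygiene that the paper leaves implicit.
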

\subsection{BDH graphs and the interlace polynomial}\label{sec:stateart}
As already mentioned, Ellis-Monaghan and Sarmiento related series--parallel graphs and BDH graphs topologically, via the medial graph.\ Let $H$ be a plane (or even a 2-cell embedded graph in an oriented surface). For our purposes, we can assume that $H$ is 2--connected. The medial graph $m(H)$ of $H$ is the graph obtained as follows: first place a vertex $v_e$ into
the interior of every edge $e$ of $H$.\ Then, for each face $F$ of $H$, join $v_e$ to $v_f$ by an edge
lying in $F$ if and only if the edges $e$ and $f$ are consecutive on the boundary of $F$. Remark that if $F$ is bounded by a digon $\{e,e'\}$ then $e$ and $e'$ are consecutive twice. Let $m(H)$ be the plane (2-cell embedded) graph obtained in this way. The graph underlying $m(H)$ is the medial graph of $H$. The medial graph is clearly 4-regular, as each face creates two adjacencies for each edge on its boundary. Moreover, it can be oriented so that it becomes a 2-in,2-out graph.\ Given a 4-regular labeled graph $N$ and one of its Eulerian circuit $C$, we 
can associate with $N$ a double occurrence word $\mathbf{w}$ which is the word of the labels of the vertices cyclically met during the tour on $C$. The circle graph induced by $\mathbf{w}$ is called the \emph{the circle graph of $N$}.\ Ellis-Monaghan and Sarmiento, Building also on the relations between \emph{Martin polynomial} and the symmetric \emph{Tutte Polynomial}, proved the following relation between the symmetric Tutte polynomial $t(H;x,x)$ of a planar graph $H$ and the vertex nullity interlace polynomial $q(G;x)$ of a graph $G$ derived, as described in the theorem below, from the medial graph of any of its plane embedding.
\begin{theorem}[[Monhagan, Sarmiento, 2007 \cite{mosa}]\label{thm:mosa1}
	If $H$ is a plane embedding of a planar graph and $G$ is the circle graph of some
	Eulerian circuit of the medial graph of $H$, then $q(G;x)=t(H;x,x)$.
\end{theorem}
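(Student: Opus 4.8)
The plan is to derive this identity by factoring it through the \emph{Martin polynomial} (equivalently, the \emph{circuit partition polynomial}) of the medial graph. Write $\vec{m}(H)$ for $m(H)$ endowed with the canonical $2$--in--$2$--out orientation coming from the checkerboard colouring of the faces of $H$, and set $J(\vec{m}(H);x)=\sum_{\sigma}x^{c(\sigma)}$, where $\sigma$ ranges over the \emph{states} of $\vec{m}(H)$ (a choice at each vertex of one of the two orientation--consistent transitions) and $c(\sigma)$ is the number of directed circuits in the circuit partition determined by $\sigma$. I would then chain two classical bridges: \textbf{(A)} Martin's theorem (see \cite{mosa} and the references therein), which for a connected plane graph $H$ identifies $J(\vec{m}(H);x)$, up to an explicit normalisation, with the diagonal Tutte polynomial $t(H;x,x)$; and \textbf{(B)} the theorem of Arratia, Bollob\'as and Sorkin \cite{abs}, in the linear--algebraic form of Aigner and van der Holst (Theorem~\ref{thm:aighol}), which identifies the vertex--nullity interlace polynomial of the interlacement graph of a directed Eulerian circuit of a $2$--in--$2$--out digraph, up to the same normalisation, with $J$ of that digraph.

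For bridge \textbf{(B)}, fix a directed Eulerian circuit $C$ of $\vec{m}(H)$, form its double occurrence word $\mathbf{w}_C$, and let $I_C$ be the corresponding interlacement matrix over $GF(2)$, so that $I_C$ is precisely the adjacency matrix of $G=G_C$. The Cohn--Lempel identity then says that switching the transition of $C$ at exactly the vertices of a set $S\subseteq V(G)$ produces a circuit partition of $\vec{m}(H)$ with exactly $\operatorname{null}(I_C[S])+1$ circuits, $I_C[S]$ being the principal submatrix on $S$ and $\operatorname{null}$ the $GF(2)$--nullity. Combined with the nullity expansion of the interlace polynomial, $q(G;x)=\sum_{S\subseteq V(G)}(x-1)^{\operatorname{null}(I_C[S])}$, this gives
\[
(x-1)\,q(G_C;x)=\sum_{S\subseteq V(G)}(x-1)^{\,1+\operatorname{null}(I_C[S])}=\sum_{\sigma}(x-1)^{c(\sigma)}=J\bigl(\vec{m}(H);x-1\bigr),
\]
since, as $S$ runs over all subsets of $V(G)$, the $S$--switchings of $C$ run over \emph{all} states of $\vec{m}(H)$. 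The key observation is that the right--hand side no longer mentions $C$; hence $q(G_C;x)$ is the same polynomial for every directed Eulerian circuit $C$, which is exactly what allows the statement to quantify over an arbitrary Eulerian circuit of the medial graph.

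Finally, applying bridge \textbf{(A)} in the form $J(\vec{m}(H);x-1)=(x-1)\,t(H;x,x)$ and cancelling $(x-1)$ gives $q(G_C;x)=t(H;x,x)$. I expect the genuine content, and the main obstacle, to be bridge \textbf{(B)}: establishing the Cohn--Lempel circuit count and, with it, the independence of $q(G_C;x)$ from the chosen Eulerian circuit. Bridge \textbf{(A)} is classical Tutte--polynomial combinatorics --- a state sum over the spanning subgraphs of $H$, using that for the checkerboard orientation the two orientation--consistent transitions at a medial vertex are exactly the two planar ``non--crossing'' ones, so that the states of $\vec{m}(H)$ biject with the spanning subgraphs of $H$. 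The only remaining points needing care are the bookkeeping of the normalisation constants relating $J$, $q$ and $t$, and the hypothesis that $H$ be connected --- automatic here, since throughout the paper $H$ is $2$--connected, so $m(H)$ is connected and Martin's identity applies.
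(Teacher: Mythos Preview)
The paper does not give its own proof of this theorem; it is quoted verbatim from \cite{mosa} and used as a black box in Section~\ref{sec:stateart}. So there is nothing in the present paper to compare your argument against.

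That said, your sketch is correct and is exactly the route taken in \cite{mosa}: factor through the circuit--partition (Martin) polynomial of the oriented medial graph, invoke Martin's identity $J(\vec{m}(H);x-1)=(x-1)\,t(H;x,x)$ for bridge~(A), and use the Cohn--Lempel nullity count together with the expansion $q(G;x)=\sum_{S}(x-1)^{\operatorname{null}(I_C[S])}$ for bridge~(B). Your observation that the right--hand side is independent of the choice of Eulerian circuit $C$ is precisely the point that makes the statement well posed. The only cosmetic remark is that you cite Theorem~\ref{thm:aighol} for bridge~(B); strictly speaking Aigner--van der Holst gives the nullity expansion of $q$ for arbitrary graphs (not just bipartite ones), and it is that general form you are using, while the Cohn--Lempel count is separate and due to \cite{abs} and predecessors.
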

The results was then specialized so as to give following characterization of BDH graphs.
\begin{theorem}[[Monhagan, Sarmiento, 2007 \cite{mosa}]\label{thm:mosabis}
	$G$ is a BDH graph with at least two vertices if and only if
	it is the circle graph of an Euler circuit in the medial graph of a plane embedding of a series--parallel
	graph $H$.
\end{theorem}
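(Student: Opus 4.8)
The plan is to obtain Theorem~\ref{thm:mosabis} from Theorem~\ref{thm:main} by routing it through the classical correspondence, due to de~Fraysseix \cite{defrai} and implicit in \cite{mosa}, between fundamental graphs of a plane graph and interlacement graphs of Euler circuits of its medial graph. Two ``dictionary'' facts are needed. \emph{(D1)} For a connected plane graph $H$ and a spanning tree $T$ there is a distinguished Euler circuit $C_T$ of the directed ($2$-in-$2$-out) medial graph $m(H)$ --- the one whose transition at the medial vertex $v_e$ is chosen according to whether $e\in T$ or $e\in\overline{T}$, equivalently the circuit that follows $T$ on the primal side of $m(H)$ and $\overline{T}$ on the dual side --- whose double occurrence word, read over $E(H)=T\cup\overline{T}$, has interlacement graph isomorphic to $B_H(T)$. (That $C_T$ is a single circuit is the standard fact that a tree/cotree pair induces an Euler circuit of the medial graph; the entries of Table~1 should be used to check that this construction is consistent with Theorem~\ref{thm:main}.) \emph{(D2)} Passing from one directed Euler circuit of $m(H)$ to another corresponds, on the circle--graph side, to a single edge--pivot; hence the interlacement graphs of all directed Euler circuits of $m(H)$ form exactly one orbit $[B_H(T)]$ in the sense of Section~\ref{sec:piv}, which by Lemma~\ref{lem:matorbit} is precisely the set of fundamental graphs of the $2$--isomorphic copies of $H$. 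I would first isolate and prove (or carefully cite) (D1) and (D2); then the two implications are short.

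For the \emph{only if} part, let $G$ be a BDH graph with at least two vertices. Since BDH graphs are connected, $G$ has an edge, and Theorem~\ref{thm:main} gives $G\cong B_H(T)$ for a $2$--connected (hence planar) series--parallel graph $H$ and a spanning tree $T$. Fixing a plane embedding of $H$, fact (D1) exhibits $G$ as the interlacement graph of the Euler circuit $C_T$ of $m(H)$, which is exactly what is claimed. For the \emph{if} part, let $G$ be the circle graph of a directed Euler circuit $C$ of $m(H)$ for a $2$--connected plane series--parallel graph $H$, and fix any spanning tree $T$. By (D2), $G\in[B_H(T)]$, so by Lemma~\ref{lem:matorbit} $G$ is a fundamental graph of some $2$--isomorphic copy $H'$ of $H$; and $H'$ is again a $2$--connected series--parallel graph (a $2$--isomorph of a $2$--connected series--parallel graph is again $2$--connected and series--parallel). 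Hence $G$ is BDH by Theorem~\ref{thm:main}, and it has $|E(H)|\ge 2$ vertices. Equivalently, once $G\sim B_H(T)$ is known one may simply invoke that the class of BDH graphs is closed under pivoting, as recorded just after Theorem~\ref{thm:bdhchar}.

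The work is concentrated entirely in the dictionary, and the main obstacle is to pin down which Euler circuits of $m(H)$ are admissible: arbitrary (undirected) Euler circuits are too permissive, since already for $H=K_3$ there is an Euler circuit of $m(K_3)$ whose interlacement graph is a triangle --- not bipartite, let alone BDH --- whereas every \emph{directed} Euler circuit of $m(K_3)$ yields $P_3\cong B_{K_3}(T)$. So one must work throughout with the $2$-in-$2$-out orientation and verify that this is the reading intended in \cite{mosa} and in Theorem~\ref{thm:mosa1}. Proving (D2) in the sharp form above --- the circle--graph counterpart, in the bipartite/directed setting, of the fact that all fundamental graphs of a binary matroid lie in a single pivot--orbit --- is the Bouchet--style heart of the matter, and the combinatorial topology of $m(H)$ (its faces are the vertices together with the faces of $H$, and a digon contributes a doubled adjacency) is where the bookkeeping must be done with care. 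A more computational route is also available --- combine Theorem~\ref{thm:mosa1} with Brylawski's characterisation of $2$--connected series--parallel graphs by the coefficient of $x$ in the Tutte polynomial and with Aigner and van der Holst's characterisation of BDH graphs by the $\gamma$--invariant (Theorem~\ref{thm:aighol}), via $q(G;x)=t(H;x,x)$ --- but it requires the extra input that the circle graph in question is connected and bipartite before those invariant characterisations can be applied.
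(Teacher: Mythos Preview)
The paper does not contain a proof of this statement: Theorem~\ref{thm:mosabis} is quoted from Ellis--Monaghan and Sarmiento \cite{mosa} and used as an external input (for instance in the appendix, where it is invoked to give a second proof of Theorem~\ref{thm:main}). So there is no ``paper's own proof'' to compare your proposal against.

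That said, your proposal is a genuine and essentially correct derivation of Theorem~\ref{thm:mosabis} from Theorem~\ref{thm:main}, and it is worth recording how it relates to the paper's logic. The appendix runs the implication the other way---it uses Theorem~\ref{thm:mosabis} (together with Lemma~\ref{th:abu}) to reprove Theorem~\ref{thm:main}---so your argument shows that, modulo the medial--graph dictionary, the two theorems are in fact equivalent. Your (D2) can even be sharpened: for a $2$--connected plane graph $H$, every \emph{directed} Euler circuit of $m(H)$ is of the form $C_T$ for a spanning tree $T$ (the transition chosen at $v_e$ records whether $e\in T$ or $e\in\overline{T}$, and the single--circuit condition is exactly that $T$ is a spanning tree). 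Hence the interlacement graph of any directed Euler circuit is literally some $B_H(T)$, and no separate pivot--orbit argument is needed; Theorem~\ref{thm:main} finishes the ``if'' direction immediately.

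Two caveats to keep explicit. First, your observation about undirected Euler circuits is essential and not a side remark: without the $2$--in--$2$--out restriction the ``if'' direction is false, so you must state and justify that the Euler circuits in \cite{mosa} are the directed ones. Second, Theorem~\ref{thm:main} is stated for \emph{connected} $G$ and \emph{$2$--connected} $H$, while Theorem~\ref{thm:mosabis} as quoted does not impose $2$--connectedness on $H$; you silently added this hypothesis in the ``if'' part. Either argue that the medial--graph construction in \cite{mosa} already presupposes $2$--connectedness (which it effectively does), or handle the general case block by block.
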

Using Theorem~\ref{thm:mosa1} and Theorem~\ref{thm:mosabis}, the authors deduced the following consequences stated below as Corollary~\ref{coro:mosa0}, Corollary~\ref{coro:mosai} and Corollary~\ref{coro:mosaii}.
\begin{coro}\label{coro:mosa0}
	The vertex-nullity interlace polynomial is \#P-hard in general.
\end{coro}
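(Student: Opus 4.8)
The plan is to exhibit a polynomial-time reduction from the evaluation of the diagonal of the Tutte polynomial of a planar graph --- a problem known to be \#P-hard --- to the evaluation of the vertex-nullity interlace polynomial, using Theorem~\ref{thm:mosa1} as the bridge. Recall Vertigan's dichotomy for the Tutte polynomial of planar graphs: on planar input the evaluation $H\mapsto t(H;x,y)$ is polynomial-time computable when $(x-1)(y-1)\in\{1,2\}$ or at a handful of exceptional points, and is \#P-hard everywhere else. The diagonal line $x=y$ meets this ``easy'' locus in only finitely many points, so there is a fixed rational $\xi$ (in fact all but finitely many) for which computing $t(H;\xi,\xi)$ is \#P-hard already when $H$ ranges over planar graphs, and one may take these $H$ to be $2$-connected, since the hardness persists under that restriction. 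This is the source of hardness I would use.

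First I would set up the reduction furnished by Theorem~\ref{thm:mosa1}. Given a $2$-connected planar graph $H$, fix a plane embedding (linear time) and form the medial graph $m(H)$, which is $4$-regular with $|E(H)|$ vertices and is computed in polynomial time from the embedding. Being connected and $4$-regular, $m(H)$ is Eulerian; an Eulerian circuit $C$ is produced in polynomial time by the standard algorithm. Labelling $V(m(H))$ and recording the cyclic sequence of labels met while traversing $C$ yields a double occurrence word $\mathbf{w}$, whose interlacement graph $G$ --- the circle graph of $C$ --- is constructed in polynomial time. By Theorem~\ref{thm:mosa1}, $q(G;x)=t(H;x,x)$ as polynomials, so in particular $q(G;\xi)=t(H;\xi,\xi)$.

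Putting the two steps together, an algorithm evaluating the vertex-nullity interlace polynomial at $\xi$ on arbitrary graphs would, composed with the construction above, evaluate $t(\cdot;\xi,\xi)$ on arbitrary $2$-connected planar graphs in polynomial time, contradicting \#P-hardness; hence $q(\cdot;\xi)$ is \#P-hard in general. Since the graphs produced by the reduction are themselves circle graphs, this in fact yields the sharper statement that $q$ is already \#P-hard on circle graphs. I expect no genuinely hard step here: the substantive ingredient, the polynomial identity $q(G;x)=t(H;x,x)$, is already available as Theorem~\ref{thm:mosa1}, and the remaining work is the routine bookkeeping of checking that each stage of the reduction (the embedding, the medial graph, the Eulerian circuit, the passage to the interlacement graph) runs in polynomial time and respects the required size bounds, together with pinning down one concrete diagonal point $\xi$ at which the planar Tutte polynomial is provably \#P-hard so that the hardness transfers cleanly.
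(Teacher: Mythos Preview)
Your proposal is correct and follows exactly the intended route: the paper does not spell out a proof of Corollary~\ref{coro:mosa0} but simply records it as a consequence (due to Ellis-Monaghan and Sarmiento) of Theorem~\ref{thm:mosa1} together with the known \#P-hardness of the diagonal Tutte polynomial on planar graphs. Your write-up supplies precisely that reduction, with the polynomial-time construction $H\mapsto m(H)\mapsto C\mapsto G$ made explicit, so nothing further is needed.
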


\begin{coro}\label{coro:mosai}
	If $G$ is a BDH graph, then $q(G;x)$ is polynomial-time computable.
\end{coro}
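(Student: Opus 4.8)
The plan is to combine Theorem~\ref{thm:main} with the known fact that the vertex-nullity interlace polynomial can be computed efficiently on graphs whose associated binary matroid is graphic and of bounded branch-width (equivalently, for our purposes, on fundamental graphs of series--parallel graphs). Concretely: by Theorem~\ref{thm:main}, a BDH graph $G$ is the fundamental graph $B_H(T)$ of some $2$--connected series--parallel graph $H$ with respect to a spanning tree $T$. Series--parallel graphs have branch-width at most $2$ (indeed, treewidth at most $2$), and this bound is inherited by the cycle matroid $M(H)$; since $M^G \cong M(H)$, the matroid $M^G$ has branch-width at most $2$ as well. The interlace polynomial $q(G;x)$ is, by the Arratia--Bollob\'as--Sorkin interpretation and the work relating it to the Tutte polynomial, expressible through a matroid/graph parameter that admits a dynamic-programming evaluation along a branch-decomposition of bounded width. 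Hence $q(G;x)$ is computable in polynomial (indeed linear) time.

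First I would make the reduction precise: given $G$, use the algorithmic content of Bandelt and Mulder's construction to recover in polynomial time a sequence of pendant/twin additions building $G$, and via Table~1 translate this into a Duffin construction of a series--parallel graph $H$ together with the spanning tree $T$ witnessing $G \cong B_H(T)$. Second, I would invoke (or sketch) the recursive structure of series--parallel graphs: $H$ is obtained from smaller series--parallel graphs by series and parallel composition at terminal pairs, and each such composition corresponds, on the fundamental-graph side, to a controlled gluing operation on $G$ along the edge of $T$ being extended. Third, I would set up the dynamic program: maintain, for each subgraph arising in the SP-decomposition, a bounded-size table recording the contributions to $q(\cdot;x)$ of all "boundary states" at the two terminals (there are only constantly many, since the terminal pair has size $2$), and show the tables combine correctly under series and parallel composition. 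Finally, read off $q(G;x)$ from the table at the root.

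The main obstacle I anticipate is the third step — verifying that the interlace polynomial genuinely localizes along an SP-decomposition. Unlike the Tutte polynomial, whose behaviour under $2$-sums is classical, the interlace polynomial is defined via nullities of principal submatrices of the adjacency matrix over $GF(2)$, so one must check that "cutting" $G$ along the pivot structure induced by a terminal pair of $H$ affects these nullities in a way captured by finitely many states. This is where Theorem~\ref{thm:mosa1} is doing real work: it identifies $q(G;x)$ with $t(H;x,x)$, and the evaluation $t(H;x,x)$ of the Tutte polynomial of a series--parallel graph along the diagonal is the textbook example of a quantity computable by series--parallel reduction (each series or parallel reduction multiplies by an explicit factor and updates $x$ via a M\"obius-type rule). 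So in fact the cleanest route is: reduce to $t(H;x,x)$ via Theorem~\ref{thm:mosa1} and Theorem~\ref{thm:mosabis}, then apply the classical linear-time series--parallel evaluation of the Tutte polynomial; the potential subtlety there is only bookkeeping the parallel classes (digons) correctly, since $H$ may be a multigraph, but this is routine. I would present the proof in this second form, as it is shortest and uses exactly the results already available in the excerpt.
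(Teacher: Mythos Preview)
Your final route---reduce $q(G;x)$ to $t(H;x,x)$ via Theorem~\ref{thm:mosa1} and Theorem~\ref{thm:mosabis}, then invoke the polynomial-time computability of the Tutte polynomial on series--parallel graphs~\cite{OW}---is correct, but it is exactly the Ellis-Monaghan--Sarmiento argument that the paper is \emph{re}-proving, and it does not actually use Theorem~\ref{thm:main} at all (your opening appeal to it becomes idle once you switch to Theorems~\ref{thm:mosa1} and~\ref{thm:mosabis}). The paper's own argument is shorter and avoids medial graphs entirely: by Theorem~\ref{thm:main}, a BDH graph $G$ is a fundamental graph of a series--parallel graph $H$, so $M^G\cong M(H)$; then Theorem~\ref{thm:aighol} (Aigner--van der Holst) gives $q(G;x)=t(M^G;x,x)=t(H;x,x)$ directly, and~\cite{OW} finishes. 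The ingredient you are missing is precisely Theorem~\ref{thm:aighol}, which supplies the identification $q(G;x)=t(M^G;x,x)$ for bipartite $G$ without any topological detour; with it, your entire discussion of branch-width, dynamic programming, and localisation of $q$ along SP-decompositions becomes unnecessary. What the paper's route buys is that the passage from $G$ to $H$ is purely matroidal (via Theorem~\ref{thm:main}), whereas your route goes through the circle-graph/medial-graph correspondence; both land on the same Tutte evaluation, but the former is the intended payoff of Theorem~\ref{thm:main}.
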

Corollary~\ref{coro:mosai} follows because the Tutte polynomial is polynomial-time computable for series--parallel graphs~\cite{OW}. 
\begin{coro}\label{coro:mosaii}
	A connected graph $G$ is a BDH graph if and only if the coefficient of the linear term of $q(G;x)$ equals 2.
\end{coro}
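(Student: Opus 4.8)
The plan is to transport the statement, through Theorem~\ref{thm:main}, from the world of BDH graphs to that of $2$--connected series--parallel graphs, where the coefficient of the linear term of the interlace polynomial (the $\gamma$--invariant, $\gamma(G)$) is controlled by Crapo's $\beta$--invariant. \emph{Step 1: a polynomial dictionary.} For a bipartite graph $G$, let $M^G$ be the binary matroid generated by its reduced adjacency matrix, and use the identity $q(G;x)=t(M^G;x,x)$, i.e.\ the diagonal evaluation of the Tutte polynomial of $M^G$. This is essentially the Aigner--van der Holst result quoted as Theorem~\ref{thm:aighol}; it can also be recovered from Theorem~\ref{thm:mosa1} by the classical identification of the fundamental graph $B_H(T)$ of a plane graph $H$ with the interlacement graph of a suitable Euler tour of the medial graph $m(H)$. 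Writing $t(M^G;x,y)=\sum_{i,j}t_{ij}x^{i}y^{j}$ with $t_{ij}\ge 0$ and setting $y=x$, the coefficient of $x$ in $q(G;x)$ is $t_{10}(M^G)+t_{01}(M^G)$. Since $t_{10}(M)=\beta(M)$, and since $t(M^{*};x,y)=t(M;y,x)$ together with Crapo's self--duality $\beta(M)=\beta(M^{*})$ give $t_{01}(M)=\beta(M)$, I conclude that $\gamma(G)=2\,\beta(M^G)$.

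\emph{Step 2: $\beta=1$ is series--parallel, and the bipartite case.} I would then invoke Brylawski's theorem: a matroid $M$ with at least two elements has $\beta(M)\ge 0$, with $\beta(M)>0$ iff $M$ is connected and $\beta(M)=1$ iff $M$ is a series--parallel matroid --- i.e.\ has neither a $U_{2,4}$-- nor an $M(K_4)$--minor --- and such an $M$ is automatically graphic, hence the cycle matroid of a $2$--connected series--parallel graph. Now, if $G$ is a connected BDH graph on at least two vertices, Theorem~\ref{thm:main} realizes $G$ as $B_H(T)$ for a $2$--connected series--parallel $H$, so $\beta(M^G)=\beta(M(H))=1$ and $\gamma(G)=2$. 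Conversely, if $G$ is bipartite, connected, on at least two vertices, and $\gamma(G)=2$, then $\beta(M^G)=1$, so $M^G$ is the cycle matroid of a $2$--connected series--parallel graph $H$ with $G\cong B_H(T)$; Theorem~\ref{thm:main} then makes $G$ a BDH graph. (A matroid--free proof of the forward direction is also available, directly along Bandelt and Mulder's construction: a short computation gives $q(G';x)=q(G;x)+x\,q(G-u;x)$ when $G'$ adds a pendant vertex at $u$, and $q(G'';x)=(1+x)q(G;x)-x\,q(G-u;x)$ when $G''$ adds a twin of $u$; since the constant term of $q$ of a non-empty graph vanishes, both operations preserve the linear coefficient, which equals $2$ for $K_2$.)

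\emph{Step 3: the non--bipartite case --- and the obstacle.} What remains, and what I expect to be the hard part, is to rule out a connected non--bipartite $G$ with $\gamma(G)=2$: here $M^G$ is not defined and the identity of Step~1 is not literally available. I would argue contrapositively: a connected non--bipartite graph contains a chordless odd cycle $C_{2k+1}$ as an induced subgraph, and one computes $\gamma(C_3)=4$, $\gamma(C_5)=6$, and, in general, $\gamma(C_{2k+1})\ge 4$ (for instance by relating $q(C_{2k+1};x)$ to the interlace polynomial of the hole $C_{2(2k+1)}$, the bipartite double cover of $C_{2k+1}$, whose associated matroid is not series--parallel). The genuinely missing ingredient --- the main obstacle --- is then the monotonicity of $\gamma$ under passing to induced subgraphs, equivalently the coefficient--wise non--negativity of $q(G;x)-q(G-v;x)$: with that in hand, $\gamma(G)\ge\gamma(C_{2k+1})\ge 4>2$ and the contrapositive is complete; lacking it, one must lean on the full strength of Theorem~\ref{thm:aighol} to conclude that $\gamma(G)=2$ already forces $G$ bipartite, after which Step~2 closes the proof.
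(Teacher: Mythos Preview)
Your Steps~1 and~2 are exactly the paper's argument: it invokes Theorem~\ref{thm:main} together with Aigner--van der Holst (Theorem~\ref{thm:aighol}) to obtain Corollary~\ref{coro:main}, and combines this with Brylawski's $\beta=1$ characterization of series--parallel networks (stated just before Corollary~\ref{coro:mosaii}) to conclude. Indeed the paper says nothing beyond ``re-proves directly,'' so your derivation of $\gamma(G)=t_{10}+t_{01}=2\beta(M^G)$ and the appeal to Brylawski make the argument explicit where the paper leaves it implicit. Your parenthetical matroid-free proof of the forward direction via the pendant/twin recurrences is a pleasant extra not in the paper.

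Your Step~3 identifies a genuine issue that the paper's re-proof also glosses over: Theorem~\ref{thm:aighol} applies only to bipartite $G$, so the converse ``$\gamma(G)=2\Rightarrow G$ is BDH'' is not covered for non-bipartite $G$ by the Aigner--van der Holst route. The paper does not fill this gap; it simply attributes the full statement to Ellis-Monaghan and Sarmiento, whose original proof goes through Theorems~\ref{thm:mosa1} and~\ref{thm:mosabis} (the medial-graph construction) and does not require $G$ bipartite a priori. Your proposed workaround via monotonicity of $\gamma$ under taking induced subgraphs is, as you acknowledge, unsupported: no such coefficientwise inequality $q(G;x)\ge q(G-v;x)$ is available for the interlace polynomial in general, so that line does not close the argument. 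In short, your proof matches the paper's for the bipartite part, and both leave the non-bipartite converse to \cite{mosa}.
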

The latter coefficient referred to in Corollary~\ref{coro:mosaii}, denoted by $\gamma(G)$, is called the $\gamma$-invariant of $G$ in analogy with the Crapo invariant $\beta(G)$ which is the common value of the coefficients of the linear terms of $t(G;x,y)$ where $G$ has at least two edges. By a result due to Brylawski \cite{bry} (in the more general context of
matroids) series--parallel graphs can be characterized by the value of the Crapo invariant as follows: a graph $G$ is a series--parallel graph if and only if $\beta(G)=1$. Both the corollaries above can be deduced directly by Theorem~\ref{thm:main} after the following result due to Aigner and van der Holst \cite{aighol}.  
\begin{theorem}[Aigner, van der Holst, 2004 \cite{aighol}]\label{thm:aighol}
	If $G$ is a bipartite graph, then
	$$q(G; x)= t(M^G;x, x)$$
	where $M^G$ is the binary matroid generated by the reduced adjacency matrix of $G$ and $t(M^G;x, x)$ is the Tutte polynomial of $M^G$.
\end{theorem}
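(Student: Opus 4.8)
\noindent\emph{Proof proposal.} The plan is to rewrite both sides as sums over subsets and match them term by term. On the interlace side I would use the state--sum expansion
\[
q(G;x)=\sum_{S\subseteq V(G)}(x-1)^{n(G[S])},
\]
where $n(G[S])$ is the $GF(2)$--nullity of the adjacency matrix of $G[S]$; this may be taken as the definition of $q$, or derived by a short induction from the Arratia--Bollob\'{a}s--Sorkin pivot recurrence $q(G)=q(G-a)+q(G^{ab}-b)$ together with $q(\overline{K_n})=x^{n}$. On the matroid side I would use the corank--nullity (Whitney rank) expansion of the Tutte polynomial, $t(M;x,y)=\sum_{W\subseteq E}(x-1)^{r(E)-r(W)}(y-1)^{|W|-r(W)}$, which at $y=x$ reads
\[
t(M^{G};x,x)=\sum_{W\subseteq E(M^{G})}(x-1)^{\,r(E)-2r(W)+|W|}.
\]
It then suffices to produce a bijection between subsets of $V(G)$ and subsets of $E(M^{G})$ carrying $n(G[S])$ to $r(E)-2r(W)+|W|$.

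To set this up, write the color classes of $G$ as $A$ and $B$, let $\mathsf{A}$ be its reduced adjacency matrix, and write $\rho(R,C)$ for the $GF(2)$--rank of the submatrix $\mathsf{A}[R,C]$. Recall $M^{G}=M([\,\mathsf{I}\,|\,\mathsf{A}\,])$ with the identity block indexed by $A$, and identify $E(M^{G})$ with $A\cup B$ accordingly, so that $r(E)=|A|$. Two elementary facts then do the work. (i) Since $G$ is bipartite, the $GF(2)$--adjacency matrix of $G[S]$ is $\left(\begin{smallmatrix}0 & \mathsf{A}[S\cap A,\,S\cap B]\\ \mathsf{A}[S\cap A,\,S\cap B]^{T} & 0\end{smallmatrix}\right)$; the rank of any matrix of the shape $\left(\begin{smallmatrix}0 & B\\ B^{T} & 0\end{smallmatrix}\right)$ is twice the rank of $B$ over an arbitrary field, so $n(G[S])=|S|-2\rho(S\cap A,\,S\cap B)$. (ii) For $W=W_{A}\cup W_{B}\subseteq A\cup B$, the columns of $[\,\mathsf{I}\,|\,\mathsf{A}\,]$ indexed by $W$ are the unit vectors $e_{i}$ ($i\in W_{A}$) together with the columns of $\mathsf{A}$ indexed by $W_{B}$; passing to the quotient by the coordinate subspace $\langle e_{i}:i\in W_{A}\rangle$ shows the latter contribute exactly $\rho(A\setminus W_{A},\,W_{B})$ more, so $r_{M^{G}}(W)=|W_{A}|+\rho(A\setminus W_{A},\,W_{B})$.

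Combining (ii) with $r(E)=|A|$ and $|W|=|W_{A}|+|W_{B}|$, the matroid exponent simplifies to $r(E)-2r(W)+|W|=|A\setminus W_{A}|+|W_{B}|-2\rho(A\setminus W_{A},\,W_{B})$. Now take the bijection $W\mapsto S$ given by $S\cap A:=A\setminus W_{A}$ and $S\cap B:=W_{B}$: as $W$ ranges over all subsets of $A\cup B$, $S$ ranges over all subsets of $V(G)$, and by (i) the right--hand side above is precisely $n(G[S])$. Thus the two subset expansions agree term by term and $q(G;x)=t(M^{G};x,x)$.

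The linear algebra in (i) and (ii) should be routine; the substantive ingredient is the state--sum formula for $q$. If one wants a self--contained argument, the main obstacle is to prove that formula, i.e.\ that $\sum_{S}(x-1)^{n(G[S])}$ obeys the pivot recurrence. I would attack this by splitting the sum according to whether $S$ meets $\{a,b\}$ in $\emptyset$, $\{a\}$, $\{b\}$, or $\{a,b\}$, and comparing the four partial sums with the corresponding sums for $G-a$ and for $G^{ab}-b$; the delicate point is the last case, where one must show that the principal pivot transform relating $G$ and $G^{ab}$ preserves the $GF(2)$--ranks of the principal submatrices indexed by sets containing both $a$ and $b$. That rank bookkeeping, rather than the matroid identity itself, is where the care is needed.
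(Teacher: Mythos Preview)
The paper does not prove Theorem~\ref{thm:aighol}; it is quoted as a result of Aigner and van der Holst and merely cited from~\cite{aighol}. So there is no proof in the paper to compare your attempt against.

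That said, your argument is correct and is in fact essentially the original Aigner--van der Holst proof: expand $q$ via the state--sum $q(G;x)=\sum_{S}(x-1)^{n(G[S])}$, expand $t(M^{G};x,x)$ via the corank--nullity formula, and match the two via the bijection $W\leftrightarrow S$ with $S\cap A=A\setminus W_{A}$, $S\cap B=W_{B}$. Your computations in (i) and (ii) are the right ones, and the exponent bookkeeping checks out. Your closing caveat is also accurate: the only nontrivial input is the state--sum formula for $q$, whose proof (via the pivot recurrence and the rank behaviour of principal pivot transforms) is where the real work lies; once that is granted, the identity with the Tutte polynomial is the short linear--algebra exercise you sketched.
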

Theorem~\ref{thm:main} and Theorem~\ref{thm:aighol} have the following straightforward consequence which re-proves directly Corollary~\ref{coro:mosai} and Corollary~\ref{coro:mosaii}.
\begin{coro}\label{coro:main}
	If $G$ is BDH graph, then
	$$q(G; x)= t(H;x, x)$$
	for some series--parallel graph $H$ having $G$ as fundamental graph and where $t(H;x, x)$ is the Tutte polynomial of $H$, namely the Tutte polynomial of the cycle matroid of $H$.
\end{coro}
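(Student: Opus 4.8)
The plan is to obtain the identity by composing Theorem~\ref{thm:main} with Theorem~\ref{thm:aighol}; no new idea is needed, only a short identification of the abstract binary matroid appearing in Aigner and van der Holst's theorem with the cycle matroid produced by Theorem~\ref{thm:main}.

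In detail, I would argue as follows. Let $G$ be a BDH graph (with at least two vertices; the one--vertex case is vacuous). Since BDH graphs are connected, Theorem~\ref{thm:main} applies and provides a $2$--connected series--parallel graph $H$ together with a spanning tree $T$ such that $G\cong B_H(T)$. By the correspondence between fundamental graphs and partial representations recalled in Section~\ref{sec:prel}, the reduced adjacency matrix $\mathsf{A}$ of $G$ (with the color class $T$ indexing its columns) is a partial representation of the cycle matroid $M(H)$, so that $M^G\cong M([\,\mathsf{I}\,|\,\mathsf{A}\,])\cong M(H)$. Applying Theorem~\ref{thm:aighol} to the bipartite graph $G$ gives $q(G;x)=t(M^G;x,x)$, and since the Tutte polynomial is an isomorphism invariant of matroids and $M^G\cong M(H)$, we conclude $q(G;x)=t(M(H);x,x)=t(H;x,x)$, the last equality being the very definition of the Tutte polynomial of the graph $H$. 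As $H$ is series--parallel and admits $G$ as a fundamental graph, this is precisely the assertion.

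There is essentially no obstacle: the whole content is carried by the two cited theorems, and the proof reduces to a two--line chain of equalities. The only point deserving attention --- and it is already settled by the preliminaries --- is the bookkeeping identifying the matroid $M^G$ of Theorem~\ref{thm:aighol} with the cycle matroid $M(H)$ for which, by Theorem~\ref{thm:main}, $G$ is a fundamental graph; once that identification is spelled out, the corollary is immediate.
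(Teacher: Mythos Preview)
Your proposal is correct and is precisely the argument the paper has in mind: the corollary is stated as a ``straightforward consequence'' of Theorem~\ref{thm:main} and Theorem~\ref{thm:aighol}, and you have simply spelled out the identification $M^G\cong M(H)$ that makes the chain $q(G;x)=t(M^G;x,x)=t(H;x,x)$ go through.
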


\subsection{Characterizing series--parallel graphs by DFS-trees}\label{sec:dfs}
As credited by Syslo \cite{syslo}, Shinoda, Chen, Yasuda,
Kajitani, and Mayeda, proved that series--parallel graphs can be
completely characterized as in Theorem~\ref{thm:syslo} by a property of their spanning trees, and Syslo himself gave a constructive algorithmic proof of the result~\cite{syslo}. 
\begin{theorem}[S. Shinoda et al., 1981; Syslo, 1984]\label{thm:syslo}
Every spanning tree of a connected graph $H$ is a DFS-tree of one of its 2--isomorphic copies if and only if $H$ is a series--parallel graph.
\end{theorem}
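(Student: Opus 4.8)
The plan is to transfer the statement to the fundamental graph $B_H(T)$ and exploit Theorem~\ref{thm:main} together with the dictionary of Table~1 used in its proof. I first reduce to $H$ being $2$--connected by a standard block argument: series--parallelity, the property of being a DFS--tree, and the relevant $2$--isomorphisms all decompose over the blocks, so a spanning tree of $H$ is a DFS--tree of some $2$--isomorphic copy of $H$ if and only if its trace on every block has the analogous property in that block. I then recall the classical characterization of DFS--trees: a spanning tree $T$ of a connected graph is a DFS--tree precisely when $T$ can be rooted so that the two endpoints of every non--tree edge are comparable in the resulting rooted tree --- equivalently, orienting $T$ away from the root and each non--tree edge so as to close its fundamental cycle turns every fundamental cycle into a directed circuit. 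Since $2$--isomorphic graphs share the cycle matroid, hence the family of fundamental circuits $C(b,T)$, the phrase ``$T$ is a DFS--tree of \emph{some} $2$--isomorphic copy of $H$'' becomes: the basis $T$ of $M(H)$ admits a rooted--tree structure, realized by \emph{some} graph with cycle matroid $M(H)$, under which every fundamental path becomes a descending chain.

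The crux is a two--sided link with $B_H(T)$. \emph{Claim (A):} if $B_H(T)$ is BDH, then $T$ is a DFS--tree of some $2$--isomorphic copy of $H$. \emph{Claim (B):} if $B_H(T)$ contains an induced hole, then $T$ is a DFS--tree of no $2$--isomorphic copy of $H$. For (A) I would run the Bandelt and Mulder construction of the BDH graph $B_H(T)$ in lockstep with a Duffin construction of a graph $H'$, through the correspondence of Table~1, while maintaining at each stage a rooting of the partial spanning tree that makes it a DFS--tree of the partial graph. Three of the four transitions are immediate: a parallel extension (on a tree edge, or on a co--tree edge) leaves the tree untouched and only creates one further non--tree edge, joining a parent to a child (resp.\ an ancestor to a descendant); and a series extension on a tree edge inserts a new vertex in the middle of a tree path, altering no ancestor relation among the old vertices. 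The one subtle case is a series extension of a co--tree edge $b=xy$ (equivalently, adding a pending vertex into the colour class $T$): there one must attach the new degree--two vertex to whichever endpoint of $b$ is the \emph{descendant} under the current rooting; a direct check then shows that this still realizes the prescribed Bandelt--Mulder step (the subdivided co--tree edge takes over the role of $b$, and the new tree edge is pendant to it in the fundamental graph) while the newly created non--tree edge lands as a back edge. This is precisely the point at which one is forced to pass to a $2$--isomorphic copy rather than remain inside $H$. For (B), assume $B_H(T)$ has an induced hole $a_1b_1a_2b_2\cdots a_kb_k$ with $k\ge 3$, $a_i\in T$, $b_i\in\overline T$, and that $T$ is a DFS--tree of a $2$--isomorphic copy under a rooting $\rho$; let $v_i$ be the lower end of the tree edge $a_i$. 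Each back edge $b_i$ has a descending fundamental path on which both $v_i$ and $v_{i+1}$ lie and no other $v_j$ does, so the tree path between $v_i$ and $v_{i+1}$ avoids all $v_j$ with $j\notin\{i,i+1\}$. Choosing $i_0$ with $v_{i_0}$ of maximum depth forces $v_{i_0-1}$ and $v_{i_0+1}$ to be ancestors of $v_{i_0}$, hence mutually comparable; but then one of them lies on the tree path between $v_{i_0}$ and the other, contradicting the previous sentence (the indices involved are distinct since $k\ge 3$).

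With (A) and (B) the theorem follows. If $H$ is $2$--connected series--parallel, then $B_H(T)$ is the fundamental graph of a $2$--connected series--parallel graph, hence BDH by Theorem~\ref{thm:main}, and (A) gives the DFS--tree property for every spanning tree $T$. Conversely, if $H$ is not series--parallel, then $M(H)$ has an $M(K_4)$--minor, so by Lemma~\ref{lem:matroidorbit} every $B_H(T)$ contains an induced graph from the orbit of $C_6$, which by Lemma~\ref{lem:1} is $C_6$ or the domino; when it is the domino, a single pivot on a suitable edge inside it produces $C_6$ (and by Lemma~\ref{lem:0} this change stays confined to those six vertices), and since pivoting $B_H(T)$ is a change of basis in $M(H)$, this yields a spanning tree $T^{\ast}$ of $H$ with $B_H(T^{\ast})$ containing an induced hole; by (B), $T^{\ast}$ is a DFS--tree of no $2$--isomorphic copy of $H$.

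The main obstacle I anticipate is Claim (A): making the inductive rooting genuinely survive every Duffin step, and in particular isolating the ``attach to the descendant endpoint'' rule in the pending--vertex--in--$T$ case and checking it is consistent with the Bandelt--Mulder bookkeeping of Table~1. By contrast, the contradiction in (B) and the passage from a non--series--parallel $H$ to a fundamental graph with an induced hole are comparatively routine given Lemmas~\ref{lem:matroidorbit}, \ref{lem:1} and \ref{lem:0}. (One may also note that (A)--(B) sit at one level of a hierarchy: replacing BDH by ``bipartite circle graph'' and invoking de Fraysseix's Theorem~\ref{thm:defra} should give the analogous, weaker statement for $2$--connected planar graphs.)
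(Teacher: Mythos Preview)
Your proposal is correct, and your forward direction is essentially the paper's argument: your Claim~(A) is the paper's Theorem~\ref{thm:paolo} viewed from the $H$--side rather than the bipartite side, the inductive step being the same construction read through Table~1 (in particular, your ``attach to the descendant endpoint'' rule for a series extension of a co--tree edge is exactly the paper's case ``$v_n\in A$ and $v_n$ is a pending vertex'', where the new arborescence edge is appended at the far end of the path).

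The converse direction is where you depart. The paper argues: if every $T$ is a DFS--tree of a $2$--isomorphic copy, then every $B_H(T)$ is path/\textbf{arborescence}, hence chordal (cited from the literature); since the orbit of $B_H(T)$ is exactly $\{B_H(T')\,:\,T'\text{ spanning tree of }H\}$, every graph in the orbit is chordal, so by Corollary~\ref{cor:main} $B_H(T)$ is BDH and Theorem~\ref{thm:main} gives that $H$ is series--parallel. You instead go through the $M(K_4)$--minor: if $H$ is not series--parallel you invoke Lemma~\ref{lem:matroidorbit} to find, after possibly one pivot (Lemmas~\ref{lem:1} and~\ref{lem:0}), a spanning tree $T^\ast$ with an induced $C_6$ in $B_H(T^\ast)$, and then your Claim~(B) rules out any DFS realization of $T^\ast$. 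Your Claim~(B) is, in effect, a short direct proof of the cited fact ``path/\textbf{arborescence} $\Rightarrow$ chordal'' specialised to what is needed here; the argument is sound (the key point being that the lower endpoints $v_i$ are pairwise distinct, since in a rooted tree a vertex is the child--end of at most one edge, so $v_{i_0-1}=v_{i_0+1}$ forces $a_{i_0-1}=a_{i_0+1}$, impossible for $k\ge 3$). The trade--off: the paper's route keeps the combinatorics inside the pivoting closure (Corollary~\ref{cor:main}) and a literature citation, whereas yours is self--contained on the DFS side but relies on Lemma~\ref{lem:matroidorbit} and one extra pivot to manufacture a hole.
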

When $H$ is assumed to be 2--connected (an assumption that guarantees the connectedness of its fundamental graphs), Theorem~\ref{thm:syslo} can be equivalently stated as follows. 
\mybreak
Let $\mathcal{T}$ be a family of (possibly oriented) trees and let $G$ be a bipartite graph with color classes $A$ and $B$. We say that $G$ is a \emph{path/~$\mathcal{T}$ bipartite graph on $A$} if there exists a member $T$ of $\mathcal{T}$ and a bijection $\xi: A\rightarrow E(T)$ such that, for each $b\in B$, $\{\xi a \ |\ a\in \{b\}\cup N_G(b)\}$ is the edge--set (arc--set if $T$ is oriented) of a simple cycle (directed circuit if $T$ is oriented) in the (oriented) graph $(V(T), A\cup B)$. \emph{Path/~$\mathcal{T}$ bipartite graphs on $B$} are defined similarly.\ $G$ is a \emph{path/~$\mathcal{T}$ bipartite graphs} if it is a path/~$\mathcal{T}$ bipartite graph on $A$ or on $B$.\ $G$ is a \emph{self--dual path/~$\mathcal{T}$} bipartite graphs if it is a path/~$\mathcal{T}$ bipartite graph on both $A$ and $B$.\ In any case $T$ will be referred to as a \emph{supporting tree} for $G$.
\mybreak
Recall that an \emph{arborescence} is a directed tree with a single special node distinguished as the \emph{root} such that,
for each other vertex, there is a directed path from the root to that vertex. A \emph{DFS tree} for a connected graph $H$ (in the sense of~\cite{syslo}), is a pair $(T,\phi)$ consisting of a spanning tree and an orientation of $H$, such that $\phi T$ is a spanning arborescence of $\phi H$ and for each $f\in E(H)\setminus E(T)$, $\phi C(f,T)$ is a directed circuit in $\phi H$ (i.e, all arcs of $\phi C(f,T)$ are oriented in the same way). By choosing for $\mathcal{T}$ the class \textbf{arborescence} of arborescences, one can reformulate Theorem~\ref{thm:syslo} in the following way
\cadre
\label{st:2} $H$ is series--parallel if and only if for each spanning tree $T$ of $H$ the fundamental graph $B_T(H)$ is a self--dual path/\textbf{arborescence} bipartite graph.
\endcadre
\mybreak
Indeed, if $(T,\phi)$ is a DFS-tree in a 2--isomorphic copy $H'$ of $H$, then $T$ is a spanning tree of graph $H'$ whose cycle matroid is $M(H)$;\ hence $B_H(T)\cong B_{H'}(T)$ and $\phi T$ is a supporting arborescence for $B_H(T)$.\ Conversely, suppose that $G$ is a fundamental graph of $H$ and that $G$ is a path/\textbf{arborescence} bipartite graph. Let $G$ have color classes $A$ and $B$. Since $G$ is a path/\textbf{arborescence} bipartite graph, then there is a supporting arborescence $\overrightarrow{T}$ for $G$ that induces an orientation $\phi$ of the graph $H'=(V(T), A\cup B)$, $T$ being the underlying undirected graph of $\overrightarrow{T}$. Clearly $(T,\phi)$ is a DFS tree in $H'$ which in turn is 2--isomorphic to $H$ because $G$ is one of its fundamental graphs (i.e., $H$ and $H'$ have the same cycle matroid). 
\mybreak
Statement~\eqref{st:2} is now a rather straightforward consequence of Corollary~\ref{cor:main} and the fact that BDH graphs are self--dual path/\textbf{arborescence} bipartite graphs as shown by the following result.
\begin{theorem}\label{thm:paolo} Every connected BDH is a self--dual path/\textbf{arborescence} bipartite graph. 
\end{theorem}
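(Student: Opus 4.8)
The plan is to argue by induction on $|V(G)|$, exploiting Bandelt and Mulder's construction. The base case is $G\cong K_2$, which is trivially a self--dual path/\textbf{arborescence} bipartite graph: take the supporting tree to be a single edge (a path of length one, which is an arborescence once rooted at either endpoint), and both color classes are singletons so the defining condition holds vacuously on both sides. For the inductive step, suppose the claim holds for every connected BDH graph on $n-1$ vertices, and let $G$ be a connected BDH graph on $n\ge 3$ vertices. By (BDH\ref{com:pendant})--(BDH\ref{com:twin}), $G$ arises from a connected BDH graph $G'$ by adding either a pending vertex or a twin. By the inductive hypothesis $G'$ is a self--dual path/\textbf{arborescence} bipartite graph: fix supporting arborescences $\overrightarrow{T}_A$ witnessing that $G'$ is a path/\textbf{arborescence} bipartite graph on $A'$ and $\overrightarrow{T}_B$ witnessing it on $B'$. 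Say $A',B'$ are the color classes of $G'$, with $A'$ bijective to $E(T_A)$ via $\xi_A$, and similarly for $B'$.

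The heart of the argument is to track, for each of the four cases in Table~1, how the supporting arborescence must be modified so that the new vertex of $G$ is accommodated. Recall that Table~1 tells us exactly which of the four graph operations on $H'$ (series/parallel extension on a tree edge, series/parallel extension on a co-tree edge) corresponds to adding a pending vertex or a twin in each color class. Concretely, I would treat the ``path on $A$'' side and the ``path on $B$'' side separately, since a pending vertex in $A$ is a twin operation seen from $B$ and vice versa. When the new vertex $x$ is a pending vertex adjacent to some $a\in A'\subseteq E(T_A)$: on the $A$-side this corresponds (by Table~1, reading $T$ as $T_A$) to a parallel extension, and the update is to leave $T_A$ essentially as is while the new co-tree element $x$ has fundamental circuit $\{x,a\}$, a digon, which is trivially a directed circuit; on the $B$-side $x$ lands in $A$, i.e.\ it becomes a new tree edge, obtained by a series extension that subdivides the edge $\xi_B(\text{some }b)$ — I would specify that $x$ is inserted so as to split that arc into two consecutive co-directed arcs, and check that every fundamental circuit that previously used that arc now uses the two new arcs in the same direction, hence stays a directed circuit, while circuits not using it are untouched. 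The twin cases are handled symmetrically, with the roles of series and parallel (equivalently, of subdividing an arc and duplicating an arc in parallel) swapped.

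The main obstacle I anticipate is the bookkeeping of orientations: when we subdivide an arc of an arborescence we must verify that the result is again an arborescence (the root--reachability property is preserved because subdivision does not change which vertices are reachable), and when we add a parallel arc we must orient it consistently with the arc it parallels so that the new length-two fundamental circuit is genuinely directed; dually, adding a pending vertex in $A$ creates a new leaf edge of the supporting tree whose orientation is forced. One must also check that the two updates — the one performed on $\overrightarrow{T}_A$ and the one on $\overrightarrow{T}_B$ — are mutually consistent in the sense that they describe the same underlying cycle matroid $M^G$, which follows because both are obtained by one of Duffin's operations applied to the (2--isomorphic) series--parallel graphs underlying $G'$, by Theorem~\ref{thm:main} and Lemma~\ref{lem:matorbit}. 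Indeed, a cleaner route is available: by Theorem~\ref{thm:main}, $G$ is the fundamental graph $B_H(T)$ of some $2$--connected series--parallel graph $H$; one then shows directly that for a series--parallel $H$ every spanning tree $T$ can be oriented into an arborescence so that all fundamental circuits become directed circuits, and simultaneously that the co-tree can be oriented dually — this is exactly statement~\eqref{st:2}, so a little care is needed to avoid circular reasoning. I would therefore keep the induction self-contained, using Table~1 only as the dictionary for which update to perform, and verifying the arborescence and directed-circuit conditions by hand in each of the four elementary cases.
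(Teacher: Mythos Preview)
Your inductive approach via Bandelt--Mulder is the same as the paper's, but the paper exploits a simplification you miss: rather than carrying two supporting arborescences $\overrightarrow{T}_A$ and $\overrightarrow{T}_B$ through the induction, it observes that if $(v_1,v_2,\ldots,v_n)$ is a defining sequence for $G$, then so is $(v_2,v_1,\ldots,v_n)$. Hence it suffices to prove that $G$ is a path/\textbf{arborescence} bipartite graph on the color class containing $v_1$; the other side follows by applying the same argument to the swapped sequence. This halves the work, collapses your four cases into three (when $v_n$ lies in the \emph{other} color class the supporting tree is unchanged, regardless of whether $v_n$ is pending or a twin), and dissolves your worry about ``mutual consistency'' --- which is in any case unfounded, since the two supporting arborescences are independent witnesses and need not arise from the same $2$--isomorphic copy of $H$.

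Your case description is also muddled in places. When $x$ is pending adjacent to $a\in A'$, you have $x\in B$, not $x\in A$; and on the $B$-side (where the tree has edge set $B$) the correct update is to append $x$ as a \emph{leaf} arc at the terminal vertex of the directed path $N_{G'}(a)$ in $\overrightarrow{T}_B$, not to subdivide a tree edge $\xi_B(b)$. Subdivision of a tree arc is the correct update only when $v_n$ is a \emph{twin} of a vertex in the tree color class (splitting the twin's arc into two co-directed arcs), exactly as the paper spells out. Routing the argument through Table~1 adds a layer of indirection --- operations on $H$ rather than on the arborescence itself --- that the paper avoids by working directly with the supporting tree; your reference to Theorem~\ref{thm:main} and Lemma~\ref{lem:matorbit} is likewise unnecessary once you realize the two sides are handled separately.
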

\begin{proof}
Let $G$ be a connected BDH and let $\sigma=(v_1,v_2\ldots,v_n)$ be one of the ordered sequence of pending and twins in the Bandelt Mulder's construction of $G$. Call any such sequence a \emph{defining sequence}. It suffices to prove that, for any defining sequence of $G$, $G$ is a path/\textbf{arborescence} bipartite graph on the color class containing $v_1$. Indeed since if $\sigma=(v_1,v_2\ldots,v_n)$ is a defining sequence, then $\sigma'=(v_2,v_1\ldots,v_n)$ is still a defining sequence, we conclude that $G$ is a path/\textbf{arborescence} bipartite graph on both the color classes as stated. Let $A$ be the color class of $G$ containing $v_1$. Hence the other color class $B$ contains $v_2$.\ To prove that $G$ is a path/\textbf{arborescence} bipartite graph on $A$ one has to exhibit a tree $T$ on the edge--set $A$ and an orientation $\phi$ of $T$ such that $\left(\phi a \ |\ a\in N_G(b)\right)$ is the arc set of a directed path in $\phi T$. The proof is by induction on the order of $G$. If $G$ has order 2, then $G\cong K_2$, and we are done because it suffices to choose $T$ as a single edge and to orient it arbitrarily to support $G$. Suppose that every BDH graph on less than $n$ vertices is a path/\textbf{arborescence} bipartite graph on the color class containing the first vertex of one its defining sequence and let $G_{n-1}$ be the subgraph of $G$ induced by the first $n-1$ vertices of $\sigma$. Furthermore let $A_{n-1}=A-\{v_n\}$ and $B_{n-1}=B-\{v_n\}$. Clearly $A_{n-1}$ may coincide with $A$. By the inductive hypothesis, $G_{n-1}$ is a path/\textbf{arborescence} bipartite graph on $A_{n-1}$. Therefore there exists a tree $T_{n-1}$ with edge set $A_{n-1}$ and an orientation $\phi_{n-1}$ such that $\phi_{n-1} T_{n-1}$ is a supporting arborescence for $G_{n-1}$. Only the following three cases can occur.
\mybreak
\textbf{--$v_n\in A$ and $v_n$ is a pending vertex.} Let $w\in B_{n-1}$ be the unique neighbor of $v_n$ and let $\phi_{n-1} P_w$ be the path spanned by $N_{G_{n-1}}(w)$ in $T_{n-1}$. Possibly by reversing $\phi_{n-1}$ we may suppose that it orients out of the root of $T_{n_1}$. Let $z$ be the last vertex of $P_w$. Join a vertex $z'$ to $z$ and define $T=(V(T_{n-1})\cup\{z'\},E(T_{n-1})\cup\{zz'\}$ and $\phi$ by $\phi(zz')=(z,z')$ and $\phi(e)=\phi_{n-1}(e)$ $\forall e\in E(T_{n-1})$. In other words, $zz'$ is added as the last edge of the en-longed path. 
\mybreak
\textbf{--$v_n\in A$ and $v_n$ is a twin vertex.} Let $w$ be any twin of $v_n$. Let $w$ have endpoints $\alpha$ and $\beta$ in $T_{n-1}$. Subdivide $w$ by the insertion of a new vertex $z$ and keep the orientation. More formally define $T=(V(T_{n-1})\cup\{z\},E(T_{n-1})\cup\{\alpha z,z\beta\}-\{w\})$ and $\phi$ by $\phi(\alpha z)=(\alpha,z)$, $\phi(z\beta)=(z,\beta)$, and $\phi(e)=\phi_{n-1}(e)$ $\forall e\in E(T_{n-1})-\{w\}$. 
\mybreak
\textbf{--$v_n\in B$.} Let $P=N_{G}(v_n)$. Either $P$ is single edge of $T_{n-1}$ (if $v_n$ is a pending vertex) or $P$ is a copy of the edge--set of an existing path of $T_{n-1}$ (if $v_n$ is a twin vertex). In either cases, if $\phi_{n-1} T_{n-1}$ supports $G_{n-1}$, then it supports $G$: just set $T=T_{n-1}$ and $\phi\equiv\phi_{n-1}$.
\mybreak
Since in any of the cases above $\phi T$ is a supporting arborescence for $G$, the proof is completed.
\end{proof}

\noindent\textbf{Proof of~\eqref{st:2}.} Let $H$ be a 2--connected series--parallel graph.\ Then, by Theorem~\ref{thm:main} $B_H(T)$ is BDH for each spanning tree $T$ of $H$. Hence, for every spanning tree $T$ of $H$, $B_H(T)$ is a self--dual path/\textbf{arborescence} bipartite graph by Theorem~\ref{thm:paolo}. Conversely, suppose that for every spanning tree $T$ of a 2--connected graph $H$, the fundamental graph $B_H(T)$ is a path/\textbf{arborescence} bipartite graph. Thus $B_H(T)$ is chordal (see, e.g., \cite{bralespi}). Moreover, since if $T'$ is any other spanning tree of $H$, then $B_H(T')$ is in the orbit of $B_H(T)$, we conclude that each bipartite graph in the orbit of $B_H(T)$ is a chordal bipartite graph. Therefore $B_H(T)$ is a BDH graph by Corollary~\ref{cor:main} and, consequently, $H$ is a series--parallel graph.\hspace*{\fill}$\Box$
\mybreak
It is worth observing that, in the same way as Theorem~\ref{thm:main} specializes de Fraysseix's Theorem~\ref{thm:defra}, Statement~\eqref{st:2} specializes the following statement (see also \cite{defrai}).
\cadre a bipartite graph is a bipartite circle graph if and only if it is a self--dual path/\textbf{tree} bipartite graph, \textbf{tree} being the class of trees.\endcadre
\begin{proof} By Whitney's planarity criterion~\cite{whi} a graph is planar if and only if its cycle matroid is also \emph{co-graphic}, namely, it is the dual matroid of another cycle matroid. Let now $G$ be the fundamental graph of a 2--connected graph $H$ with respect to some spanning tree $T$ of $H$. Let $\mathsf{A}$ be the reduced adjacency matrix of $G$ with columns indexed by the edges of $T$ and rows indexed by the edges of $\overline{T}$. Then, while $[\,\mathsf{I}\,|\,\mathsf{A}\,]$ generates $M(H)$, $[\,\mathsf{I}\,|\,\mathsf{A}^t\,]$ generates $M^*(H)$ the dual of $M(H)$. Hence, when $H$ is planar, by Whitney's planarity criterion, $M^*(H)$ is the cycle matroid of a 2--isomorphic copy of a plane dual $H^*$ of $H$. Therefore the neighbors of each vertex in the color class $T$ spans a path in the co-tree $\overline{T}$ which is in turn a spanning tree of a 2--isomorphic copy of plane dual $H^*$ of $H$. 
\end{proof}
In view of such a discussion it is reasonable to wonder whether there is a class of self dual path/~$\mathcal{T}_0$ class of bipartite graph closed under edge--pivoting, where $\mathcal{T}_0$ is a family of trees sandwiched between \textbf{trees} and \textbf{arborescences}.\ The next result gives a negative answer in a sense. In what follows \textbf{di-tree} is the class of oriented trees.
\begin{theorem}\label{thm:ept}
If $G$ be a connected bipartite graph whose orbit consists of self--dual path/\textbf{di-tree} bipartite graphs, then the orbit of $G$ consists of path/\textbf{arborescence} bipartite graphs.
\end{theorem}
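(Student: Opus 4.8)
The plan is to reduce Theorem~\ref{thm:ept} to the main results already available, namely Theorem~\ref{thm:main} (equivalently Corollary~\ref{cor:main}) and Theorem~\ref{thm:paolo}, by showing that the hypothesis forces $G$ to be a BDH graph. First I would observe that, by Lemma~\ref{lem:matorbit}, the orbit of $G$ is exactly the set of fundamental graphs of the $2$--isomorphic copies of a fixed connected graphic matroid $M$; pick a $2$--connected graph $H$ with $M(H)\cong M$ (up to $2$--isomorphism), so the orbit of $G$ is $\{B_H(T) : T \text{ a spanning tree of } H\}$. The key reduction step is: if every member of the orbit is a \textbf{path/di-tree} bipartite graph (indeed self--dual such), then each such fundamental graph is chordal bipartite --- this is because a path/\textbf{di-tree} bipartite graph, forgetting orientations, is a path/\textbf{tree} bipartite graph, and path/\textbf{tree} bipartite graphs are chordal (cite~\cite{bralespi}, exactly as in the proof of~\eqref{st:2}). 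So every graph in the orbit of $G$ is chordal bipartite, whence by Corollary~\ref{cor:main} (or directly Corollary~\ref{cor:main1}) together with the hole/domino characterization of Theorem~\ref{thm:bdhchar}, $G$ itself is a BDH graph: a hole is never chordal, and by Lemma~\ref{lem:1} a domino has $C_6$ in its orbit, so neither a hole nor a domino can occur in a graph whose entire orbit is chordal.

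Once $G$ is known to be BDH, Theorem~\ref{thm:paolo} applies verbatim: every connected BDH graph is a self--dual \textbf{path/arborescence} bipartite graph. Moreover the same conclusion holds for every graph in the orbit of $G$, since by Theorem~\ref{thm:main} (Corollary~\ref{cor:main}) the orbit of a BDH graph consists of BDH graphs, and Theorem~\ref{thm:paolo} then upgrades each of them from \textbf{path/di-tree} to \textbf{path/arborescence}. This gives exactly the stated conclusion --- indeed a slightly stronger one, that the orbit consists of \emph{self--dual} path/\textbf{arborescence} bipartite graphs.

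The proof I would write is therefore short: (1) identify the orbit with the fundamental graphs of the $2$--isomorphic copies of a matroid via Lemma~\ref{lem:matorbit}; (2) forget orientations to pass from \textbf{path/di-tree} to \textbf{path/tree}, hence to chordal bipartiteness of the whole orbit; (3) invoke Corollary~\ref{cor:main} (using Lemma~\ref{lem:1} to rule out dominoes) to conclude $G$, and hence its whole orbit, is BDH; (4) apply Theorem~\ref{thm:paolo} to each member of the orbit. The one point that needs a little care --- the "main obstacle", such as it is --- is step (3): being \textbf{path/di-tree} only guarantees chordality of each individual fundamental graph, and it is precisely the hypothesis that the \emph{entire orbit} has this property (as opposed to $G$ alone) that lets one rule out holes and dominoes via Corollary~\ref{cor:main}; one should make explicit that the orbit of $G$ is closed under taking orbits (it is literally one equivalence class), so that "the orbit of $G$ consists of chordal bipartite graphs" is exactly the hypothesis of Corollary~\ref{cor:main} applied to any representative. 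Everything else is an immediate appeal to results proved earlier in the paper.
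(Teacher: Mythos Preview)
Your proof has a genuine gap in step~(2). You claim that path/\textbf{tree} bipartite graphs (obtained by forgetting orientations from path/\textbf{di-tree} graphs) are chordal bipartite, citing~\cite{bralespi} ``exactly as in the proof of~\eqref{st:2}''. But the proof of~\eqref{st:2} invokes chordality only for path/\textbf{arborescence} graphs, and the distinction matters. In fact $C_6$ is a (self--dual) path/\textbf{tree} bipartite graph: realize the three vertices of one color class as the three edges of the star $K_{1,3}$, so that each neighborhood in the other class is a two--edge path through the center. More to the point, $C_8$ is a self--dual path/\textbf{di-tree} bipartite graph: take the star $K_{1,4}$ and orient its four edges alternately into and out of the center; then each of the four two--element neighborhoods of $C_8$ becomes a directed two--edge path. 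So neither path/\textbf{tree} nor path/\textbf{di-tree} implies chordal bipartite, and your reduction to Corollary~\ref{cor:main} via chordality of the whole orbit does not go through.

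The paper's proof replaces this false implication by a strictly weaker (but true) one: path/\textbf{di-tree} bipartite graphs are \emph{balanced}, i.e.\ every hole has length $\equiv 0\pmod 4$ (this is the content of the citation to~\cite{apollonio}). Balancedness alone still permits holes --- witness $C_8$ above --- but since by Lemma~\ref{lem:1} pivoting on an edge of a hole of length $t\geq 8$ produces a hole of length $t-2\equiv 2\pmod 4$, the hypothesis that the \emph{entire orbit} is balanced forces the orbit to be hole--free. One then concludes BDH via Corollary~\ref{cor:main1}, and finishes through Theorem~\ref{thm:main} and Statement~\eqref{st:2} (equivalently, your step~(4) via Theorem~\ref{thm:paolo}). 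The overall architecture you propose --- reduce to BDH, then upgrade to path/\textbf{arborescence} --- is exactly right; it is the specific implication ``di-tree $\Rightarrow$ chordal'' that fails and must be replaced by ``di-tree $\Rightarrow$ balanced'' together with the parity argument under pivoting.
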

\begin{proof}
Path/\textbf{di-tree} bipartite graphs are \emph{balanced} (see \cite{apollonio}). Recall that a bipartite graph $\Gamma$ is \emph{balanced} if its reduced adjacency matrix does not contain the vertex-edge adjacency matrix of a chordless cycle of odd order. Equivalently, $\Gamma$ is \emph{balanced} if each hole of $\Gamma$ has order congruent to zero modulo 4. Hence, since $G$ and any other graph in its orbit, is a self--dual path/\textbf{di-tree} bipartite graph, then $G$ and any other graph in its orbit must be balanced as well. Let $\widetilde{G}$ be any member of $[G]$ and suppose that $\widetilde{G}$ contains a hole $C$. Let $e\in E(C)$. The order $t$ of $C$ is at least eight, because $\widetilde{G}$ is balanced. Nevertheless $\widetilde{G}^e$ contains a hole of order $t-2$ by Lemma~\ref{lem:1}. Since $t-2\equiv 2\imod{4}$ we conclude that any graph in the orbit of $G$ must be hole-free. Therefore $G$ is BDH by Corollary~\ref{cor:main1}, and, by Theorem~\ref{thm:main}, it is the fundamental graph of a series--parallel graph. The thesis now follows by Statement~\eqref{st:2}.
\end{proof}

\begin{remark}
It is worth observing that by the proof above, if $\mathbf{A}$ is a class of balanced matrices closed under pivoting over $GF(2)$, then $\mathbf{A}$ consists of totally balanced matrices, namely those matrices whose bipartite incidence graph is hole-free. Actually, and more sharply, in view of Corollary~\ref{cor:main1}, every member of $\mathbf{A}$ is the incidence matrix of a $\gamma$-acyclic hypergraph~\cite{iwocapaper}.
\end{remark}

\subsection{Self--dual outer--planar graphs}\label{sec:treeorbit}
Series--parallel graphs form a self--dual class of planar graphs: any plane dual of a series--parallel graph is still series--parallel. It is natural to wonder whether there are subclasses of series--parallel graphs and, accordingly, subclasses of BDH graphs, that display the same self--duality. Outer-planar graphs form an interesting subclass of series--parallel graphs but such a class is not in general closed under taking the dual. A \emph{self--dual outer-planar graph} is a 2--connected outer--planar graph whose plane dual is also outer-planar.\ In this section we characterize self--dual outer--planar graphs and their fundamental graphs. Interestingly, such a characterization implies a characterization for bipartite chain graphs. The heart of the characterization is Theorem~\ref{thm:arrowt2closed} establishing that $\{arrow,T_2\}$--free BDH graphs form a class of bipartite graphs closed under pivoting. To prove the characterization we need some intermediate results.\ Besides their own interest such result allows us to spare a tedious case analysis in the proof of Theorem~\ref{thm:arrowt2closed} and at the same time give some light on \emph{bisimplicial edges} in BDH graphs.
\mybreak
Recall first the notion of \emph{bisimplicial edge} introduced by Golumbic and Goss in \cite{golgoss} as a bipartite analogue of the very well known simplicial vertices in graphs.
\begin{definition}
	An edge $uv$ of a bipartite graph $G$ is a \emph{bisimplicial} edge of $G$ if the subgraph induced in $G$ by $N_G(u)\cup N_G(v)$ is a complete bipartite graph.
\end{definition}  
A \emph{proper bisimplicial} edge of $G$ is a bisimplicial edge that is not a pending edge.\ While, as shown by Golumbic and Goss~\cite{golgoss}, bipartite chordal graphs and, in particular BDH, always have bisimplicial edges, there is in general no warranty that they have nontrivial bisimplicial edges.
\begin{lemma}\label{lem:arrowfree}
	Let $G$ be a bipartite chordal graph. If $G$ is arrow-free, then either $G$ is a tree or it contains a proper bisimplicial edge. 
\end{lemma}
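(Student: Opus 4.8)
The plan is to exploit the structural characterization of bipartite chordal graphs together with the hypothesis of being arrow-free. Suppose $G$ is a bipartite chordal graph that is arrow-free and not a tree; we must exhibit a proper bisimplicial edge. By the Golumbic--Goss theorem \cite{golgoss}, $G$ has \emph{some} bisimplicial edge $uv$; the issue is that a priori it might only be a pending edge. So the first step is to argue that if \emph{every} bisimplicial edge of $G$ were a pending edge, then $G$ would either be a tree or contain an induced arrow, contradicting the hypotheses. To make this work I would iterate the Golumbic--Goss result in the standard way: removing a bisimplicial edge from a bipartite chordal graph (or, more precisely, handling the vertex it makes removable) leaves a bipartite chordal graph, so one builds an elimination-type ordering. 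The key is to catch the first moment in this elimination where a \emph{non-pending} bisimplicial edge would be forced, or to trace back a pending-edge elimination to a local configuration that, together with the fact that $G$ is not a tree, produces an arrow.

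Concretely, here is the approach I expect to use. Since $G$ is not a tree, $G$ contains a cycle, hence (being bipartite chordal, so in particular having cycles of length $\geq 4$ with chords, but bipartite chordal actually means chordal bipartite, i.e. every cycle of length $\geq 6$ has a chord) $G$ contains an induced $C_4$, say on vertices $a_1 b_1 a_2 b_2$ with $a_i$ in one color class and $b_i$ in the other. I would look at a bisimplicial edge $uv$ of $G$ guaranteed by \cite{golgoss}. If $uv$ is proper we are done, so assume $uv$ is pending, say $\deg(u)=1$ with $N_G(u)=\{v\}$. Then $u$ is a pending vertex; delete $u$. The graph $G-u$ is still bipartite chordal, still arrow-free (deleting a vertex cannot create an induced subgraph), and still not a tree (it still contains the induced $C_4$, since $u$, being pending, is not one of its four vertices — here one must be slightly careful if $v$ coincided with an $a_i$ or $b_i$, but $v$ has degree $\geq 1$ in $G-u$ and the $C_4$ survives regardless). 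Inducting on the number of vertices, $G-u$ is a tree or has a proper bisimplicial edge $xy$; it is not a tree, so it has a proper bisimplicial edge $xy$. Now I must promote $xy$ back to a proper bisimplicial edge of $G$: the neighborhoods in $G$ and $G-u$ differ only at $v$ (if $v\in\{x,y\}$, then $N_G(v)=N_{G-u}(v)\cup\{u\}$). If $v\notin\{x,y\}$, then $N_G(x)\cup N_G(y)=N_{G-u}(x)\cup N_{G-u}(y)$ unless $u$ is adjacent to $x$ or $y$ — impossible since $u$ is pending at $v\neq x,y$ — so $xy$ stays bisimplicial and proper in $G$. If $v\in\{x,y\}$, say $v=x$, then the only new vertex in $N_G(x)\cup N_G(y)$ is $u$, and $u$ is adjacent to $x=v$ and to nothing else, so the subgraph of $G$ induced on $N_G(x)\cup N_G(y)$ is the complete bipartite graph on $(N_{G-u}(x)\cup N_{G-u}(y))$ with one extra pendant leaf $u$ hung at $x$; this is complete bipartite iff the $x$-side had exactly one vertex, i.e. $N_{G-u}(y)=\{x\}$, which would make $y$ pending in $G-u$, contradicting that $xy$ was a \emph{proper} bisimplicial edge of $G-u$. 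So in the remaining case $xy$ is \emph{not} bisimplicial in $G$, and I need a different argument: I would instead choose, among all bisimplicial edges of $G-u$, one avoiding $v$, or argue directly that $v$ together with its neighborhood in $G$ and the vertex $u$ forms part of an arrow.

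The main obstacle is precisely this last bookkeeping: ensuring that some bisimplicial edge supplied by the induction (or by Golumbic--Goss applied to $G-u$) can be transported to a \emph{proper} bisimplicial edge of $G$, rather than getting destroyed at the single vertex $v$ where the neighborhoods change. The clean way around it, which I would ultimately adopt, is not to delete just one pending vertex but to first delete \emph{all} pending vertices of $G$ simultaneously, obtaining a graph $G_0$ with minimum degree $\geq 2$ that is still bipartite chordal, still arrow-free, and — because $G$ is not a tree — still has at least one edge (indeed still contains an induced $C_4$). One shows $G_0$ has no pending edges at all, so \emph{any} bisimplicial edge of $G_0$ (which exists by \cite{golgoss}) is proper in $G_0$; the remaining task is to lift a proper bisimplicial edge $xy$ of $G_0$ to a proper bisimplicial edge of $G$, and the obstruction to this is exactly the existence of a pending vertex $u$ of $G$ attached at $x$ or $y$ with the "other side" $N_{G_0}(\{x,y\})$ containing $\geq 2$ vertices on the attachment side — and that configuration (a $C_4$-through-$xy$ witnessed inside the complete bipartite graph $G_0[N_{G_0}(x)\cup N_{G_0}(y)]$ with a pendant $u$ hung on one corner, plus the symmetric pendant or a second corner vertex) contains an induced arrow. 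Spelling out that $2K_2$-to-arrow extraction is the one genuinely combinatorial step; everything else is the elimination-ordering scaffolding and neighborhood bookkeeping sketched above.
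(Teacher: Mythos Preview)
Your approach via pendant-stripping and induction is different from the paper's, and the step you flag as ``the one genuinely combinatorial step'' is where it breaks. Take $G=K_{2,3}$ on $\{a_1,a_2\}\times\{b_1,b_2,b_3\}$ with a single pendant $u$ attached at $a_1$. This $G$ is bipartite chordal, not a tree, and arrow-free (its only $6$-vertex induced subgraph is $G$ itself, whose degree sequence $(4,3,2,2,2,1)$ is not that of an arrow). Your $G_0$ (whether after one round of leaf-deletion or the full $2$-core) is $K_{2,3}$, and $a_1b_1$ is a proper bisimplicial edge of $G_0$. But $a_1b_1$ is \emph{not} bisimplicial in $G$, since $u\in N_G(a_1)$ and $a_2\in N_G(b_1)$ while $ua_2\notin E(G)$. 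Your proposed remedy---that the obstruction configuration (a $C_4$ through $xy$ with pendant $u$, ``plus the symmetric pendant or a second corner vertex'') contains an induced arrow---fails here: there is no second pendant and no arrow anywhere in $G$. So the lifting step cannot be rescued by an arrow-extraction; at best you would have to show that \emph{some} proper bisimplicial edge of $G_0$ avoids every attachment point of the deleted forest, and you have not argued this. (A smaller issue: a single simultaneous deletion of all current pendants need not produce minimum degree $\geq 2$; you would need the iterated $2$-core, which only makes the lifting bookkeeping worse.)

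The paper's proof bypasses induction entirely. It fixes an inclusion-maximal induced complete bipartite subgraph $G^*$ with colour classes $A^*,B^*$ each of size at least two (one exists since $G$ contains an induced $C_4$), and uses arrow-freeness in one shot: if $u',u''\in A^*$ had incomparable $G$-neighbourhoods, witnesses $x\in N_G(u'')\setminus N_G(u')$ and $y\in N_G(u')\setminus N_G(u'')$ together with any $v',v''\in B^*$ would induce an arrow. Hence the $G$-neighbourhoods of vertices in $A^*$ form a chain; the vertex $u^*\in A^*$ with the smallest neighbourhood then has $N_G(u^*)=B^*$ by maximality of $G^*$, and symmetrically some $v^*\in B^*$ has $N_G(v^*)=A^*$. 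The edge $u^*v^*$ is the required proper bisimplicial edge.
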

\begin{proof} Let $G$ have color classes $A$ and $B$. Since $G$ is chordal, if $G$ is not a tree, then it contains a set of vertices that induces a square. Hence $G$ contains an induced complete bipartite graph $G^*$ whose color classes $A^*$ and $B^*$ have at least two vertices each and are such that $A^*\cup B^*$ is inclusion-wise maximal. We claim that 
	\mybreak
	\textit{Claim} there are $u^*\in A^*$ and $v^*\in B^*$ such that $N_G(u^*)=A^*$  and $N_G(v^*)=B^*$.
	\mybreak
	\textit{Proof of the claim}. If $u'$ and $u''$ are distinct vertices in $A^*$, then either $N_G(u')\subseteq N_G(u'')$ or $N_G(u'')\subseteq N_G(u')$. To see this observe that if there existed $x\in N_G(u'')\setminus N_G(u')$ and $y\in N_G(u')\setminus N_G(u')$ then for any two distinct vertices $v'$ and $v''$ in $B^*$ the set $\{u',u'',v',v'',x,y\}$ would induce an arrow in $G$. By the same reason, if $v'$ and $v''$ are distinct vertices in $B^*$, then either $N_G(v')\subseteq N_G(v'')$ or $N_G(v'')\subseteq N_G(v')$.  We therefore conclude that the vertices in $A^*$ have pairwise inclusion-wise comparable neighborhood as well as the vertices in $B^*$. Hence, for some $u^*\in A^*$ one has $N_G(u^*)=\bigcap_{u\in A^*}N_G(u)$ and for some $v^*\in B^*$ one has $N_G(v^*)=\bigcap_{v\in B^*}N_G(v)$. Clearly $A^*\subseteq \bigcap_{v\in B^*}N_G(v)$ and $B^*\subseteq \bigcap_{u\in A^*}N_G(u)$ because $G'$ is complete. On the other hand both 
	$$A^*\cup \left(\bigcap_{u\in A^*}N_G(u)\right)$$
	and
	$$\left(\bigcap_{v\in B^*}N_G(v)\right)\cup B^*$$
	induce complete bipartite subgraphs of $G$. Therefore $A^*=\bigcap_{v\in B^*}N_G(v)$ and $B^*=\bigcap_{u\in A^*}N_G(u)$
	by the maximality of $A'\cup B'$. Consequently $N_G(u^*)=A^*$  and $N_G(v^*)=B^*$, as claimed.
	\hspace*{\fill}$\blacksquare$
	\mybreak
	By the claim if $G$ is not a tree, then there are two vertices $u^*\in A$ and $v^*\in B$ such that $N_G(u^*)\cup N_G(v^*)$ induces a complete bipartite subgraph of $G$ whose color classes have at least two vertices each.  Hence $u^*v^*$ is a proper bisimplicial edge as required.
\end{proof} 
Let us understand now the behavior of bisimplicial edges under pivoting. 
\begin{lemma}\label{lem:bisim}
Let $G$ be BDH graph and let $e,\,f\in E(G)$,  $e=xy$. 
\begin{enumerate}[(a)]
	\item\label{com:bisia} If $e$ is a bisimplicial edge of $G$, then $e$ is a cut--edge of $G^e$ and, conversely, if $e$ is a cut--edge of $G$, then $e$ is a bisimplicial edge of $G^e$. 
	\item\label{com:bisib} If $f$ is not adjacent to $e$ and $f$ is not a proper bisimplicial edge of $G$, then $f$ is not a proper bisimplicial edge of $G^e$.
	\item\label{com:bisic} If $G-\{x,y\}$ is connected and $p$, $u$ and $v$ are three vertices of $G$ such that $p$ and $u$ are leaves of the same color in $G-\{x,y\}$ and $v$ is adjacent in $G-\{x,y\}$ to $p$ but not to $u$, then $\{u,v\}$ does not induce a bisimplicial edge in $G^e$.
\end{enumerate}
\end{lemma}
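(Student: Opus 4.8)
The plan is to argue directly from the way pivoting on $e=xy$ changes neighbourhoods, using throughout that $G^{e}$ is again a BDH graph by Theorem~\ref{thm:main} and Lemma~\ref{lem:matorbit}, hence is hole-free and domino-free by Theorem~\ref{thm:bdhchar}. Put $Y=N_{G}(x)\setminus\{y\}$ and $Z=N_{G}(y)\setminus\{x\}$; then $G^{e}$ is obtained from $G$ by toggling exactly the edges with one end in $Y$ and one end in $Z$ (plus the label swap of $x$ and $y$, which is immaterial here), so that $N_{G^{e}}(x)=N_{G}(x)$, $N_{G^{e}}(y)=N_{G}(y)$, and every edge of $G$ not joining $Y$ to $Z$ survives unchanged in $G^{e}$. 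I also use the elementary fact that an edge $uv$ of a bipartite graph is bisimplicial if and only if it is not the middle edge of an induced $P_{4}$; equivalently, $uv$ fails to be bisimplicial precisely when there exist $a\in N(v)\setminus\{u\}$ and $b\in N(u)\setminus\{v\}$ with $a\not\sim b$.

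\textbf{Part (\ref{com:bisia}).} The \emph{converse} implication uses no distance-hereditariness: if $e$ is a cut--edge of $G$ then $Y$ and $Z$ lie in distinct components of $G-e$, so $G$ has no $Y$--$Z$ edge; after pivoting, $Y\cup Z$ induces a complete bipartite graph, and since $N_{G^{e}}(x)=\{y\}\cup Y$ and $N_{G^{e}}(y)=\{x\}\cup Z$ are unchanged, $N_{G^{e}}(x)\cup N_{G^{e}}(y)$ induces a complete bipartite graph, i.e.\ $e$ is bisimplicial in $G^{e}$. For the forward implication, suppose $e$ is bisimplicial in $G$ but not a cut--edge of $G^{e}$; since $G^{e}$ is connected there is an $x$--$y$ path in $G^{e}-e$, and I take a shortest one $P$. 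Its second vertex lies in $Y$, its penultimate vertex lies in $Z$, and $P$ has odd length at least $3$. If $P$ has length $3$, its middle edge joins $Y$ to $Z$, hence is present in $G$ by bisimpliciality of $e$, hence absent in $G^{e}$ --- contradiction. If $P$ has length at least $5$, then $P$ together with $e$ is an induced cycle of length at least $6$ in $G^{e}$ (a chord of this cycle would yield a shorter $x$--$y$ path in $G^{e}-e$, contradicting minimality of $P$), i.e.\ a hole in the BDH graph $G^{e}$ --- contradiction.

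\textbf{Part (\ref{com:bisib}).} Here $f=uv\in E(G)$, and ``not a proper bisimplicial edge of $G$'' means that $f$ is pending or that $f$ is the middle edge of an induced $P_{4}$ of $G$. If $f$ is pending, say $\deg_{G}v=1$, then the unique neighbour of $v$ is neither $x$ nor $y$ (as $f$ is not adjacent to $e$), so $v\notin Y\cup Z$, the pivot leaves the edges at $v$ fixed, and $f$ stays pending in $G^{e}$. Otherwise fix an induced $P_{4}$ $\;b-u-v-a\;$ of $G$. If the pivot deletes $uv$ --- which happens exactly when one endpoint of $f$ is adjacent to $x$ and the other to $y$ --- then $f\notin E(G^{e})$ and we are done; so assume $uv$ survives. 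By Lemma~\ref{lem:0} applied to $U=\{x,y,a,b,u,v\}$, the graph $G^{e}[\{a,b,u,v\}]$ depends only on $G[U]$, and if it is still a $P_{4}$ with middle edge $uv$ we are done. Otherwise the pivot has either re-created $ab$ or deleted one of $ub$, $va$; since $\{u,v\}$ is disjoint from $\{x,y\}$ the only edges at $u$ or $v$ the pivot can change are those to $Y\cup Z$, and the plan is to run through the finitely many possibilities for which of $a$, $b$ and the other neighbours of $u$ and $v$ lie in or are adjacent to $\{x,y\}$, showing in each that either $f$ acquires a leaf endpoint in $G^{e}$, or a new induced $P_{4}$ through $uv$ appears in $G^{e}$, or $G$ or $G^{e}$ would contain a hole or a domino. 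This bounded but somewhat tedious case analysis is, I expect, the main obstacle of the lemma.

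\textbf{Part (\ref{com:bisic}).} The hypotheses force $u,v,p\notin\{x,y\}$ and $u\not\sim v$ in $G-\{x,y\}$, hence $uv\notin E(G)$; thus either $uv\notin E(G^{e})$ and $\{u,v\}$ trivially does not induce a bisimplicial edge of $G^{e}$, or the pivot creates $uv$, which forces (up to swapping $x$ and $y$) $u\sim_{G}x$ and $v\sim_{G}y$, and then $u\not\sim_{G}y$ and $v\not\sim_{G}x$ by the colouring. In this case I would exhibit a non-bisimpliciality witness for $uv$ in $G^{e}$: the edge $ux$ survives, so $x\in N_{G^{e}}(u)\setminus\{v\}$; and $p\sim_{G}v$ with $p\not\sim_{G}u$ (same colour), while the edge $pv$ is toggled precisely when $p\sim_{G}x$, so when $p\not\sim_{G}x$ one takes the pair $(p,x)$, and when $p\sim_{G}x$ one uses instead the unique neighbour of $u$ in $G-\{x,y\}$, the leftover configurations being ruled out using connectedness of $G-\{x,y\}$ together with hole- and domino-freeness of $G$ and of $G^{e}$. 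As in part (\ref{com:bisib}), the only real work is the case checking; part (\ref{com:bisia}) is complete as given above.
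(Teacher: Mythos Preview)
Your proof of part~(\ref{com:bisia}) is complete and correct, and actually tidier than the paper's. The paper argues by invoking the isometric property of $G^{e}$ explicitly: it shows that if $e$ is not a cut--edge of $G^{e}$ then $G^{e}-\{x,y\}$ is connected, uses distance--heredity to produce an induced $P_{4}$ between some $x_{0}\in Y$ and $y_{0}\in Z$ avoiding $x,y$, and then observes that $\{x,y,x_{0},y_{0}\}$ together with the two inner vertices of that $P_{4}$ induce a $C_{6}$. Your shortest--path argument in $G^{e}-e$ reaches the induced long cycle in one stroke and handles the length--$3$ case by a direct toggle check; this is a genuine simplification.

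Parts~(\ref{com:bisib}) and~(\ref{com:bisic}), however, are only plans, and the paper's arguments there are structurally different from the case analysis you propose. For~(\ref{com:bisib}) the paper works contrapositively: if $f=uv$ (with $u$ in the colour of $x$) survived the pivot and became proper bisimplicial in $G^{e}$, then the pivot must have supplied the missing adjacencies, which forces $N_{G}(x)\cap N_{G}(u)\neq\emptyset$ and $N_{G}(y)\cap N_{G}(v)\neq\emptyset$; choosing $w$ and $z$ in these intersections, the six vertices $\{x,y,u,v,w,z\}$ induce a $C_{6}$ or a domino in $G$ according as $wz\notin E(G)$ or $wz\in E(G)$ (after disposing of the easy case $wz\in E(G)$ with $wz$ toggled off). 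This avoids tracking a particular $P_{4}$ witness through the pivot. For~(\ref{com:bisic}) the paper again argues by contradiction but extracts more structure first: once the pivot creates $uv$ one has $\dist_{G}(u,v)=3$, and since $G-\{x,y\}$ is connected and BDH there is an induced $P_{4}$ from $u$ to $v$ avoiding $x$ and $y$; its four vertices together with $x,y$ must then induce $K_{3,3}$ minus the edge $uv$ (else $G$ is not BDH). Adjoining $p$, whose only possible neighbours in $G$ are $v$ and $y$, one checks that $p\not\sim_{G}y$ gives a non--bisimpliciality witness for $uv$ in $G^{e}$, while $p\sim_{G}y$ produces a domino in $G$. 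Your outline for~(\ref{com:bisic}) is headed in the right direction, but without first pinning down the $K_{3,3}-\bar e$ configuration the residual cases do not close as cleanly; I would adopt the paper's line for both~(\ref{com:bisib}) and~(\ref{com:bisic}).
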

\begin{proof}~~
	\begin{itemize}
		\item[\eqref{com:bisia}] If $e$ is a pending edge the statement is trivial because $G\cong G^e$ and $G^e-e$ separates the end-vertex of degree 1 of $e$ from the rest of the graph. We can therefore assume that $e$ is a proper bisimplicial edge. Let $A^\ast=N_G(y)$ and $B^\ast=N_G(x)$ and observe that both $A^\ast$ and $B^\ast$ have at least two vertices each because $e$ is nontrivial. Since $e$ is bisimplicial, the subgraph induced in $G^e$ by $(A^\ast\setminus u)\cup B^\ast\setminus y$ is edgeless. Suppose to the contrary that $e$ is not a cut edge of $G^e$. Hence $G^e-e$ is connected and $\textrm{dist}_{G^e-e}$ is finite over $V(G^e)$. Since no edge of $G^e$ connects vertices of $A^\ast\setminus x$ to vertices in $B^\ast\setminus y$ it follows that $G^e-\{x,y\}$ is a connected induced subgraph of $G^e$ and 
		$$\textrm{dist}_{G^e-e}(u,v)=\textrm{dist}_{G^e-\{x,y\}}(u,v)$$
		for each $u\in A^\ast\setminus x$ and each $v\in B^\ast\setminus y$.
		On the other hand, by Corollary~\ref{cor:main}, $G^e$ is BDH hence, by definition, $\textrm{dist}_{G^e}$ restricts over the induced connected subgraphs of $G^e$ and since $G^e-\{x,y\}$ is such, it follows that 
		$$\textrm{dist}_{G^e-\{x,y\}}(u,v)=\textrm{dist}_{G^e}(u,v)=3$$
		for each $u\in A^\ast\setminus x$ and each $v\in B^\ast\setminus y$. Therefore we conclude that if $x_0$ and $y_0$ are arbitrarily chosen vertices in $A^\ast\setminus x$ and in $B^\ast\setminus y$, respectively, then in $G^e$ there is a chordless path $P\cong P_4$ joining $x_0$ and $y_0$ and whose inner vertices, $s$ and $t$, say, do not belong to $A^\ast\cup B^\ast$. But now $\{x,y,x_0,y_0,s,t\}$ induces a chordless cycle $C\cong C_6$ in $G^e$ contradicting that $G^e$ is BDH. We conclude that $G^e-e$ is not connected and hence that $e$ is a cut edge of $G^e$.
		\item[\eqref{com:bisib}] The statement is true when $e=f$ because if $e$ is not a proper bisimplicial edge, then $e$ is nonpending and non bisimplicial Therefore $N(x)\cup N(y)$ does not induce a complete bipartite subgraph and so neither does its bipartite complement. The statement is also trivially true when $e\not=f$ and $e$ is a pending edge because in this case $G\cong G^e$ and all edges of $G$ are preserved by pivoting. We can therefore assume that $e\not=f$ and $e$ is nonpending.\ Let $f=uv$, with $x$ and $u$ in the same color class, say. Since pivoting complements the edges between $N_G(u)$ and $N_G(v)$, if $f$ were proper bisimplicial in $G^e$, then $N_G(x)\cap N_G(u)\not=\emptyset$ and $N_G(y)\cap N_G(v)\not=\emptyset$. Since $f\in E(G^e)$, necessarily $\{u,v\}\not\subseteq N_G(x)\cup N_G(y)$ otherwise $f$ would be complemented in $G^e$.\ Hence at most one of $xv$ and $yu$ can be in $G$. Let $w\in N_G(x)\cap N_G(u)$ and $z\in N_G(y)\cap N_G(w)$. We first note that if $wz$ is in $G$ then $wz$ is not in $G^e$, hence $f$ cannot be simplicial in $G^e$. So, let us assume that $wz$ is not in $G$: if none of $xv$ and $yu$ is in $E(G)$, then $\{x,y,u,v,z,w\}$ induces a hole on six vertices in $G$ otherwise, namely if one of $xv$ and $yu$ is in $G$, the same set of vertices induces a domino. 
		\item[\eqref{com:bisic}] Suppose to the contrary that $uv$ is a bisimplicial edge of $G^e$. Since $uv$ is not an edge of $G$, then $\{u,v\}\subseteq N_G(x)\cup N_G(y)$. Hence $\dist_G(u,v)=3$. By the connectedness of $G-\{x,y\}$ and because such a graph is BDH, it follows that $\dist_{G-\{x,y\}}(u,v)=3$ and there is a chordless path $P$ of length 3 connecting $u$ and $v$ such that $x$, $y$ and $e$ are not traversed by $P$.\ Therefore $\{x,y\}\cup V(P)$ is the set of vertices of a (not necessarily induced) cycle $C$ on six vertices. Again, since $G$ is BDH, the vertices of $C$ induces in $C$ at least two chords and hence exactly two chords because $u$ and $v$ are not adjacent. Observe that the subgraph induced by $V(C)$ in $G$ is isomorphic to $K_{3,3}-\bar{e}$, namely $K_{3,3}$ with one edge missing, where the missing edge is $uv$. Let $N$ be the subgraph of $G$ induced by $V(C)\cup\{p\}$. By the hypothesis, $p$ and $u$ are leaves of the same color in $G-\{x,y\}$. In particular $p$ has degree at most 2 in $G$ and, if it has degree 2, then $p$ is adjacent to $y$ in $G$. Suppose first that $p$ had degree 1 in $G$. In this case $uv$ would not be a bisimplicial edge in $G^e$, because $p\in N_{G^e}(v)$, $y\in N_{G^e}(u)$ while $py\not\in E(G^e)$. Hence the only possibility left is that $py\in E(G)$ and hence $py\in E(G^e)$ because the edges of $G$ adjacent to the pivot edge $e$ are retained in $G^e$. In this case $\{y,u,q_2,q_3,v,p\}$ would induce a domino contradicting that $G$ is domino-free. We conclude that $\{u,v\}$ does not induce a bisimplicial edge of $G^e$. 
		\end{itemize}
\end{proof}
Let $G$ a bipartite graph and $N$ be an induced subgraph of $G$. Let $e\in E(G)$ have endpoints $x$ and $y$. By definition of pivoting, $N$ is affected by pivoting on $e$ if the neighborhood of $x$ and $y$ intersect $V(N)$. More precisely, denote by $N_e$ the subgraph of $G$ induced by $V(N)\cup\{x,y\}$. Then 
\cadre\label{cadre:1}
if one of $N_{G}(x)\cap V(N)$ and $N_{G}(y)\cap V(N)$ is empty, then the subgraph induced by $V(N)$ in $N_e^e$ (and hence in $G^e$) is isomorphic to $N$.
\endcadre
\mybreak
The following result is needed in the proof of Theorem~\ref{thm:arrowt2closed}.
\begin{lemma}\label{lem:paoloinduced}
Let $N_e$ be an induced subgraph of a BDH graph $G$, where $e=xy$ and $N$ is connected. Suppose that edge $f=xu$, $u\not=y$, is a proper bisimplicial edge of $N_e$. Then $u$ and $y$ are \emph{twins} in $N_e$, namely, they have the same neighbors in $N_e$.
\begin{proof}
By the bisimpliciality of $f$ it follows that each neighbor of $u$ in $N_e$ is a neighbor of $y$. Suppose that $u$ and $y$ are not twins. Hence there is a vertex $z\in V(N)$ adjacent to $y$ but not to $u$. Since $u\in V(N)$ and $N$ is connected, there is a chordless path $P$ joining $z$ and $u$. Let $w$ be the unique neighbor of $u$ on $P$. Then $w$ is adjacent to $y$. Let $t\not=u$ be the neighbor of $y$ closest to $w$ (possibly $t=z$) in $P$ and let $Q$ be the subpath of $P$ joining $t$ and $w$. If $x$ has no neighbor on $Q$, then $V(Q)\cup\{u,x,y\}$ induces a subgraph of $N_e$ which is not BDH. Otherwise let $s\not=u$ be the neighbor of $x$ closest to $w$ on $Q$ and let $R$ be the subpath of $Q$ joining $s$ to $w$.Then $V(R)\cup\{u,x,y\}$ induces a subgraph of $N_e$ which is not BDH. In both case the fact that $N_e$ is BDH would be contradicted we therefore conclude that $u$ and $y$ are twins.
\end{proof}
\end{lemma}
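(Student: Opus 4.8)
The plan is to prove the two inclusions $N_{N_e}(u)\subseteq N_{N_e}(y)$ and $N_{N_e}(y)\subseteq N_{N_e}(u)$ separately. The first is immediate from bisimpliciality: since $xy\in E(N_e)$ we have $y\in N_{N_e}(x)$, and because $f=xu$ is bisimplicial the subgraph induced by $N_{N_e}(x)\cup N_{N_e}(u)$ is complete bipartite, so $y$ is joined to every vertex of $N_{N_e}(u)$; hence $N_{N_e}(u)\subseteq N_{N_e}(y)$.

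For the reverse inclusion I would argue by contradiction, using that $N_e$, being an induced subgraph of the BDH graph $G$, contains neither a hole nor an induced domino (Theorem~\ref{thm:bdhchar}). Assume some $z\in N_{N_e}(y)\setminus N_{N_e}(u)$. Then $z\notin\{x,y,u\}$ (it lies in the colour class of $x$, so $z\neq y,u$, and $z\not\sim u$ while $x\sim u$, so $z\neq x$), hence $z\in V(N)$; likewise $u\in V(N)$. Using the connectedness of $N$, I would pick such a $z$ \emph{minimising} $\dist_N(z,u)$ and fix a shortest $z$--$u$ path $P=(z=p_0,p_1,\dots,p_k=u)$ in $N$: it has $k\ge 2$, it is chordless in $N_e$ (a chord would give a shorter path inside $N$), and it avoids $x$ and $y$ since $x,y\notin V(N)$. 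Writing $w=p_{k-1}$, the first inclusion gives $w\in N_{N_e}(u)\subseteq N_{N_e}(y)$, so $wy\in E(N_e)$.

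The point of choosing $z$ closest to $u$ is that it forces $y$ to be adjacent, among the vertices of $P$, only to $p_0=z$ and $p_{k-1}=w$: if $0<i<k$ and $p_i\sim y$ then $p_i\notin N_{N_e}(u)$ (the chordless-path endpoint $u=p_k$ has no path-neighbour other than $p_{k-1}$), so $p_i\in N_{N_e}(y)\setminus N_{N_e}(u)$ with $\dist_N(p_i,u)<k$ unless $p_i=w$. Hence the cycle $y,p_0,p_1,\dots,p_{k-1}$ is induced; being a cycle of a bipartite graph it has even length $k+1$, so (as $k\ge 2$) if $k+1\ge 6$ it is a hole, a contradiction, leaving $k=3$. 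Now $P=(z,p_1,w,u)$, $y\sim z$, $y\sim w$, $y\not\sim p_1$, $x\sim u$, $x\not\sim z$ (same colour class) and $x\not\sim w$ (otherwise $x,w,u$ would be a triangle). I would then close up the $6$-cycle $x,y,z,p_1,w,u$; it carries the antipodal chord $yw$, so if $x$ has no neighbour among $z,p_1,w$ this vertex set induces a domino and we are done.

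The remaining configuration --- $x\sim p_1$ --- is the one I expect to be the real obstacle: there the six vertices $x,y,z,p_1,w,u$ can induce a $K_{2,3}$ or a $K_{3,3}$ minus an edge, both of which are themselves BDH, so no contradiction can be extracted locally and one must travel further into $N$ (again via its connectedness), carefully tracking which chords $x$ and $y$ contribute to a growing cycle until a hole or an induced domino appears. Apart from that traversal, everything is routine bookkeeping with parities forced by bipartiteness.
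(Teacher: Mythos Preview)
Your plan coincides with the paper's: derive $N_{N_e}(u)\subseteq N_{N_e}(y)$ from bisimpliciality, pick a witness $z\in N_{N_e}(y)\setminus N_{N_e}(u)$, run a chordless $z$--$u$ path inside $N$, and look for a hole or a domino on a terminal segment together with $x,y,u$. Rather than minimising $\dist_N(z,u)$, the paper takes an arbitrary chordless path $P$, lets $w$ be the $P$-neighbour of $u$, and lets $t$ be the $y$-neighbour on $P$ (other than $w$) nearest $w$, so that the subpath $Q$ from $t$ to $w$ has no interior $y$-neighbour. When $x$ has no neighbour on $Q$, the set $V(Q)\cup\{u,x,y\}$ carries a domino if $|V(Q)|=3$ and a hole if $|V(Q)|\ge 5$---the same dichotomy your minimisation produces---and the residual case ``$x$ meets $Q$'' is precisely your unresolved case $x\sim p_1$.

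For that residual case the paper takes the $x$-neighbour $s$ on $Q$ closest to $w$ and asserts that the $s$--$w$ subpath $R$ together with $\{u,x,y\}$ is not BDH. But bisimpliciality of $xu$ bites back here: since $s\in N(x)$ and $w\in N(u)$, it forces $s\sim w$, so $s$ must be the immediate $P$-predecessor of $w$ and the five vertices $\{s,w,u,x,y\}$ induce $K_{2,3}$, which \emph{is} BDH. Thus the paper's argument does not settle your residual case either, and in fact the lemma is false in the generality stated: take $N_e$ to be $K_{3,3}$ on $\{x,w,z\}\cup\{y,u,p\}$ with the single edge $zu$ deleted. Then $N_e$ is BDH (it is hole- and domino-free; one can also build it by Bandelt--Mulder starting from the square on $\{x,y,w,u\}$, adding $z$ pending at $y$, then $p$ as a twin of $y$), the graph $N=N_e-\{x,y\}$ is the path $u\,w\,p\,z$, the edge $xu$ is proper bisimplicial, yet $N(y)=\{x,w,z\}\supsetneq\{x,w\}=N(u)$. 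In the paper the lemma is only ever applied with $N$ equal to a pivot of an arrow or of $T_2$, where this $P_4$ configuration cannot occur; your difficulty is therefore not a defect of your argument alone but of the statement itself.
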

\begin{theorem}\label{thm:arrowt2closed}
If $G$ is $\{hole,domino,arrow,T_2\}$--free graph, then so is any graph in the orbit of $G$. Equivalently, the class of $\{arrow,T_2\}$--free BDH graphs is closed under pivoting.
\end{theorem}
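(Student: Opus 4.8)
The plan is to reduce the statement to a local analysis around a pivot edge. Since $\{hole,domino\}$-freeness is already known to be orbit-closed (Corollary~\ref{cor:main}), it suffices to show that pivoting on an edge $e=xy$ of a $\{hole,domino,arrow,T_2\}$-free graph $G$ cannot \emph{create} an induced copy of an arrow or of $T_2$. So suppose for contradiction that $G^e$ contains an induced $N\in\{arrow,T_2\}$; let $N_e$ be the subgraph of $G$ induced by $V(N)\cup\{x,y\}$. By the observation~\eqref{cadre:1}, $N_e$ must have both $N_G(x)\cap V(N)\neq\emptyset$ and $N_G(y)\cap V(N)\neq\emptyset$ (otherwise $N$ survives unchanged in $G$, contradicting $\{arrow,T_2\}$-freeness of $G$). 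Thus $e$ genuinely interacts with $N$, and $N_e$ is a BDH graph on at most $|V(N)|+2\le 8$ vertices; the whole argument takes place inside this bounded graph.

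First I would set up the bookkeeping: $N$ is obtained from $N_e$ by complementing the pairs in $(N_{G}(x)\setminus\{x\})\times(N_{G}(y)\setminus\{y\})$ and swapping the labels of $x$ and $y$; equivalently $N = N_e^{\,e}[V(N)]$ by Lemma~\ref{lem:0}. Since $N\cong N^{e'}$-type manipulations are reversible, pivoting $N_e^{\,e}$ back on $e$ recovers $N_e$, so it is equivalent to ask which BDH graphs $N_e$ on $\le 8$ vertices, containing the edge $e$, pivot (and restrict) to an arrow or $T_2$. The key structural leverage comes from the intermediate lemmas: in $N_e$ the edge $e=xy$ is bisimplicial or not, and Lemma~\ref{lem:bisim} controls the fate of bisimplicial edges under pivoting on $e$, Lemma~\ref{lem:arrowfree} guarantees that an arrow-free bipartite chordal graph that is not a tree has a \emph{proper} bisimplicial edge, Lemma~\ref{lem:paoloinduced} forces the two endpoints ``on the $x$-side'' of a proper bisimplicial edge to be twins, and Lemma~\ref{lem:bisim}\eqref{com:bisic} rules out a specific leaf-configuration producing a bisimplicial edge after pivoting. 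The plan is to use these to show that whenever $N_e$ is BDH and $e$ interacts nontrivially with $V(N)$, the graph $N = N_e^{\,e}[V(N)]$ necessarily inherits one of these ``forbidden'' features (a proper bisimplicial edge with twin endpoints, a leaf of a given colour adjacent/non-adjacent pattern, a cut-edge, etc.) that an arrow and $T_2$ do \emph{not} possess — note that in both the arrow and $T_2$ there is no automorphism exchanging the colour classes, and $T_2$ has no proper bisimplicial edge at all while the arrow's only bisimplicial edges are the two pending ones. Concretely, I would run the contrapositive: assume $N=G^e[V(N)]$ is an arrow or $T_2$, locate in $N$ the two pending vertices (arrow) resp. the three degree-$1$ vertices (in $T_2$), trace which of them lie in $N_G(x)\cup N_G(y)$, and derive via Lemmas~\ref{lem:bisim} and~\ref{lem:paoloinduced} a $C_6$, a domino, an arrow or a $T_2$ already present in $N_e\subseteq G$ — contradiction.

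The organizing dichotomy I would use is whether $e$ is a proper bisimplicial edge of $N_e$. If $e$ is bisimplicial in $N_e$, then by Lemma~\ref{lem:bisim}\eqref{com:bisia} $e$ is a cut-edge of $N_e^{\,e}$; restricting to $V(N)$, either $e\notin E(N)$ (so $N$ is disconnected or $N_e^e$ splits in a way that shows $N$ cannot be the arrow — which is connected with no cut-edge other than its pendants — nor $T_2$), or the cut structure localizes $x$ and $y$ on the periphery and one reads off that $N$ has a pendant-like edge, pinning down very few shapes which are then excluded by direct inspection. If $e$ is \emph{not} bisimplicial in $N_e$: since $N_e$ is a bipartite chordal (indeed BDH) graph, apply Lemma~\ref{lem:arrowfree} to produce a proper bisimplicial edge $f$ of $N_e$; if $f$ is not adjacent to $e$, Lemma~\ref{lem:bisim}\eqref{com:bisib} keeps $f$ a proper bisimplicial edge of $N_e^{\,e}$, hence (after checking $f$'s endpoints survive in $V(N)$) $N$ has a proper bisimplicial edge — impossible for $T_2$ and, for the arrow, forces $f$ to be among its pendants, which contradicts $f$ being \emph{proper}; if $f=xu$ is adjacent to $e$, Lemma~\ref{lem:paoloinduced} says $u$ and $y$ are twins in $N_e$, and then pivoting on $e$ makes $u$ a leaf whose neighbour pattern matches the hypothesis of Lemma~\ref{lem:bisim}\eqref{com:bisic}, which denies the existence of precisely the bisimplicial configuration an arrow would need. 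In every branch one lands on a $C_6$, a domino, an arrow, or a $T_2$ inside $G$.

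**Main obstacle.** The hard part will be the case where $e$ is adjacent to the relevant bisimplicial edge $f$ and several of the eight vertices of $N_e$ coincide with neighbours of $x$ and of $y$ simultaneously — there the ``complement the biclique $N_G(u)\times N_G(v)$'' rule can create and destroy many edges at once, and one must argue that no clever choice of $G$ escapes. The excerpt's phrasing (``such result allows us to spare a tedious case analysis'') signals that the authors expect Lemmas~\ref{lem:arrowfree}, \ref{lem:bisim}, and~\ref{lem:paoloinduced} to collapse this blow-up; my plan hinges on these lemmas being exactly strong enough that, once a proper bisimplicial edge is located and its endpoints shown to be twins, the arrow/$T_2$ target — which has \emph{no} colour-swapping automorphism and, for $T_2$, no proper bisimplicial edge — is combinatorially incompatible with what pivoting on an adjacent edge produces. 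If a residual gap remains, the fallback is a finite check over the (few) BDH graphs on at most $8$ vertices containing a fixed edge $e$, verifying none of them pivots to an arrow or $T_2$; this is finite but unenlightening, so the real content is making the lemma-driven argument close without it.
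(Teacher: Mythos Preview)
Your plan is close in spirit to the paper's proof — same contrapositive setup, same toolkit of Lemmas~\ref{lem:arrowfree}, \ref{lem:bisim}, \ref{lem:paoloinduced} — but the organization differs in an important way, and your chosen dichotomy has a real gap.

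The paper does \emph{not} organize around whether $e$ is bisimplicial in $N_e$. Instead it splits into three cases according to the position of $e$ relative to $N$: (1)~$e\in E(N)$; (2)~$e$ shares exactly one vertex with $N$; (3)~$e$ is vertex-disjoint from $N$. Cases~1 and~2 are handled by direct small-case verification: Case~1 is a one-line claim that pivoting an arrow on any edge yields an arrow and pivoting $T_2$ yields $T_2$ or a graph containing an arrow; Case~2 is a short but genuine case analysis on which vertex of $N$ meets $e$ and on the possible attachments of the outside endpoint. Your bisimplicial dichotomy does not subsume these: for instance, in Case~1 the contradiction is simply that $N_e=N^e$ already contains an arrow or $T_2$, which has nothing to do with locating a proper bisimplicial edge. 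Only in Case~3 do the lemmas carry the argument, and even there the paper runs it opposite to your direction. Rather than producing a proper bisimplicial edge $f$ in the arrow-free graph and \emph{transporting it forward} into $N$, the paper assumes for contradiction that the $8$-vertex graph on the $N$-side is $\{arrow,T_2\}$-free, invokes Lemma~\ref{lem:arrowfree} \emph{there}, and then \emph{excludes} every possible location for the resulting proper bisimplicial edge: it cannot be $e$ (by \ref{lem:bisim}\eqref{com:bisia}, since $e$ is not a cut-edge on the other side), it cannot be an old edge of $N$ (by \ref{lem:bisim}\eqref{com:bisib}, since $N$ has no proper bisimplicial edge), it cannot be a new edge between $V(N)$-vertices (by \ref{lem:bisim}\eqref{com:bisic}), and it cannot be adjacent to $e$ (by Lemma~\ref{lem:paoloinduced}, which forces twins and reduces back to the Case~1 claim).

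The concrete gap in your route is the step ``$f$ proper bisimplicial in $N_e^{\,e}$ $\Rightarrow$ $f$ proper bisimplicial in $N$''. Bisimpliciality survives restriction to $V(N)$, but \emph{properness} need not: an endpoint of $f$ whose only second neighbour in $N_e^{\,e}$ is $x$ or $y$ becomes a leaf in $N$, so $f$ can drop to a pending edge — exactly what happens at the pendants of the arrow. The paper sidesteps this by never restricting: it works entirely in the $8$-vertex graph and rules out bisimplicial edges there. You also do not address the possibility that your $N_e$ is a tree (so Lemma~\ref{lem:arrowfree} yields nothing); the paper dispatches the analogous point by noting that $N$ is connected and both endpoints of $e$ have neighbours in $V(N)$, whence the ambient $8$-vertex graph cannot be a tree. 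If you adopt the paper's three-case split and run the exclusion argument in Case~3 as above, your plan becomes the paper's proof; as stated, the bisimplicial dichotomy alone does not close.
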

\begin{proof}
It suffices to show that if $G$ is an $\{arrow,T_2\}$--free BDH graph, then so is the graph $G^e$ for each $e\in E(G)$. Equivalently, using the fact that $(G^e)^e\cong G$ for each $e\in E(G)$, it suffices to prove that if $G$ is BDH and contains an induced subgraph $N$ which is either isomorphic to an arrow or to $T_2$, then $G^e$ is not an $\{arrow,T_2\}$--free BDH for each $e\in E(G)$.
\mybreak
We prove the theorem in the latter formulation by showing that 
\cadre\label{cadre:2}
if $N$ is either isomorphic to an arrow or to $T_2$, then for each $e\in E(G)$, $N_{e}^e$ either is not a BDH graph or it contains an induced copy $L$ of an arrow or of $T_2$. 
\endcadre\mybreak
Indeed, by Lemma~\ref{lem:0}, $N_e^e$ is an induced subgraph of $G^e$ and hence $L$ is an induced subgraph of $G^e$ being an induced  subgraph of $N_e^e$.
\mybreak
By \eqref{cadre:1}, \eqref{cadre:2} is trivially true when at least one of the neighborhoods of the endpoints of $e$ has empty intersection with $V(N)$. So, if $e=xy$, we may assume that both $N_G(x)\cap V(N)$ and $N_G(y)\cap V(N)$ are nonempty. Accordingly we distinguish three cases.
\mybreak
\underline{\textbf{Case 1}: $e\in E(N)$}. This case is easily ruled out by the following claim.

\begin{claim}\label{claim:a} If $N$ is an arrow, then, for $e\in E(N)$, $N^e$ is still an arrow and if $N\cong T_2$ then, for $e\in E(N)$, either $N^e\cong T_2$ or $N^e$ contains an induced arrow. 
\end{claim}
\textit{Proof of Claim \ref{claim:a}}. It is just a matter of checking and, by symmetry, it suffices to check the claim only for one edge of the square, if $N$ is an arrow, or for one of the edges incident to the center of $N$, if $N\cong T_2$--note that if $e$ is a pending edge, then the claim is trivial. 
\hspace*{\fill}$\blacksquare$
\mybreak
\underline{\textbf{Case 2}: $e$ is incident to a vertex of $N$}. Without loss of generality let $x\in V(N)$ and $y\not\in V(N)$. This case is more complicated than the previous one.
Suppose first that $N$ is a tree. Since $N_e$ has to be distance hereditary and since $N$ is an induced subgraph of $N_e$, because $N=N_e-y$, it follows that $y$ can be adjacent to only one vertex of $N$ and such a vertex is at distance 2 from $x$. Let $v$ be the unique neighbor of $y$ in $N$ and let $z$ be the central vertex of the unique path connecting $x$ to $v$. There are only two possibilities: either one among $x$ and $v$ is a leaf of $N$ or neither is. In the former case $N_e-z\cong T_2$ and $N_e-z$ is an induced subgraph of $N_e$ with $e\in E(N_e)$.\ By Lemma~\ref{lem:0}, $(N_e-z)^e$ is an induced subgraph of $N_e^e$. In the latter case, namely when neither $x$ nor $v$ are leaves, $x$ has degree at least two and so does $v$. Hence there are nonadjacent vertices $x'$ and $v'$ adjacent to $x$ and $v$, respectively, such that $\{x',x,z,v,v',y\}$ induces an arrow $L$ in $N_e$ with $e\in E(L)$.\ Therefore, by Claim~\ref{claim:a} and still by Lemma~\ref{lem:0}, $L^e$ is an induced arrow of $N^e_e$. We conclude that in both cases $N_e^e$ is not $\{arrow,T_2\}$--free.
\mybreak
Suppose now that $N$ is an arrow.\ Suppose that the vertices of $N$ are labeled as follows: the pending vertices of $N$ by $p$ and $p'$; the vertices of degree 3 by $v$ and $v'$ with $p$ adjacent to $v$ and $p'$ adjacent to $v'$, while those of degree 2 by $z$ and $z'$.\ By symmetry we have to consider only three possibilities.
\mybreak
--\emph{$e$ is incident to a pending vertex}.\ Without loss of generality $x=p$. Now $y$ is adjacent to at least one among $z$, $z'$ and $p'$.\ It easily checked however that if $z$ and $z'$ are not both neighbors of $y$, then $N_e$ is not a BDH graph because it contains induced $C_6$ or dominoes. The same happens if $y$ is adjacent to $p'$, no matter whether $y$ has other neighbors. The unique case which does not contradict that $N_e$ is BDH occurs when the neighborhood of $y$ consists precisely of $z$ and $z'$. In this case however $N_e-v$ is an induced arrow of $N_e$ with $e\in E(N_e-v)$.\ By Lemma~\ref{lem:0}, $(N_e-v)^e$ is an induced subgraph of $N_e^e$ and we conclude that $N_e^e$ is not $\{arrow,T_2\}$--free.

\noindent --\emph{$e$ is incident to a vertex of degree 2}.\ Without loss of generality $x=z$. In this case, the set of neighbors of $y$ is either $\{z'\}$ or $\{s,z'\}$ where $s$ is either $p$ or $p'$.\ In any other case, indeed, $N_e$ contains an induced domino having $e$ among its edges.\ Without loss of generality, $s=p$. If $y$ is adjacent to $z'$ only, then $N_e$ is isomorphic to $K_{2,3}$ with two pending edges appended to the vertices of degree 3. It is therefore easily checked that $N_e^e\cong T_2$. If the neighbors of $y$ are $p$ and $z'$, then $N_e-v$ is an induced arrow of $N_e$ with $e\in E(N_e-v)$.\ By Lemma~\ref{lem:0}, $(N_e-v)^e$ is an induced subgraph of $N_e^e$ and we conclude that $N_e^e$ is not $\{arrow,T_2\}$--free.

\noindent --\emph{$e$ is incident to a vertex of degree 3}.\ Without loss of generality $x=v$. In this case, $y$ must be adjacent to $v'$. Now $z$ has degree 2 in $N_e$ as well and $N_e-z$ in an induced arrow of $N_e$ with $e\in E(N_e-z)$. By Lemma~\ref{lem:0}, $(N_e-z)^e$ is an induced subgraph of $N_e^e$ and we conclude that $N_e^e$ is not $\{arrow,T_2\}$--free.
\mybreak
So \eqref{cadre:2} is true in any of the subcases above and therefore in case 2.
\mybreak
\underline{\textbf{Case 3}: $e$ is disjoint from $N$}. In this case $e$ can interact with $N$ in several (non-isomorphic) ways---we counted at least 15 possibilities.\ Therefore, rather than pursuing a case analysis, we use our previous lemmas. Suppose to the contrary that $N_e^e$ is an $\{arrow,T_2\}$--free BDH graph. By Lemma~\ref{lem:arrowfree}, $N_e^e$ is either a tree or possesses a proper bisimplicial edge. By Lemma \ref{lem:0} one has
$$N_e^e-\{x,y\}\cong\left(N_e-\{x,y\}\right)^e\cong N^e.$$
Claim \ref{claim:a} implies that $N^e$is connected.\ Therefore  $N_e^e$ is not a tree, because $e$ is not a pending edge. Hence $N_e^e$ possesses a proper bisimplicial edge by Lemma~\ref{lem:arrowfree}.
\mybreak
Observe first that $e$ is not a proper bisimplicial edge of $N_e^e$ because, by Part~\ref{com:bisia} of Lemma~\ref{lem:bisim}, if $e$ were proper bisimplicial in $N_e^e$, then $e$ would be a cut edge of $N_e=(N_e^e)^e$; but $e$ cannot be a cut edge of $N_e$ because $N$ is connected and both the endpoints of $e$ have neighbors in $N$. Therefore the proper bisimplicial edges of $N_e^e$ are either of the following types:
\begin{itemize}
\item[--] proper bisimplicial edges of $N$ retained in $N^e$, that is, preserved by pivoting;
\item[--] proper bisimplicial edges of $N^e$ connecting non adjacent vertices of $N$, that is, created by pivoting;
\item[--] proper bisimplicial edges of $N_e^e$ adjacent to $e$ in $N_e^e$.
\end{itemize}
The proper bisimplicial edges of $N_e^e$ cannot be of the first type. To see this observe that since $N$ is an arrow or $T_2$, it does not have proper bisimplicial edges. In particular the edges of $N$ are not proper bisimplicial edges of $N_e$ and by Part~\ref{com:bisib} of Lemma~\ref{lem:bisim}, they cannot become proper bisimplicial in $N_e^e$ by pivoting. 

The proper bisimplicial edges of $N_e^e$ are not even of the second type. To see this observe that, since $N_e-\{x,y\}=N$ and $N$ is either an arrow or $T_2$,  Part~\ref{com:bisic} of Lemma~\ref{lem:bisim} applies and we deduce that no pair of nonadjacent vertices of $N$ is connected by a proper bisimplicial edge of $N_e^e$. 

We therefore conclude that the bisimplicial edges of $N_e^e$ can be only of the third type, namely, those edges adjacent to $e$ in $N_e^e$ (and in $N_e$ as well). We show that not even this case can occur thus contradicting Lemma~\ref{lem:arrowfree} and at the same time completing the proof. To this end let $f$ be an edge of $N_e^e$ adjacent to $e$. Say, $f=uv$ with $x=v$. Suppose also that $f$ is a proper bisimplicial edge of $N_e^e$. As observed above, $N_e^e-\{x,y\}$ is connected and therefore Lemma~\ref{lem:paoloinduced} applies (with $N_e^e-\{x,y\}$ in place of $N$ and $N_e^e$ in place of $N_e$). By the lemma, $y$ and $u$ are twins. Therefore
$$N_e^e-u\cong N_e^e-y\cong \left(N_e-y\right)^e$$ 
where the rightmost isomorphism is due to Lemma \ref{lem:0}. Since $N^e$ is an induced subgraph of $\left(N_e-y\right)^e$, it follows that $N_e^e-u$ contains an induced copy of $N^e$. By Claim \ref{claim:a}, $N^e$ is either an arrow or $T_2$ or contains an induced arrow. Hence we conclude that $N_e^e-u$ (and hence $N_e^e$) is not $\{arrow,T_2\}$--free as we are assuming. This contradiction completes the proof of the theorem. 
\end{proof}
To complete the characterization we need only a last device about the weak dual of self--dual outerplanar graphs.
\begin{lemma}\label{lemma:selfdual}
A graph is a self--dual outer-planar graph if and only if its weak-dual is a path.
\end{lemma}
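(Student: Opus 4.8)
The plan is to fix an outplane embedding of a $2$--connected outer--planar graph $H$ (the standing hypothesis whenever one speaks of its weak--dual) and translate everything into the plane dual $H^*$. Recall from Section~\ref{sec:prel} that the weak--dual $W$ is then a tree whose vertices are the bounded faces $F_1,\dots,F_k$ of $H$, with $F_i$ adjacent to $F_j$ in $W$ precisely when they share a chord of the Hamiltonian cycle bounding the outer face; and that $H^*$ has exactly one more vertex than $W$, namely the vertex $v_\infty$ of the outer face, joined to each $v_{F_i}$ by as many parallel edges as there are Hamiltonian--cycle edges on $\partial F_i$. Since $(H^*)^*=H$, the lemma is equivalent to the statement that $H^*$ is outer--planar if and only if $W$ is a path. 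I would open the proof by recording two elementary facts: outer--planarity is the absence of a $K_4$-- and of a $K_{2,3}$--minor; and every leaf $v_\ell$ of $W$ is a bounded face carrying exactly one chord on its boundary, hence — its boundary being a cycle of length at least $3$ — at least two Hamiltonian--cycle edges, so that $v_\ell v_\infty\in E(H^*)$.

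For the implication ``$W$ a path $\Rightarrow$ $H^*$ outer--planar'', edge multiplicities are irrelevant for minors, so it suffices to treat the graph $W^+$ obtained from the path $W=v_{F_1}\cdots v_{F_k}$ by adding an apex $v_\infty$ adjacent to a set $S\subseteq V(W)$ containing both endpoints of $W$ (each endpoint is a leaf, hence in $S$). A path has no $K_3$--minor, so a $K_4$--minor of $W^+$ would require three pairwise adjacent disjoint connected subsets inside $W$, which is impossible; and a $K_{2,3}$--minor, according to whether the branch set meeting $v_\infty$ plays a degree--$3$ or a degree--$2$ role, would force some connected subset of the path $W$ to be adjacent to three pairwise disjoint connected subsets of $W$, again impossible since an interval of a path meets at most two disjoint intervals. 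Alternatively — and this is the version I would probably write down, being the most transparent — I would exhibit an outplane drawing of $H^*$ directly: place $v_{F_1},\dots,v_{F_k}$ from left to right on a line, put $v_\infty$ above, draw the edges of $W$ along the line and the edges incident with $v_\infty$ as nested, pairwise non--crossing arcs; then the outer face is bounded by $v_\infty v_{F_1}v_{F_2}\cdots v_{F_k}v_\infty$ and carries every vertex. (The degenerate case $k=1$, where $H^*$ is a bundle of parallel edges, is immediate.)

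For the converse I would argue by contraposition. If the tree $W$ is not a path then it has a vertex $v_F$ of degree $d\ge 3$, and $W-v_F$ decomposes into $d$ subtrees $W_1,\dots,W_d$; component $W_i$ is attached to $v_F$ through one of the $d$ distinct chords on $\partial F$, and, being a subtree with at least one vertex, contains a leaf of $W$ — if $W_i$ is a single vertex it is that leaf, otherwise $W_i$ has at least two leaves, at most one of which is its vertex of attachment to $v_F$. Contracting each $W_i$ to a single vertex $u_i$, one has in $H^*$ that each $u_i$ is adjacent to $v_F$ (through its chord) and to $v_\infty$ (through the leaf of $W$ it contains), while $v_F,v_\infty,u_1,\dots,u_d$ are pairwise disjoint subsets of $V(H^*)$; choosing any three indices $i_1,i_2,i_3$ exhibits a $K_{2,3}$--minor of $H^*$ with branch sides $\{v_F,v_\infty\}$ and $\{u_{i_1},u_{i_2},u_{i_3}\}$. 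Hence $H^*$ is not outer--planar, so $H$ is not self--dual outer--planar.

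The step I expect to cause the most friction is the initial dictionary between $H$ and $H^*$: verifying cleanly that $H^*$ really is ``$W$ together with the single apex $v_\infty$'', that each leaf of $W$ contributes an edge to $v_\infty$, and that every branch of $W-v_F$ reaches $v_\infty$; these all hinge on the fact that in an outplane embedding every edge on a bounded--face boundary is either a chord (shared by two bounded faces) or a Hamiltonian--cycle edge (shared with the outer face). Disposing of the small cases ($k\le 1$, or $W$ a single edge) and checking that the $K_{2,3}$--minor argument never secretly needs $v_F\sim v_\infty$ are the remaining points of care, but once the dictionary is in place both directions are short minor arguments.
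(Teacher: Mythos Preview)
Your proof is correct and follows essentially the same route as the paper's: both establish the contrapositive of the hard direction by showing that a vertex of degree $\ge 3$ in the weak--dual $W$ forces a $K_{2,3}$ obstruction in $H^*$, with the outer--face vertex $v_\infty$ and the high--degree vertex as the two ``hubs'' and three branches of $W$ supplying the three connectors (the paper reaches $v_\infty$ from each branch via the $2$--connectedness of $H^*$ and exhibits a subdivision, whereas you reach $v_\infty$ via a leaf of $W$ in each branch and contract to a minor --- the same picture). Your forward direction, an explicit outplane drawing of the path--plus--apex, is considerably more fleshed out than the paper's, which simply refers to a figure.
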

\begin{proof}
Clearly if the weak-dual of a 2--connected planar graph $H$ is a path, then the dual $H^*$ of $H$ is outer-planar and $H$ is outer-planar as well (see Figure~\ref{fi:weakd}(i))---notice that the endvertices of the path are adjacent to the vertex corresponding to the outer face by 2--connectedness. Hence $H$ is a self--dual outer-planar graph. Conversely,  
if $H$ is a self--dual outer-planar graph, then the weak dual $\overline{H^*}$ of $H$ must be a path. To see this let $w$ be the vertex of $H^*$ corresponding to the outer face of an out-plane embedding of $H$. We know that $\overline{H^*}$ is a tree. Suppose that $\overline{H^*}$ has a vertex $u$ with at least three neighbors $v_1$, $v_2$ and $v_3$. Then $\overline{H^*}-u$ has at least three connected components $B_1$, $B_2$ $B_3$, say, with $v_1\in B_1$, $v_2\in B_2$ and $v_3\in B_3$. Since $H$ is 2--connected so is $H^*$. Hence, by 2--connectedness, there are edges $wb_1$, $wb_2$ and $wb_3$, where $b_i\in B_i$, for $i=1,2,3$. We can choose the $b_i$'s so that among the neighbors of $w$, $b_i$ is the closest to $v_i$ in $B_i$, $i=1,2,3$. Let $P_i$ be the unique path in $\overline{H^*}$ connecting $v_i$ to $b_i$. Then, the subgraph of $H^*$ induced by $V(P_1)\cup V(P_2)\cup V(P_3)\cup\{u,w\}$ is a subdivision of $K_{2,3}$.
\end{proof}

\begin{figure}[H]
    \begin{center}
        %\begin{figbox}
             \includegraphics[width=13cm]{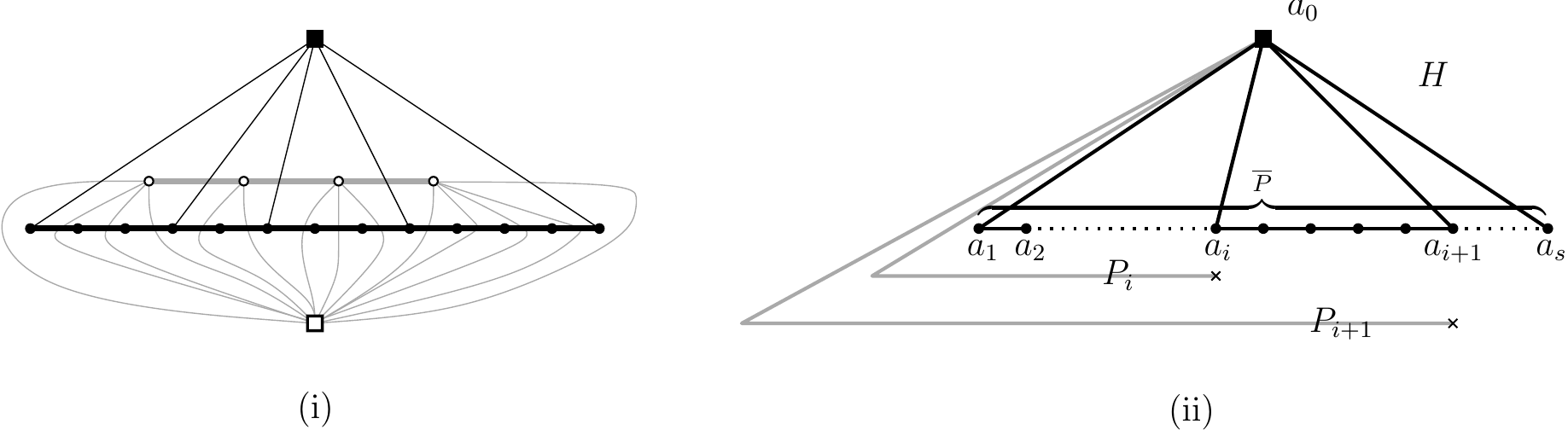}
        %\end{figbox}
    \end{center}
    \caption{(i) A self--dual outer-planar graph (black) and its dual (gray); the respective weak duals are depicted thickly; (ii) $P_i$ and $P_{i+1}$ are two nested fundamental paths with respect to the tree $\overline{P}\cup\{a_0a_1\}$ in the outer-planar graph $H$.}
    \protect\label{fi:weakd}
\noindent\hrulefill%
\end{figure}
\begin{theorem}\label{thm:selfdualouterplanar}
Let $G$ be a bipartite graph with color classes $A$ and $B$. Then the following statements are equivalent.
\begin{enumerate}
\item $G$ is $\{hole,domino,arrow,T_2\}$--free.
\item $G$ is the fundamental graph of a self--dual outerplanar graph.
\item $G$ is pivot-equivalent to a chain graph.
\end{enumerate}
\end{theorem}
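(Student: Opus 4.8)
The plan is to prove the cyclic chain of implications $(1)\Rightarrow(2)\Rightarrow(3)\Rightarrow(1)$. The workhorses are Theorem~\ref{thm:main} (to pass between BDH graphs and fundamental graphs of $2$--connected series--parallel graphs), Theorem~\ref{thm:arrowt2closed} (so that the \emph{whole orbit} of $G$, and not merely $G$, stays in the class of statement~$(1)$), Lemma~\ref{lem:matroidorbit} (to convert a forbidden minor of the associated matroid into a forbidden induced subgraph of its orbit), Lemma~\ref{lem:matorbit} (all fundamental graphs of a connected graphic matroid are pivot--equivalent), and Lemma~\ref{lemma:selfdual} (self--duality of an outer--planar graph is equivalent to its weak dual being a path).

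For $(1)\Rightarrow(2)$, note that $G$ is $\{hole,domino\}$--free, hence BDH by Theorem~\ref{thm:bdhchar}, so Theorem~\ref{thm:main} gives $G=B_H(T)$ for a $2$--connected series--parallel graph $H$; then $M^G=M(H)$ has no $M(K_4)$--minor, and since $H$ is planar its dual $H^*$ is series--parallel too. A one--line computation shows that the arrow is a fundamental graph of $M(K_{2,3})$ (take any spanning tree of $K_{2,3}$ and read off the fundamental graph). Now $G$ is simultaneously a fundamental graph of $M(H)$ and, after exchanging its colour classes, of $M^*(H)=M(H^*)$, so both matroids share the orbit $[G]$; by Theorem~\ref{thm:arrowt2closed} no member of $[G]$ has an induced arrow, so Lemma~\ref{lem:matroidorbit} forbids $M(K_{2,3})$ as a minor of $M(H)$ and of $M(H^*)$. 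Hence neither $H$ nor $H^*$ has a $K_{2,3}$--minor; being also series--parallel, both are outer--planar, so $H$ is a self--dual outer--planar graph having $G$ as a fundamental graph. This is precisely the point where $T_2$--freeness is used: it is the hypothesis under which Theorem~\ref{thm:arrowt2closed} applies and keeps the \emph{whole} orbit arrow--free --- arrow--freeness of $G$ alone would not suffice, as $T_2$ is arrow--free yet has an arrow in its orbit and $M^{T_2}$ is not the cycle matroid of any self--dual outer--planar graph.

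For $(2)\Rightarrow(3)$, let $H$ be self--dual outer--planar with $G=B_H(T)$. By Lemma~\ref{lemma:selfdual} the weak dual of $H$ is a path; list the bounded faces $f_1,\dots,f_m$ in this path order and let $g_1,\dots,g_{m-1}$ be the chords shared by consecutive faces, which are exactly the chords of the Hamiltonian outer cycle $C$ of $H$. I would take the spanning tree $T^\star=C-e_0$ for a suitably chosen outer--cycle edge $e_0$ of the end face $f_m$, so that deleting $e_0$ unfolds $C$ in the order dictated by the path of faces. Since the faces are linearly ordered, the arcs of $T^\star$ closed up by $g_1,\dots,g_{m-1}$ (and by $e_0$) form a nested family, so the fundamental cycles with respect to $T^\star$ are pairwise nested; consequently in $B_H(T^\star)$ the colour class indexing the co--tree has pairwise comparable neighbourhoods, $B_H(T^\star)$ has no induced $2K_2$, and is therefore a bipartite chain graph. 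Finally $B_H(T)$ and $B_H(T^\star)$ are both fundamental graphs of the connected graphic matroid $M(H)$, hence pivot--equivalent by Lemma~\ref{lem:matorbit}, so $G$ is pivot--equivalent to a chain graph.

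For $(3)\Rightarrow(1)$, a bipartite chain graph has no induced $2K_2$, while each hole $C_{2k}$ $(k\ge 3)$, the domino, the arrow, and $T_2$ contains an induced $2K_2$; thus a chain graph is $\{hole,domino,arrow,T_2\}$--free, and by Theorem~\ref{thm:arrowt2closed} so is every graph in its orbit, in particular $G$. I expect the genuine difficulty to lie in the nesting step inside $(2)\Rightarrow(3)$: one has to pick $e_0$ so that deleting it linearizes $C$ compatibly with the path order of the weak dual, and then check that the fundamental paths of all the chords (and of $e_0$) are \emph{totally} nested rather than merely laminar; this is sharp, since a branching weak dual forces two disjoint fundamental paths, i.e.\ an induced $2K_2$, so the hypothesis ``self--dual'' in $(2)$ cannot be dropped. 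The remaining points --- that the arrow is a fundamental graph of $M(K_{2,3})$, and that $G$ is a fundamental graph of $M^*(H)$ as well as of $M(H)$ so that one orbit controls the outer--planarity of both $H$ and $H^*$ --- are routine.
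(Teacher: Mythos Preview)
Your proposal is correct and follows the same cyclic route $(1)\Rightarrow(2)\Rightarrow(3)\Rightarrow(1)$ as the paper, with the same key ingredients (Theorem~\ref{thm:main}, Theorem~\ref{thm:arrowt2closed}, Lemma~\ref{lem:matroidorbit}, Lemma~\ref{lemma:selfdual}, Lemma~\ref{lem:matorbit}); in particular your identification of \emph{where} $T_2$--freeness is needed --- namely as the hypothesis that makes Theorem~\ref{thm:arrowt2closed} applicable so that the \emph{whole orbit} of $G$ stays arrow--free --- is exactly right and arguably more explicit than the paper's own phrasing of $(1)\Rightarrow(2)$. The one place where your presentation differs is $(2)\Rightarrow(3)$: the paper works symmetrically with both $H$ and $H^*$ and their weak--dual paths to exhibit nested fundamental paths on \emph{both} colour classes, whereas you take $T^\star=C-e_0$, argue nesting on the co--tree side only, and then invoke $2K_2$--freeness to conclude the graph is chain --- this is a legitimate shortcut (comparable neighbourhoods on one side already forces $2K_2$--freeness), and your informal justification that the fundamental path of $g_i$ is the outer boundary of $f_1\cup\cdots\cup f_i$ (hence monotone in $i$) is the correct reason the family is totally ordered rather than merely laminar.
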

\begin{proof}~~\\
1.$\Rightarrow$2. If $G$ is hole and domino-free, then it is the fundamental graph of a 2--connected series--parallel graph by Theorem~\ref{thm:main}. Hence there is a 2--connected series--parallel graph $H$ such that $G\cong B_T(H)$ where $T$ is a spanning tree of $H$. At the same time $G\cong B_{T^*}(H^*)$ where $H^*$ is a dual of $H$ and $T^*$ is a co-tree of $H$. $H^*$ is a series--parallel graph. Observe now that arrows are fundamental graphs of graphs isomorphic to $K_{2,3}$. Therefore, neither $H$ nor $H^*$ can contain minor isomorphic to $K_{2,3}$ because the fundamental graphs of these minor would contain as induced subgraphs graphs in the orbit of $T_2$ or arrows (by Lemma~\ref{lem:matroidorbit}), contradicting that $G$ is $\{Hole,Domino,Arrow,T_2\}$--free.\ It follows that both $H$ and $H^*$ are outerplanar graphs and therefore $G$ is the fundamental graph of a self--dual outer-planar graph. 
\mybreak
2.$\Rightarrow$3. By definition there are two 2--connected outer-planar graphs $H$ and $H^*$ such that $G\cong B_T(H)$ for some spanning tree $T$ of $H$ and $G\cong B_{T^*}(H^*)$ where $H^*$ is a dual of $H$ and $T^*$ is a co-tree of $H$. Let $\overline{P}^*$ be a weak dual of $H$ and $\overline{P}$ be a weak dual of $H^*$. According to Lemma~\ref{lemma:selfdual}, both $\overline{P}$ and $\overline{P}^*$ are paths. Let $\overline{P}=(\{a_1,\ldots,a_s\},\{a_1a_2,\ldots,a_{s-1}a_s\})$ and $\overline{P}^*=(\{b_1,\ldots,b_t\},\{b_1b_2,\ldots,b_{t-1}b_t\})$ (Figure~\ref{fi:weakd}(ii)). Let $a_0$ be the vertex of $H$ corresponding to the outer face of $H^*$ and let $b_0$ be the vertex of $H^*$ corresponding to the outer face of $H$. 

\noindent There are edges $a_0a_1$ and $a_0a_s$ in $H$ and edges $b_0b_1$ and $b_0b_t$ in $H^*$ by 2--connectedness. The cycle $C^*$ spanned by $E(\overline{P})\cup\{a_0a_1,a_0a_s\}$ is the unique Hamiltonian cycle of $H$ and the cycle $C^*$ spanned by $E(\overline{P}^*)\cup\{b_0b_1,b_0b_t\}$ is the unique Hamiltonian cycle of $H^*$. Hence $E(\overline{P})\cup\{a_0a_1\}$ and $E(\overline{P}^*)\cup\{b_0b_1\}$ span Hamiltonian paths in $H$ and $H^*$ respectively, and such paths are spanning trees of $H^*$ and $H$, respectively. Denote them by $P$ and $P^*$, respectively. Therefore, up to  re-labellings, $B_P(H)$ and $B_{P^*}(H^*)$ are isomorphic graphs and thus isomorphic (as unlabeled graphs) to the same graph $\Gamma$ and $G\sim \Gamma$.\ It remains to show that $\Gamma$ is a chain graph. $\Gamma$ have color classes $A$ and $B$. There is a bijection $\psi: A\cup B\rightarrow E(H)$ such that the image of $\{u\}\cup N_\Gamma(u)$ is a fundamental circuit in $H^*$ with respect to $P^*$ for each $u$ in $A$ and the image of $\{v\}\cup N_\Gamma(v)$ is a fundamental circuit in $H$ with respect to $P$ for each $v$ in $B$. Hence, for each $u\in A$, the image of $N_G(u)$ is a subpath of $P^*$ of the form $(\{b_0,b_1,\ldots,b_{l(u)}\},\{b_0b_1,\ldots,b_{l(u)-1}b_{l(u)}\})$ for some $1\leq l(u)\leq t$ and for each $v\in B$, the image of $N_G(v)$ is a subpath of $P$ of the form $(\{a_0,a_1,\ldots,a_{m(u)}\},\{a_0a_1,\ldots,a_{m(u)-1}b_{m(u)}\})$ for some $1\leq m(u)\leq s$. Let $P_A=\psi(P)$ and $P_B=\psi(P^*)$. We conclude that both families $(N_G(u), u\in A)$ and $(N_G(v), v\in B)$ are linearly ordered by inclusion because they consist of edge--sets of a of nested subpaths of $P_A$ and $P_B$, respectively. Therefore $\Gamma$ is a chain graph in the orbit of $G$.
\mybreak
3.$\Rightarrow$1.
Since $G$ is pivot-equivalent to a chain graph, it follows that there exists a graph $\widetilde{G}$ in the orbit of $G$ which is $2K_2$--free. Thus $\widetilde{G}$ cannot contain induced copies of holes, dominoes, arrows and $T_2$ as the latter graphs all contain induced copies of $2K_2$. By Theorem~\ref{thm:arrowt2closed} each graph in the orbit of $\widetilde{G}$ is $\{Hole,Domino,Arrow,T_2\}$--free.\ In particular $G$ is such. 
\end{proof}

\subsection{Packing paths and multi commodity flows in series--parallel graphs.}\label{sec:flow}
In this section we give an application of Theorem~\ref{thm:main} in Combinatorial Optimization. We show that a notoriously hard problem contains polynomially solvable instances when restricted to series--parallel graph. Let $H=(V,E)$ be a graph and let $F\subseteq E$ be a set of prescribed edges of $H$ called the \emph{nets} of $H$. Following \cite{FKAndras} a path $P$ of $H$ will be called
\emph{$F$-admissible} if it connects two vertices $s,t$ of $V$ with $st\in
F$ and $E(P)\subseteq E-F$. Let $U$ be the set of end-vertices of the nets. In the context of network-flow vertices of $U$ are thought of
as terminals to be connected by a flow of some commodity (the nets are in fact also known as \emph{commodities}). Let $\mathcal{P}_F$ denote
the family of all $F$-admissible paths of $G$ and let
$\mathcal{P}_{F,f}\subseteq \mathcal{P}_F$ be the family of those
$F$-admissible paths connecting the endpoints $s$,$t$ of net $f$. An $F$-\emph{multiflow} (see e.g. \cite{bible}), is a
function $\lambda: \mathcal{P}_F\rightarrow \mathbb{R}_+$, $P\mapsto \lambda_P$. The
multiflow is integer if $\lambda$ is integer valued. The value of
the $F$-\emph{multiflow on the net $f$} is $\phi_f=\sum_{P\in
\mathcal{P}_{F,f}}\lambda_P$. The total value of the multiflow is
the number $\phi=\sum_{f\in F}\phi_f$. Let $w:E-F\rightarrow
\mathbb{Z}_+$ be a function to be though of as a capacity
function. An $F$-\emph{multiflow subject to $w$} in $H$ is an $F$-multiflow  such
that,
\begin{equation}
\sum_{P\in \mathcal{P}_F: E(P)\ni e}\lambda_P\leq w(e), ~~ \forall e\in E-F
\label{eq:flowcapacity}
\end{equation}
When $w(e)=1$ for all $e\in E-F$, an integer multiflow is simply a
collection of edge--disjoint $F$-admissible paths of $H$. The $F$-\emph{Max-
Multiflow Problem} is the problem of finding, for a given capacity
function $w$, an $F$-multiflow subject to $w$ of maximum total value.
An $F$-\emph{multicut} of $H$ is a subset of $B$ edges of $E-F$ that intersects the edge--set of each $F$-admissible path.\ The name $F$-multicut is due to the fact that the removal of the edges of $B$ from $H$ leaves a graph with no $F$-admissible path that is, in the graph $H-B$ it is not possible to connect the terminals of any net. The 
\emph{capacity of the $F$-multicut} $B$ is the number $\sum_{e\in B}w(e)$. 
\mybreak
Multiflow Problems are very difficult problems (see
\cite{franksurvey}, \cite{FKAndras} and Vol. C, Chapter 70 in
\cite{bible}). In \cite{gargvazi} it has been shown that the
Max-Multiflow Problem is NP-hard even for trees and even for
$\{1,2\}$-valued capacity functions. The problem though, is shown
to be polynomial time solvable for constant capacity functions by
a dynamic programming approach. However, even for constant
functions, the linear programming problem of maximizing the value of the multiflow
over the system of linear inequalities (\ref{eq:flowcapacity}) has
not even, in general, half-integral optimal solutions. In \cite{NVZ}, the NP-completeness of the
Edge--Disjoint--Multi commodity Path Problem for series--parallel
graph (and partial 2--trees) has been established while, previously
in \cite{ZTN}, the polynomial time solvability of the same problem
for partial 2--trees was proved under some restriction either on
the number of the commodities (required to be a logarithmic
function of the order of the graph) or on the location of the
nets. Using our results we give a further contribution to the above
mentioned problems.
\begin{theorem} Let $H=(V,E)$ be a 2--connected series--parallel graph and let $F$ be the edge--set of any of its spanning co-trees. Then the maximum total value of an $F$-multiflow equals the minimum capacity of an $F$-multicut. Furthermore, both a maximizing multiflow and a minimizing multicut can be found in strongly
polynomial time.
\end{theorem}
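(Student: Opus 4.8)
The plan is to reduce the statement to the theory of totally unimodular linear programs. First I would unravel what the problem asks when $F$ is a spanning co-tree: writing $T=E\setminus F$, the set $T$ is a spanning tree of $H$, so every $F$-admissible path must live inside $T$, and since a tree has a unique path between any two vertices, each net $f\in F$ has exactly one $F$-admissible path, namely its fundamental path $P_f=C(f,T)-\{f\}$; thus $\mathcal{P}_{F,f}$ is a singleton. Hence an $F$-multiflow is just an assignment $\lambda\in\mathbb{R}_+^{F}$, and the $F$-Max-Multiflow Problem subject to an integer capacity function $w$ becomes the packing LP $\max\{\mathbf{1}^{t}\lambda:\ \mathsf{M}\lambda\le w,\ \lambda\ge 0\}$, where $\mathsf{M}\in\{0,1\}^{T\times F}$ has $\mathsf{M}_{e,f}=1$ iff $e\in P_f$. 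By construction $\mathsf{M}$ is precisely the reduced adjacency matrix of the fundamental graph $B_H(T)$, with the tree edges indexing the rows. Dually, an $F$-multicut is a $0/1$ vector $\mathbf{1}_B$ with $B\subseteq T$ meeting every $P_f$, i.e.\ a feasible point of $\mathsf{M}^{t}y\ge\mathbf{1}$, $y\ge 0$, of capacity $w^{t}y$.

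The key step is to prove that $\mathsf{M}$ is totally unimodular. Since $H$ is $2$-connected series--parallel, Theorem~\ref{thm:main} gives that $B_H(T)$ is a connected BDH graph, and by Theorem~\ref{thm:paolo} it is a self--dual path/arborescence bipartite graph; in particular it is a path/arborescence bipartite graph on the color class $T$. Unwinding the definition, there is an arborescence $\overrightarrow{R}$ whose arc set is identified with $T$ through a bijection $\xi$, such that for every $f\in F$ the set $\{f\}\cup N_{B_H(T)}(f)$ is the arc set of a directed circuit of $(V(R),T\cup F)$. Note $N_{B_H(T)}(f)=P_f$ by definition of the fundamental graph, and deleting $f$ from such a directed circuit leaves a directed path whose arcs are exactly $\xi(P_f)$, all traversed coherently with the orientation of $\overrightarrow{R}$. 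Consequently $\mathsf{M}$ is, up to negating some columns, the network matrix of $\overrightarrow{R}$ with respect to the extra arcs $F$: column $f$ is supported on the tree path between the ends of $f$, and its nonzero entries all carry the same sign because that path is a directed path of the arborescence. Network matrices are totally unimodular and column negations preserve total unimodularity, so $\mathsf{M}$ is totally unimodular.

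With $\mathsf{M}$ totally unimodular and $w$ a nonnegative integer vector, both polyhedra $\{\lambda\ge 0:\mathsf{M}\lambda\le w\}$ and $\{y\ge 0:\mathsf{M}^{t}y\ge\mathbf{1}\}$ are integral, so the primal and dual LPs attain integer optima with a common value $\nu$ by strong duality. An integer primal optimum $\lambda^{*}$ is an integer $F$-multiflow of total value $\nu$. For the dual, I would start from an integer optimum $y^{*}$ and truncate it to $\widehat{y}=\min(y^{*},\mathbf{1})$: optimality forces $y^{*}_{e}\le 1$ whenever $w_e\ge 1$ (otherwise lowering that coordinate would remain feasible and strictly improve the objective), so $w^{t}\widehat{y}=w^{t}y^{*}=\nu$; and $\widehat{y}$ stays feasible since each covering inequality is a sum of nonnegative terms at least one of which is already $\ge 1$. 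Thus $\widehat{y}$ is the incidence vector of an $F$-multicut of capacity $\nu$, proving the min--max equality. Finally, an LP whose constraint matrix is a network matrix is a (min-cost) flow problem, hence solvable in strongly polynomial time — e.g.\ by Tardos' algorithm, or directly by a leaves-to-root sweep of $\overrightarrow{R}$ — and a minimizing multicut is recovered from the dual optimum by the truncation just described.

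The step I expect to be the crux is the total unimodularity of $\mathsf{M}$: path-versus-edge incidence matrices of trees fail to be totally unimodular in general (already for the claw $K_{1,3}$), and what rescues us is exactly the series--parallel hypothesis, through the arborescence/DFS-tree structure of $B_H(T)$ provided by Theorem~\ref{thm:paolo}, which makes every fundamental path a coherently oriented path and thereby turns $\mathsf{M}$ into a network matrix. The remaining ingredients — the singleton structure of $\mathcal{P}_{F,f}$, the passage from LP integrality to $0/1$ multicuts, and strong polynomiality — are routine once this is in place.
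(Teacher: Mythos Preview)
Your proof is correct and takes a genuinely different route from the paper's. Both arguments begin the same way: with $F$ a co-tree, each net has a unique admissible path (its fundamental path in $T=E\setminus F$), so the packing and covering LPs are governed by the reduced adjacency matrix $\mathsf{M}$ of $B_H(T)$. The divergence is in the structural property of $\mathsf{M}$ used to obtain integrality. The paper invokes only Theorem~\ref{thm:main}: since $B_H(T)$ is BDH it is chordal bipartite, so $\mathsf{M}$ is \emph{totally balanced}, and the results of Hoffman--Kolen--Sakarovitch and of Farber on totally balanced matrices then yield integral optima for both LPs together with a strongly polynomial algorithm. You instead go one step further and use Theorem~\ref{thm:paolo} to obtain the self--dual path/\textbf{arborescence} structure of $B_H(T)$, from which you conclude that $\mathsf{M}$ is a network matrix and hence \emph{totally unimodular}, finishing via standard TU integrality and network-flow (or Tardos') algorithms. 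Your route costs one extra ingredient from the paper but lands on the more familiar TU/flow toolbox; the paper's route is shorter in that it needs only the main theorem plus the off-the-shelf totally balanced machinery, without passing through the arborescence realisation.
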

\begin{proof} 
Let $\mv{A}$ be a $\{0,1\}^{m\times n}-$valued matrix and $\mv{b}\in \mathbb{Z}_+^m$ be a vector. Let $\mv{1}_d$ be the all ones vector in $\mathbb{R}^d$. Consider the linear programming problem 
\begin{equation}\label{eq:lp1}
\max_{\mv{x}\in \mathbb{R}_+^n} \left\{ \mv{1}_n^T\mv{x} \mid \mv{A}\mv{x}\leq \mv{b} \right\}\end{equation}
and its dual 
\begin{equation}\label{eq:lp2}
\min_{\mv{y}\in \mathbb{R}_+^m}\left\{ \mv{b}^T\mv{y} \mid \mv{A}^T\mv{y} \geq \mv{1}_n\right\}.
\end{equation}
By the results of Hoffman, Kolen and Sakarovitch~\cite{hokosa} and Farber~\cite{farber}, if $\mv{A}$ is a totally balanced matrix (i.e., $\mv{A}$ is the reduced adjacency matrix of a bipartite chordal graph), then 
both the linear programming problem above have integral optimal solutions and, by linear programming duality, the two problems have the same optimum value. Furthermore, an integral optimal solution $\mv{x}^*$ to the maximization problem in \eqref{eq:lp1} satisfying the additional constraint 
\begin{equation}\label{eq:simpleflow}
\mv{x}^*\leq \mv{1}_n
\end{equation}
and an integral optimal solution $\mv{y}^*$ to the minimization problem in \eqref{eq:lp2} satisfying the additional constraint 
\begin{equation}\label{eq:simplecut}
\mv{y}^*\leq \mv{1}_n
\end{equation} can be found in strongly polynomial time. 
\mybreak
Let now $H$ be a 2--connected graph and let $F$ be the edge--set of a co--tree $\overline{T}$ of some spanning $T$ tree of $H$. By giving a total order on the edge--set of $T$, one can define a vector $\mv{b}$ whose entries are the values of the capacity function $w:E(H)-F\rightarrow \mathbb{Z}_+$. If $\mv{A}$ is the the incidence matrix of  $\mathcal{P}_F$, namely the matrix whose columns are the incidence vectors of the $F$-admissible paths of $H$, then $\mv{A}$ is a partial representation of $M(H)$. Moreover, if $H$ is series--parallel, then $\mv{A}$ is totally balanced: by Theorem~\ref{thm:main}, $\mv{A}$ is the reduced adjacency matrix of a BDH graph which is chordal being hole-free (by Theorem~\ref{thm:bdhchar}). On the other hand, integral solutions to the
problem in \eqref{eq:lp1} satisfying constraint~\eqref{eq:simpleflow} and to the problem in \eqref{eq:lp2} satisfying constraint~\eqref{eq:simplecut} are incidence vectors of $F$-multiflows and $F$-multicuts, respectively. Hence, both an $F$-multiflow of maximum value and an $F$-multicut of minimum capacity can be found in strongly polynomial-time by solving the linear programming problems above. Moreover, linear programming duality implies that the maximum value of an $F$-multiflow and the minimum capacity of an $F$-multicut coincide.
\end{proof}

\section{One more proof of Theorem~\ref{thm:main}}
\mybreak
In this section, we give another proof of Theorem~\ref{thm:main}. To this end we need a result in~\cite{abs}.
Let $C$ be an Eulerian cycle in a 4-regular labeled graph $H$ and let $\mathbf{w}$ be the induced double occurrence word.\ Recall that two
vertices $u$ and $v$, 
labeled  
$a$ and $b$, respectively, say, are interlaced in $\mathbf{w}$ if
$\mathbf{w}=\mathbf{u} 
a\mathbf{x} 
b\mathbf{y} a \mathbf{z}$ for some (possibly empty) intervals 
$\mathbf{u}$, $\mathbf{x}$, 
$\mathbf{y}$ and $\mathbf{z}$ of $\mathbf{w}$.\ The $uv$-transposition of $\mathbf{w}$ is the word $\mathbf{w}^{uv}=\mathbf{u} 
a\mathbf{y} b\mathbf{x} a \mathbf{z}$ \cite{abs}. Thus an $uv$-transposition of $\mathbf{w}$
amounts to replace one of the the subpaths of $C$ connecting $u$ and $v$ with
he other one. The relation between $uv$-transposition and $uv$ pivoting is given in the next lemma which specializes a more general result in \cite{abs} (see also \cite{fom}).
\begin{lemma}\label{th:abu} Let $H$ be a 4-regular graph and let $\mathbf{w}$ be any of the double occurrence words it induces. Further, let $G(H,\mathbf{w})$ denote the interlacement graph of $\mathbf{w}$. Suppose that $G(H,\mathbf{w})$ is a bipartite graph. Then, for any edge $uv$ of $G(H,\mathbf{w})$ of $H$, one has  $G(H,\mathbf{w})^{uv}=G(H,\mathbf{w}^{uv})$.
\end{lemma}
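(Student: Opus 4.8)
The plan is to deduce Lemma~\ref{th:abu} from the general relationship, proved by Arratia, Bollob\'as and Sorkin \cite{abs} (see also \cite{fom}), between the transposition of a double occurrence word and the pivot operation on its interlacement graph, and then to observe that the bipartiteness hypothesis collapses the general pivot to the bipartite pivot $G\mapsto G^{uv}$ of Section~\ref{sec:piv}. Recall that in \cite{abs} the pivot is defined for an arbitrary labeled graph $\Gamma$ and an edge $uv$: one partitions $N_\Gamma(u)\cup N_\Gamma(v)$ into $V_1=N_\Gamma(u)\setminus(N_\Gamma(v)\cup\{v\})$, $V_2=N_\Gamma(v)\setminus(N_\Gamma(u)\cup\{u\})$ and $V_3=N_\Gamma(u)\cap N_\Gamma(v)$, complements every edge joining two distinct classes among $V_1,V_2,V_3$, and swaps the labels of $u$ and $v$; write $\rho_{uv}(\Gamma)$ for the result. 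The fact to be quoted is that, whenever the two letters $u,v$ are interlaced in $\mathbf{w}$, the interlacement graph transforms accordingly: $G(H,\mathbf{w}^{uv})=\rho_{uv}\!\left(G(H,\mathbf{w})\right)$, with no bipartiteness needed.

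Now impose the hypothesis. Put $G=G(H,\mathbf{w})$ and let $A,B$ be its colour classes; since $uv\in E(G)$ the ends lie in different classes, say $u\in A$ and $v\in B$. A vertex adjacent to both $u$ and $v$ would have to belong to $B$ (being adjacent to $u$) and to $A$ (being adjacent to $v$), which is impossible; hence $V_3=N_G(u)\cap N_G(v)=\emptyset$. Consequently $\rho_{uv}$ reduces to: complement exactly the edges between $V_1=N_G(u)\setminus\{v\}\subseteq B$ and $V_2=N_G(v)\setminus\{u\}\subseteq A$, then swap the labels of $u$ and $v$ --- and this is precisely the operation $G^{uv}$ defined in Section~\ref{sec:piv}. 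Since every complemented edge runs between $B$ and $A$, the graph $G^{uv}$ is again bipartite, with colour classes obtained from $A,B$ by exchanging $u$ and $v$; in particular the identity $G(H,\mathbf{w}^{uv})=G(H,\mathbf{w})^{uv}$ is consistent with the standing assumption that the interlacement graph under consideration be bipartite, and the lemma follows.

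The only real work is notational bookkeeping, which I expect to be the main obstacle: one must line up the label-swapping convention of \cite{abs} with the one fixed in Section~\ref{sec:piv}, and verify that the transposition used there, $\mathbf{w}=\mathbf{u}\,a\,\mathbf{x}\,b\,\mathbf{y}\,a\,\mathbf{z}\mapsto\mathbf{w}^{uv}=\mathbf{u}\,a\,\mathbf{y}\,b\,\mathbf{x}\,a\,\mathbf{z}$, is the same as the one invoked. If a self-contained argument is preferred, the identity can also be checked directly on chord diagrams: the transposition $\mathbf{w}\mapsto\mathbf{w}^{uv}$ amounts to interchanging the two arcs $\mathbf{x}$ and $\mathbf{y}$ cut off by the crossing chords $u$ and $v$, and a short case analysis on the position of the two occurrences of a letter $c\notin\{u,v\}$ relative to these four arcs shows that the interlacement of $c$ with any other letter is unchanged unless $c$ is interlaced with $u$ or with $v$ in $\mathbf{w}$, while for such $c$ the change is exactly the complementation prescribed by $\rho_{uv}$ above. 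I would nonetheless present the derivation from \cite{abs} as the main proof.
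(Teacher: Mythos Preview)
Your proposal is correct and matches the paper's own treatment: the paper does not prove the lemma independently but states it as a specialization of the general result of \cite{abs} (see also \cite{fom}), which is exactly what you do by quoting the general pivot $\rho_{uv}$ and observing that bipartiteness forces $N_G(u)\cap N_G(v)=\emptyset$, collapsing $\rho_{uv}$ to the bipartite pivot of Section~\ref{sec:piv}.
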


\mybreak 
\textbf{Second proof of Theorem~\ref{thm:main}.} If $G$ is the fundamental graph of a series--parallel
graph, then $M^G$ is a binary matroid with no $M(K_4)$
minor by Dirac and Duffin's characterization. Dominoes are fundamental graphs of $K_4$ and holes can
be pivoted to either dominoes or $C_6$ (recall Lemma~\ref{lem:1})---notice that $C_6$
is a fundamental graph of $K_4$ as well (Figure~\ref{fi:fundgraph})--it follows that $G$ is
BDH-free by Lemma~\ref{lem:matorbit}. Conversely, if $G$ is BDH, then by Theorem~\ref{thm:mosabis} (in the language of Lemma~\ref{th:abu}), $G\cong G(m(H),\mathbf{w})$ for some series--parallel graph $H$ (observe that $m(H)$ is a 4-regular graph) and some code $\mathbf{w}$. By
Lemma~\ref{th:abu}, pivoting on edges $G$
affects neither $H$ nor $m(H)$. Consequently, every graph
in the orbit of $[G]$ is
a BDH. Therefore $M^G$
has no $M(K_4)$ minor by Lemma~\ref{lem:matorbit} and Lemma~\ref{lem:matroidorbit} and $G$ is a fundamental graph of such a
matroid and therefore the fundamental graph of a series--parallel graph.\hspace*{\fill}$\Box$

\end{document}